\pdfoutput=1
\documentclass[acmsmall,screen]{acmart}
\settopmatter{printfolios=true,printacmref=false}





\setcopyright{none}

\bibliographystyle{ACM-Reference-Format}

\citestyle{acmauthoryear}   

\usepackage[utf8]{inputenc} 
\usepackage{amsmath, amsthm, amssymb}
\usepackage{thmtools}
\usepackage[english]{babel}
\usepackage{pdfpages}
\usepackage{paralist}
\usepackage{graphicx}
\usepackage{subcaption}
\usepackage[labelfont=bf]{caption}
\usepackage{longtable}
\usepackage{booktabs}
\usepackage{tabu}
\usepackage[referable]{threeparttablex}
\usepackage{varwidth}
\usepackage{multirow}
\usepackage{wrapfig}

\usepackage{pifont}

\usepackage[ruled, linesnumbered]{algorithm2e}
\usepackage{parskip}

\usepackage{hyperref}
\usepackage{thm-restate}
\usepackage[capitalise]{cleveref}

\crefname{figure}{Figure}{Figure}
\crefformat{footnote}{#2\footnotemark[#1]#3}
\usepackage{tikz}
\usepackage{comment}

\usepackage{todonotes}
\usepackage{tikz}
\usetikzlibrary{arrows,automata,shapes,decorations,decorations.markings,calc, matrix,decorations.pathmorphing, patterns,backgrounds}

\usepackage{bm}
\usepackage{pgfplots}
\usepackage{enumerate,paralist}
\usepackage{pbox}

\usepackage{appendix}
\usepackage{calc}
\usepackage{fp}
\usepackage{multirow}
\usepackage{pbox}

\usepackage{etoolbox}

\theoremstyle{plain}
\newtheorem{remark}{Remark}


%

\DeclareMathAlphabet{\mathpzc}{OT1}{pzc}{m}{it}

\usepackage{graphicx} 

\mathchardef\mhyphen="2D 

\newcommand{\Ints}{\mathbb{Z}}
\newcommand{\ov}{\overline}

\newcommand{\Width}{\mathsf{width}}
\newcommand{\Confl}[2]{#1 \Join #2}
\newcommand{\Obs}{\mathsf{O}}

\newcommand{\Trace}{t}

\newcommand{\SysAcquires}{\mathcal{L}^A}
\newcommand{\SysReleases}{\mathcal{L}^R}
\newcommand{\SysReads}{\mathcal{R}}

\newcommand{\SysWrites}{\mathcal{W}}
\newcommand{\Vars}{\mathcal{V}}
\newcommand{\Read}{r}
\newcommand{\Write}{w}
\newcommand{\AtomicWrite}{w}
\newcommand{\AtomicRead}{r}
\newcommand{\RespectPO}[2]{R_{#1}(#2)}
\newcommand{\Event}{e}

\newcommand{\SHB}{\mathsf{SHB}}

\newcommand{\DC}{\mathsf{DC}}
\newcommand{\HB}{\mathsf{HB}}
\newcommand{\CP}{\mathsf{CP}}
\newcommand{\WCP}{\mathsf{WCP}}

\newcommand{\RPast}{\mathsf{RCone}}

\newcommand{\RaceDecision}{\mathsf{RaceDecision}}
\newcommand{\RaceFunction}{\mathsf{M2}}

\newcommand{\ForkT}{\mathsf{fork}}
\newcommand{\JoinT}{\mathsf{join}}
\newcommand{\Events}[1]{\SysEvents(#1)}
\newcommand{\Reads}[1]{\SysReads(#1)}
\newcommand{\Writes}[1]{\SysWrites(#1)}

\newcommand{\Acquires}[1]{\SysAcquires(#1)}
\newcommand{\Releases}[1]{\SysReleases(#1)}
\newcommand{\Project}{|}

\newcommand{\SeqTrace}{\tau}
\newcommand{\MayRaces}{\mathcal{C}}

\newcommand{\FalseNegativesBound}{\mathsf{FN}}

\newcommand{\True}{\mathsf{True}}
\newcommand{\False}{\mathsf{False}}

\newcommand{\System}{\mathcal{P}}
\newcommand{\Process}{p}
\newcommand{\Proc}[1]{\mathsf{p}(#1)}
\newcommand{\MinMaxAlgo}{\mathsf{MaxMin}}

\newcommand{\Closure}{\mathsf{Closure}}
\newcommand{\From}{\mathsf{After}}
\newcommand{\To}{\mathsf{Before}}

\newcommand{\Push}{\mathsf{push}}

\newcommand{\Successor}{\mathsf{successor}}
\newcommand{\Predecessor}{\mathsf{predecessor}}

\newcommand{\ObsClosure}{\mathsf{ObsClosure}}
\newcommand{\LockClosure}{\mathsf{LockClosure}}

\newcommand{\Pop}{\mathsf{pop}}

\newcommand{\Update}{\mathsf{update}}
\newcommand{\Min}{\min}
\newcommand{\ArgMin}{\arg\mathsf{leq}}
\newcommand{\ArgQuery}{\arg\min}
\newcommand{\Query}{\mathsf{query}}
\newcommand{\Insert}{\mathsf{insert}}

\newcommand*{\Match}[2][]{
\ifx\\#1\\%
  \mathsf{match}(#2)
\else
  \mathsf{match}_{#1}(#2)
\fi
}

\newcommand{\Implies}{\Rightarrow}
\newcommand{\Unordered}[3]{#1\parallel_{#2} #3}
\newcommand{\Ordered}[3]{#1 \not \parallel_{#2} #3}

\newcommand{\OpenAcquires}{\mathsf{OpenAcqs}}
\newcommand{\Refines}{\sqsubseteq}

\newcommand{\Globals}{\mathcal{G}}
\newcommand{\Locks}{\mathcal{L}}

\newcommand{\CNode}[1]{\langle #1\rangle}

\newcommand{\Outgoing}{\mathsf{Out}}
\newcommand{\DS}{\mathsf{DS}}
\newcommand{\Fenwick}{\mathsf{FenwickTree}}

\newcommand{\TO}{\mathsf{PO}}

\newcommand{\Acquire}{\mathsf{acq}}
\newcommand{\Release}{\mathsf{rel}}

\newcommand{\Location}[1]{\mathsf{loc}(#1)}

\newcommand{\SysEvents}{\mathcal{E}}

\newcommand{\Path}{\rightsquigarrow}
\newcommand{\Queue}{\mathcal{Q}}

\newcommand{\Observation}{\mathcal{O}}

\newcommand{\LastFlow}{\mathcal{F}}

\newcommand{\InsertAndClose}{\mathsf{InsertAndClose}}
\newcommand{\Races}{\mathcal{Z}}
\newcommand{\Time}{\mathsf{Time}}
\newcommand{\PR}{\mathsf{Races}}

\usetikzlibrary{arrows.meta,calc,decorations.markings,math,arrows.meta}

\preto\tabular{\setcounter{magicrownumbers}{0}}
\newcounter{magicrownumbers}

\setlength{\tabcolsep}{2.4pt}

\pgfdeclarelayer{bg}    
\pgfsetlayers{bg,main}  

\def \darkred {black!20!red}
\def \darkgreen {black!20!green}
\SetKw{Continue}{continue}

\SetCommentSty{mycommfont}

\makeatletter
\newcommand*{\centerfloat}{%
  \parindent \z@
  \leftskip \z@ \@plus 1fil \@minus \textwidth
  \rightskip\leftskip
  \parfillskip \z@skip}
\makeatother


\usepackage{booktabs}   
\usepackage{subcaption} 

\sloppy
\begin{document}

\newskip\smallskipamount \smallskipamount=1pt plus 0.5pt minus 0.5pt
\title{Fast, Sound and Effectively Complete Dynamic Race Prediction}

\author{Andreas Pavlogiannis}
\affiliation{
\institution{Aarhus University}            
\streetaddress{IT-parken, Aabogade 34}
\city{Aarhus N}
\postcode{DK-8200}
\country{Denmark}                    
}
\email{pavlogiannis@cs.au.dk}          

 \begin{abstract}
Writing concurrent programs is highly error-prone due to the nondeterminism in interprocess communication.
The most reliable indicators of errors in concurrency are \emph{data races}, which are accesses to a shared resource that can be executed concurrently.
We study the problem of predicting data races in lock-based concurrent programs.
The input consists of a concurrent trace $\Trace$, and the task is to determine all pairs of events of $\Trace$ that constitute a data race.
The problem lies at the heart of concurrent verification and has been extensively studied for over three decades.
However, existing polynomial-time sound techniques are highly incomplete and can miss simple races.

In this work we develop $\RaceFunction$: a new polynomial-time algorithm for this problem, which has no false positives.
In addition, our algorithm is \emph{complete} for input traces that consist of two processes,
i.e., it provably detects \emph{all} races in the trace.
We also develop sufficient criteria for detecting completeness \emph{dynamically} in cases of more than two processes.
We make an experimental evaluation of our algorithm on a challenging set of benchmarks taken from recent literature on the topic.
Our algorithm soundly reports \emph{hundreds} of real races, many of which are missed by existing methods.
In addition, using our dynamic completeness criteria, $\RaceFunction$ concludes that it has detected \emph{all} races in the benchmark set, hence the reports are both sound and complete.
Finally, its running times are comparable, and often smaller than the theoretically fastest, yet highly incomplete, existing methods.
To our knowledge, $\RaceFunction$ is the first sound algorithm that achieves such a level of performance on both running time and completeness of the reported races.
\end{abstract}

\begin{CCSXML}
<ccs2012>
<concept>
<concept_id>10011007.10011074.10011099</concept_id>
<concept_desc>Software and its engineering~Software verification and validation</concept_desc>
<concept_significance>500</concept_significance>
</concept>
<concept>
<concept_id>10003752.10010070</concept_id>
<concept_desc>Theory of computation~Theory and algorithms for application domains</concept_desc>
<concept_significance>300</concept_significance>
</concept>
<concept>
<concept_id>10003752.10010124.10010138.10010143</concept_id>
<concept_desc>Theory of computation~Program analysis</concept_desc>
<concept_significance>300</concept_significance>
</concept>
</ccs2012>
\end{CCSXML}

\ccsdesc[500]{Software and its engineering~Software verification and validation}
\ccsdesc[300]{Theory of computation~Theory and algorithms for application domains}
\ccsdesc[300]{Theory of computation~Program analysis}

\keywords{concurrency, race detection, predictive analyses}  

\maketitle

\section{Introduction}\label{SEC:INTRO}

\smallskip\noindent{\bf Verification of concurrent programs.}
Writing concurrent software is notoriously hard due to the inherent nondeterminism in the way that accesses to shared resources are scheduled.
Accounting for all possible nondeterministic choices is hard, even to experienced developers.
This makes the development of concurrent software prone to concurrency bugs~\cite{Lu08,Shi10}, i.e., bugs that are present only in a few among the (possibly exponentially) many executions of the program.
Since developers have no control over the scheduler, concurrency bugs are also hard to reproduce by testing
(often categorized as Heisenbugs~\cite{Gray85,Musuvathi08}).
Consequently, testing alone is considered an ineffective approach for detecting bugs in concurrent programs.
To circumvent this difficulty, testing techniques are often combined with model checking.
First, a testing phase produces a set of concrete program executions.
Then, a verification phase makes a formal treatment of these executions and identifies whether there exist other ``neighboring'' executions that are not present in the test set but
(i)~constitute valid executions of the program and
(ii)~manifest a bug.
Hence, even though the scheduler might ``hide'' a bug in the test set, this bug can be effectively caught by formal techniques applied on the test set.

\smallskip\noindent{\bf Data races.}
Two events $(\Event_1, \Event_2)$ of a concurrent program are called \emph{conflicting} if 
they access the same shared resource (e.g., the same global variable $x$) and
at least one of them modifies the resource (e.g., writes to $x$).
A \emph{data race} is typically defined as a conflicting pair $(\Event_1, \Event_2)$ that can be executed concurrently~\cite{Helmbold91,OCallahan03,Flanagan09,Bond10}.
Data races are the prime suspects of erroneous behavior, and there have been significant efforts spanning across several decades towards detecting data races efficiently, starting with seminal papers found in~\cite{Schonberg89,Helmbold91,Dinning91,Savage97}.

\smallskip\noindent{\bf Dynamic race detection.}
Dynamic algorithms for race detection operate on a single execution (i.e., a \emph{trace}) of the concurrent program, 
and their task is to identify pairs of events of the trace that constitute a race, even though the race might not be manifested in the input trace.
Dynamic race detection is a popular technique that combines testing with formal reasoning.
Existing dynamic algorithms typically fall into one of the following three categories.

\noindent{\em Lockset-based techniques}~\cite{Dinning91,Savage97,Elmas07} report races by comparing the sets of locks which guard conflicting data accesses.
This approach typically reports spurious races, as data accesses protected by different locks can nevertheless be separated by other control-flow and data dependencies, and thus not constitute a race.

\noindent{\em Exhaustive predictive-runtime techniques}~\cite{Savage97,Koushik05,Chen07,Said11,Huang14} report races by exploring all possible valid reorderings of the input trace.
These techniques typically rely on SAT/SMT solvers and are sound and complete in theory; however,
as there are exponentially many valid reorderings, they have \emph{exponential complexity}.
In practice, completeness is traded for runtime, by using windowing techniques which slice the input trace into small fragments and analyze each fragment separately.

\noindent{\em Partial-order-based techniques}~are probably the most well-known and widely-used.
The underlying principle is to construct a partial order $P$ on the events of the input trace.
Afterwards, a race is reported between a pair of events if the two events are unordered by $P$.
These techniques are usually efficient, as constructing the partial order typically requires polynomial time.
However, in order for $P$ to admit a linearization to a valid witness trace that exposes the race,
$P$ enforces many arbitrary orderings between events.
These arbitrary orderings often result in an ordering between the events of an actual race,
and thus $P$ misses the race.

Most of the above techniques are based on Lamport's \emph{happens-before} ($\HB$) partial order~\cite{Lamport78}
which is implemented in various tools~\cite{Schonberg89,Christiaens01,Pozniansky03,Yu05,Flanagan09,Bond10}.
As $\HB$ is highly incomplete, there have been several efforts for constructing weaker partial orders that are efficiently computable,
such as the \emph{causally-precedes} partial order $\CP$~\cite{Smaragdakis12}.
Partial-order techniques recently led to important advances in predictive race detection, based on the \emph{weakly-causally-precedes} $\WCP$~\cite{Kini17}, \emph{schedulably-happens-before} $\SHB$~\cite{Mathur18} and \emph{doesn't-commute} $\DC$~\cite{Roemer18} partial orders.
We next discuss these approaches in more detail and outline the motivation behind our work.

\subsection{Motivating Examples}\label{subsec:motivating_examples}

We illustrate the motivation behind our work with a few simple examples (\cref{fig:motivating}) which highlight some completeness issues that the existing approaches based on $\HB$, $\WCP$ and $\DC$ partial orders suffer from.
We focus on single races here, in which case $\SHB$ is subsumed by $\HB$.
We remark that we focus on polynomial-time, sound methods here, and hence we do not consider unsound techniques (e.g., lockset-based~\cite{Savage97})
or techniques that rely on SAT/SMT solvers and are thus not polynomial time (e.g.,~\cite{Huang14}).
In each example, we use the notation $\SeqTrace_i$ to refer to the local trace of the $i$-th process,
and $\Event_j$ to refer to the $j$-th event in the concurrent trace.
We note that the underlying memory model is sequentially consistent, i.e.,
in every trace, a read event observes the value of the last write event that writes to the location read by the read event.

To develop some context, we briefly outline how each of these techniques works by ordering events of the input trace.
We refer to \cref{sec:hb_wcp_dc} for the formal definitions.
In all cases, events that belong to the same process are always totally ordered according to their order in the input trace.
\begin{compactenum}
\item The $\HB$ and $\WCP$ techniques operate in a similar manner. They perform a single pass of $\Trace$ and construct a partial order $\leq_{\HB}$ (resp., $\leq_{\WCP}$).
A race $(\Event, \Event')$ is reported if $\Event, \Event'$ are conflicting and $\Event\not \leq_{\HB}\Event'$ (resp., $\Event\not \leq_{\WCP}\Event'$),
i.e., the two events are unordered by the respective partial order.
\item $\DC$ operates in three phases, which all have to succeed for $(\Event, \Event')$ to be reported as a race.
\begin{compactenum}
\item In Phase~1, a $\DC$ partial order is constructed, similarly to $\HB$ and $\WCP$.
If $\Event\leq_{\DC} \Event'$ then $(\Event, \Event')$ is reported as a non-race.
\item In Phase~2, a constraint graph $G$ is constructed which contains the $\DC$ orderings.
Then, more ordering constraints are inserted in $G$.
If $G$ becomes cyclic during this process, $(\Event, \Event')$ is reported as a non-race.
If $\Trace^*$ fails to respect lock semantics, $(\Event, \Event')$ is reported as a non-race.
\end{compactenum}
\end{compactenum}

\begin{figure}[!h]
\begin{subfigure}[b]{0.4\textwidth}
\centering
\footnotesize
\def\rownumber{}
\begin{tabular}[b]{@{\makebox[1.2em][r]{\rownumber\space}} | l | l |}
\normalsize{$\mathbf{\SeqTrace_1}$} & \normalsize{$\mathbf{\SeqTrace_2}$}
\gdef\rownumber{\stepcounter{magicrownumbers}\arabic{magicrownumbers}} \\
\hline
$\Acquire(\ell)$ & \\
$\mathbf{\Write(x)}$ & \\
$\Release(\ell)$ & \\
& $\Acquire(\ell)$  \\
& $\Write(x)$  \\
& $\Release(\ell)$  \\
& $\mathbf{\Read(x)}$\\
\hline
\end{tabular}
\caption{Is $(\Event_2, \Event_7)$ a race?}
\label{subfig:motivating_incomplete1}
\end{subfigure}
\begin{subfigure}[b]{0.4\textwidth}
\centering
\footnotesize
\def\rownumber{}
\begin{tabular}[b]{@{\makebox[1.2em][r]{\rownumber\space}} | l | l | l |}
\normalsize{$\mathbf{\SeqTrace_1}$} & \normalsize{$\mathbf{\SeqTrace_2}$} & \normalsize{$\mathbf{\SeqTrace_3}$}
\gdef\rownumber{\stepcounter{magicrownumbers}\arabic{magicrownumbers}} \\
\hline
$\Acquire(\ell_1)$ & & \\
$\mathbf{\Write(x)}$ &  & \\
$\Write(y)$ & & \\
$\Release(\ell_1)$ & & \\
& $\Acquire(\ell_1)$ & \\
& $\Acquire(\ell_2)$ & \\
& $\Write(z)$ & \\
& $\Release(\ell_2)$ & \\
& $\Write(y)$ & \\
& $\Release(\ell_1)$ & \\
& & $\Acquire(\ell_2)$ \\
& & $\Read(z)$ \\
& & $\Release(\ell_2)$\\
& & $\mathbf{\Write(x)}$\\
\hline
\end{tabular}
\caption{Is $(\Event_2, \Event_{14})$ a race?}
\label{subfig:motivating_incomplete2}
\end{subfigure}
\caption{
Examples in which $\HB$, $\WCP$ and $\DC$ are incomplete.
(\protect\subref{subfig:motivating_incomplete1})~A race $(\Event_2, \Event_7)$ missed by $\HB$, $\WCP$ and $\DC$.
(\protect\subref{subfig:motivating_incomplete2})~A race $(\Event_2, \Event_{14})$ missed by $\HB$, $\WCP$ and $\DC$ (in Phase 2).
}
\label{fig:motivating}
\end{figure}

\smallskip\noindent{\bf Incompleteness.}
Each of $\HB$, $\WCP$ and $\DC$ methods are incomplete
i.e., the input trace $\Trace$ can have arbitrarily many predictable races, however each of these methods falsely reports that there is no race in $\Trace$.
We present a couple of examples where $\HB$, $\WCP$ and $\DC$ fail to detect simple races.

\noindent{\em \cref{subfig:motivating_incomplete1}.}
There is a predictable race $(\Event_2, \Event_7)$.
$\HB$ defines $\Event_3\leq_{\HB} \Event_4$,
and thus $\Event_2\leq_{\HB} \Event_7$, hence missing the race.
Similarly, $\WCP$ (resp., $\DC$ ) defines $\Event_3\leq_{\WCP} \Event_5$ (resp., $\Event_3\leq_{\DC} \Event_5$) and thus $\Event_2\leq_{\WCP} \Event_7$ (resp., $\Event_2\leq_{\DC} \Event_7$), hence missing the race.
Intuitively, $\WCP$ and $\DC$ fail to swap the two critical sections because they contain the conflicting events $\Write(x)$.
Note that here $\DC$ fails in Phase~1.
However, $(\Event_2, \Event_7)$ is a true race that is detected by the techniques developed in this work, exposed by the witness trace
$\Trace^*= \Event_4, \Event_5, \Event_6, \Event_1, \Event_2, \Event_7$.

\noindent{\em \cref{subfig:motivating_incomplete2}.}
There is a predictable race $(\Event_2, \Event_{14})$.
$\HB$ defines  $\Event_4 \leq_{\HB} \Event_{5}$ and $\Event_8 \leq_{\HB} \Event_{11}$,
and thus $\Event_2\leq_{\HB} \Event_{14}$, hence missing the race.
Similarly, $\WCP$ defines $\Event_4\leq_{\WCP} \Event_{5}$ and $\Event_8 \leq_{\WCP}\Event_{12}$ and thus $\Event_2\leq_{\WCP} \Event_{14}$, hence missing the race.
Intuitively, $\WCP$ fails to swap the critical sections of $\SeqTrace_1$ and $\SeqTrace_2$ on $\ell_1$ because $\WCP$ is closed under composition with $\HB$, and in turn $\HB$ totally orders critical sections as in the input trace.
On the other hand, $\DC$ does not compose with $\HB$, and the only enforced orderings are $\Event_4\leq_{\DC}\Event_9$ and $\Event_8\leq_{\DC}\Event_{12}$.
Hence $\DC$ proceeds with Phase~2, where it constructs a constraint graph $G$. 
Since $\Event_{4}\leq_{\DC}\Event_9$ and $\Event_9$ belongs in a critical section on lock $\ell_1$ which is released by $\Event_4$, 
in order to not violate lock semantics, $G$ forces the ordering $\Event_4\Path\Event_5$.
In addition, $G$ forces the ordering $\Event_5\Path \Event_2$, since $\Event_2$ is the racy event and must appear last in the witness trace.
Note that this creates a cycle and hence $\DC$ fails in Phase~2.
However, $(\Event_2, \Event_{14})$ is a true race that is detected by the techniques developed in this work, exposed by the witness  trace
$\Trace^*=\Event_5,\Event_6, \Event_7, \Event_8, \Event_9, \Event_{10}, \Event_{11}, \Event_{12}, \Event_{13}, \Event_{1}, \Event_{2}, \Event_{14}$.


\smallskip\noindent{\bf Algorithmic challenge.}
We have seen that state-of-the-art approaches fail to catch simple races.
Intuitively, the algorithmic challenge that underlies race detection is that of constructing a partial order $P$ with the following
properties.
\begin{compactenum}
\item $P$ is as weak as possible, so that a race $(\Event_i, \Event_j)$ remains unordered in $P$.
\item $P$ is efficiently linearizable to a valid trace that exposes the race.
\end{compactenum}
These two features are opposing each other, as the weaker the partial order,
the more linearizations it admits, and finding a valid one becomes harder.
Intuitively, existing techniques solve the efficiency problem by ordering conflicting accesses in $P$ in the same way as in $\Trace$.
As we have seen, this results in strong partial orders that miss simple races.

\smallskip\noindent{\bf Our approach.}
In this work we develop a new predictive technique for race detection.
At its core, our algorithm constructs partial orders that are much weaker than existing approaches (hence detecting more races), while these partial orders are efficiently (polynomial-time) linearizable to valid traces (hence the reported races are exposed efficiently).
To give a complete illustration of our insights, we use the more involved example in \cref{fig:motivating2}.

The task is to decide whether $(\Event_{10}, \Event_{19})$ is a predictable race of the input trace $\Trace$ (\cref{subfig:motivating_complete}).
To keep the presentation simple, we ignore the other data races that occur, which can be trivially avoided by inserting additional lock events. 
Note that $\HB$, $\WCP$ and $\DC$ report no race in $\Trace$, as they all order $\Event_{11}\leq  \Event_{14}$.
In order to detect this race, we need to make some non-trivial reasoning about reordering certain events in $\Trace$.
Our reasoning can be summarized in the following steps.
\begin{compactenum}
\item If $(\Event_{10}, \Event_{19})$ is a race of $\Trace$, a witness trace $\Trace^*$ can be constructed in which both $\Event_{10}$ and $\Event_{19}$ are the last events.
Observe that $\Trace^*$ will not contain the $\Release(\ell)$ event $\Event_{11}$.
\item Since we ignore event $\Event_{11}$, that critical section of $\SeqTrace_1$ remains open in $\Trace^*$.
Hence the $\Release(\ell)$ event $\Event_{15}$ must be ordered before the $\Acquire(\ell)$ event $\Event_8$.
In addition, the $\AtomicWrite(x_2)$ event $\Event_2$ is observed by the $\AtomicRead(x_2)$ event $\Event_{17}$, hence $\Event_2$ must be ordered before $\Event_{17}$.
These constraints, together with the program order which requires events of each process to occur in the same order as in the input trace, are captured by the partial order shown in solid edges in \cref{subfig:motivating2_closed}.
Note that  several conflicting accesses to $x_1$, $x_3$ and $x_4$ are still unordered. 
How can we obtain a valid linearization?
First, we can infer a few more orderings.
\item The $\AtomicRead(x_4)$ event $\Event_9$ must observe the same write event as in $\Trace$.
Due to the previous step, the $\AtomicWrite(x_4)$ event $\Event_{14}$ now is ordered before $\Event_9$.
To avoid $\Event_{9}$ observing $\Event_{14}$, we perform an \emph{observation-closure} step, by ordering $\Event_{14}$ before the observation $\Event_5$ of $\Event_{9}$ (see dashed edge in \cref{subfig:motivating2_closed}).
\item Due to the previous step, the $\Acquire(\ell)$ event $\Event_{13}$ is now ordered before the $\Release(\ell)$ event $\Event_7$.  
In order to not violate lock semantics, the critical section of the second process must be ordered before the first critical section of the first process.
Hence we perform a \emph{lock-closure} step, by ordering the $\Release(\ell)$ event $\Event_{15}$ before the lock-acquire event $\Event_{4}$
(see dashed edge in \cref{subfig:motivating2_closed}).
\item At this point, no other closure step is performed, and the partial order is called \emph{trace-closed}.
Note that there still exist conflicting accesses to variables $x_1$ and $x_3$ which are pairwise unordered and quite distant, hence \emph{not every linearization} produces a valid trace, and a correct linearization is not obvious.
We observe that we can obtain a valid trace by starting from the beginning of $\SeqTrace_1$ and $\SeqTrace_2$, and execute the former \emph{maximally} and the latter \emph{minimally}, according to the partial order.
That is, we repeatedly execute $\SeqTrace_1$ until we reach an event that is preceded by an event of $\SeqTrace_2$,
and then execute $\SeqTrace_2$ only until an event of $\SeqTrace_1$ becomes enabled again.
This \emph{max-min linearization} produces a valid witness trace (see \cref{subfig:motivating_complete_witness}).
\end{compactenum}

In this work we make the above insights formal.
We define the notion of trace-closed partial orders, which captures observation and lock-closure steps, and develop an efficient (polynomial-time) algorithm for computing the closure.
For two processes, we show that max-min linearizations \emph{always} produce valid traces, \emph{as long as} the partial order is trace-closed.
Hence, in this case, we have a sound and complete algorithm.
The case of three or more processes is more complicated, and our algorithm might eventually order some (but crucially, not all) conflicting events arbitrarily.
Although these choices might sacrifice completeness, the resulting partial orders are much weaker than before, so that complex races can still be exposed soundly by a max-min linearization.

\begin{figure*}[!t]
\begin{subfigure}[b]{0.2\textwidth}
\centering
\footnotesize
\def\rownumber{}
\begin{tabular}[b]{@{\makebox[1.2em][r]{\rownumber\space}} | l | l |}
\normalsize{$\mathbf{\SeqTrace_1}$} & \normalsize{$\mathbf{\SeqTrace_2}$}
\gdef\rownumber{\stepcounter{magicrownumbers}\arabic{magicrownumbers}} \\
\hline
$\AtomicWrite(x_1)$ & \\
$\AtomicWrite(x_2)$ & \\
$\AtomicWrite(x_3)$ & \\
$\Acquire(\ell)$ & \\
$\AtomicWrite(x_4)$ & \\
$\AtomicRead(x_1)$ & \\
$\Release(\ell)$ & \\
$\Acquire(\ell)$ & \\
$\AtomicRead(x_4)$ & \\
$\mathbf{\Write(x)}$ & \\
$\Release(\ell)$ & \\
& $\AtomicWrite(x_3)$ \\
& $\Acquire(\ell)$ \\
& $\AtomicWrite(x_4)$ \\
& $\Release(\ell)$ \\
& $\AtomicWrite(x_1)$ \\
& $\AtomicRead(x_2)$ \\
& $\AtomicRead(x_3)$ \\
& $\mathbf{\Read(x)}$ \\
\hline
\end{tabular}
\caption{Is $(\Event_{10}, \Event_{19})$ a race?}
\label{subfig:motivating_complete}
\end{subfigure}
\qquad
\begin{subfigure}[b]{0.47\textwidth}
\centering
\small
\begin{tikzpicture}[thick,
pre/.style={<-,shorten >= 2pt, shorten <=2pt, very thick},
post/.style={->,shorten >= 2pt, shorten <=2pt,  very thick},
seqtrace/.style={->, line width=2},
und/.style={very thick, draw=gray},
event/.style={rectangle, minimum height=3mm, draw=black, fill=white, minimum width=4.5mm,   line width=1pt, inner sep=0, font={\small}},
virt/.style={circle,draw=black!50,fill=black!20, opacity=0}]

\newcommand{\xdisposition}{0}
\newcommand{\ydisposition}{0}
\newcommand{\xstep}{1.7}
\newcommand{\ystep}{0.6}
\newcommand{\ybias}{0.4}
\newcommand{\xbias}{0.8}
\newcommand{\xbiassmall}{0.6}

\node	[]		(t1a)	at	(\xdisposition + 0*\xstep, \ydisposition + 0*\ystep)	{\normalsize$\SeqTrace_1$};
\node	[]		(t1b)	at	(\xdisposition + 0*\xstep, \ydisposition + -10*\ystep)	{};
\node	[]		(t2a)	at	(\xdisposition + 1*\xstep, \ydisposition + 0*\ystep)	{\normalsize$\SeqTrace_2$};
\node	[]		(t2b)	at	(\xdisposition + 1*\xstep, \ydisposition + -10*\ystep)	{};

\draw[seqtrace] (t1a) to (t1b);
\draw[seqtrace] (t2a) to (t2b);

\node[event] (e11) at (\xdisposition + 0*\xstep, \ydisposition + -1*\ystep) {$\Event_{1}$};
\node[] (et11) at (\xdisposition + 0*\xstep - \xbias, \ydisposition + -1*\ystep) {$\AtomicWrite(x_1)$};
\node[event] (e12) at (\xdisposition + 0*\xstep, \ydisposition + -2*\ystep) {$\Event_{2}$};
\node[] (et12) at (\xdisposition + 0*\xstep - \xbias, \ydisposition + -2*\ystep) {$\AtomicWrite(x_2)$};
\node[event] (e13) at (\xdisposition + 0*\xstep, \ydisposition + -3*\ystep) {$\Event_{3}$};
\node[] (et13) at (\xdisposition + 0*\xstep - \xbias, \ydisposition + -3*\ystep) {$\AtomicWrite(x_3)$};
\node[event] (e14) at (\xdisposition + 0*\xstep, \ydisposition + -4*\ystep) {$\Event_{4}$};
\node[] (et14) at (\xdisposition + 0*\xstep - \xbias, \ydisposition + -4*\ystep) {$\Acquire(\ell)$};
\node[event] (e15) at (\xdisposition + 0*\xstep, \ydisposition + -5*\ystep) {$\Event_{5}$};
\node[] (et15) at (\xdisposition + 0*\xstep - \xbias, \ydisposition + -5*\ystep) {$\AtomicWrite(x_4)$};
\node[event] (e16) at (\xdisposition + 0*\xstep, \ydisposition + -6*\ystep) {$\Event_{6}$};
\node[] (et16) at (\xdisposition + 0*\xstep - \xbias, \ydisposition + -6*\ystep) {$\AtomicRead(x_1)$};
\node[event] (e17) at (\xdisposition + 0*\xstep, \ydisposition + -7*\ystep) {$\Event_{7}$};
\node[] (et17) at (\xdisposition + 0*\xstep - \xbias, \ydisposition + -7*\ystep) {$\Release(\ell)$};
\node[event] (e18) at (\xdisposition + 0*\xstep, \ydisposition + -8*\ystep) {$\Event_{8}$};
\node[] (et18) at (\xdisposition + 0*\xstep - \xbias, \ydisposition + -8*\ystep) {$\Acquire(\ell)$};
\node[event] (e19) at (\xdisposition + 0*\xstep, \ydisposition + -9*\ystep) {$\Event_{9}$};
\node[] (et19) at (\xdisposition + 0*\xstep - \xbias, \ydisposition + -9*\ystep) {$\AtomicRead(x_4)$};

\renewcommand{\ydisposition}{-\ystep}

\node[event] (e21) at (\xdisposition + 1*\xstep, \ydisposition + -1*\ystep) {$\Event_{12}$};
\node[] (et21) at (\xdisposition + 1*\xstep + \xbias, \ydisposition + -1*\ystep) {$\AtomicWrite(x_3)$};
\node[event] (e22) at (\xdisposition + 1*\xstep, \ydisposition + -2*\ystep) {$\Event_{13}$};
\node[] (et22) at (\xdisposition + 1*\xstep + \xbias, \ydisposition + -2*\ystep) {$\Acquire(\ell)$};
\node[event] (e23) at (\xdisposition + 1*\xstep, \ydisposition + -3*\ystep) {$\Event_{14}$};
\node[] (et23) at (\xdisposition + 1*\xstep + \xbias, \ydisposition + -3*\ystep) {$\AtomicWrite(x_4)$};
\node[event] (e24) at (\xdisposition + 1*\xstep, \ydisposition + -4*\ystep) {$\Event_{15}$};
\node[] (et24) at (\xdisposition + 1*\xstep + \xbias, \ydisposition + -4*\ystep) {$\Release(\ell)$};
\node[event] (e25) at (\xdisposition + 1*\xstep, \ydisposition + -5*\ystep) {$\Event_{16}$};
\node[] (et25) at (\xdisposition + 1*\xstep + \xbias, \ydisposition + -5*\ystep) {$\AtomicWrite(x_1)$};
\node[event] (e26) at (\xdisposition + 1*\xstep, \ydisposition + -6*\ystep) {$\Event_{17}$};
\node[] (et26) at (\xdisposition + 1*\xstep + \xbias, \ydisposition + -6*\ystep) {$\AtomicRead(x_2)$};
\node[event] (e27) at (\xdisposition + 1*\xstep, \ydisposition + -7*\ystep) {$\Event_{18}$};
\node[] (et27) at (\xdisposition + 1*\xstep + \xbias, \ydisposition + -7*\ystep) {$\AtomicRead(x_3)$};

\draw[post] (e12) to (e26);
\draw[post] (e24) to (e18);
\draw[post, dashed, draw=\darkred] (e23) to (e15);
\draw[post, dashed, draw=\darkred] (e24) to (e14);

\end{tikzpicture}
\caption{Ordering constraints before (solid edges) and after the closure (solid and dashed edges).
}
\label{subfig:motivating2_closed}
\end{subfigure}
\qquad
\begin{subfigure}[b]{0.2\textwidth}
\centering
\footnotesize
\def\rownumber{}
\begin{tabular}[b]{@{\makebox[1.2em][r]{\rownumber\space}} | l | l |}
\normalsize{$\mathbf{\SeqTrace_1}$} & \normalsize{$\mathbf{\SeqTrace_2}$}
\gdef\rownumber{\stepcounter{magicrownumbers}\arabic{magicrownumbers}} \\
\hline
$\AtomicWrite(x_1)$ & \\
$\AtomicWrite(x_2)$ & \\
$\AtomicWrite(x_3)$ & \\
& $\AtomicWrite(x_3)$ \\
& $\Acquire(\ell)$ \\
& $\AtomicWrite(x_4)$ \\
& $\Release(\ell)$ \\
$\Acquire(\ell)$ & \\
$\AtomicWrite(x_4)$ & \\
$\AtomicRead(x_1)$ & \\
$\Release(\ell)$ & \\
$\Acquire(\ell)$ & \\
$\AtomicRead(x_4)$ & \\
& $\AtomicWrite(x_1)$ \\
& $\AtomicRead(x_2)$ \\
& $\AtomicRead(x_3)$ \\
$\mathbf{\Write(x)}$ & \\
& $\mathbf{\Read(x)}$ \\
\hline
\end{tabular}
\caption{The witness trace.}
\label{subfig:motivating_complete_witness}
\end{subfigure}
\caption{
Example of a race that requires non-trivial reasoning about reorderings of the input trace.
}
\label{fig:motivating2}
\end{figure*}

\subsection{Our Contributions}\label{subsec:contributions}

In summary, the contributions of this work are as follows.

\smallskip\noindent{\bf A new algorithm for dynamic race detection.}
Our main contribution is a \emph{polynomial-time} and \emph{sound} algorithm for detecting predictable races present in the input trace.
In addition, our algorithm is \emph{complete} for input traces that consist of events of two processes.
First we study the \emph{decision problem}, that is, given an input trace $\Trace$ and a pair of events $(\Event_1, \Event_2)$ of $\Trace$,
decide whether the pair constitutes a data race of $\Trace$.
We present a sound algorithm for the problem that operates in $O(n^2\cdot \log n)$ time, where $n$ is the length of $\Trace$.
Since all data races can be computed by solving the decision problem for each of the $\binom{n}{2}$ event pairs, we obtain a sound algorithm for reporting \emph{all races} that requires $O(n^4\cdot \log n)$ time.
In all cases, if the input trace consists of events of two processes, our race reports are also complete.

Our techniques rely on a new notion of \emph{trace-closed partial orders}, which might be of independent interest.
Informally, a closed partial order wrt a trace $\Trace$ is a partial order over a subset of events of $\Trace$ that respects
(i)~the observation $\Write(x)$ of each read event $\Read(x)$ in $\Trace$, and
(ii)~the lock semantics.
We define \emph{max-min} linearizations of closed partial orders, and prove sufficient conditions under which a max-min linearization produces a valid trace.
Finally, we show that given a partial order of small width, its closure can be computed in $O(n^2\cdot \log n)$ time.
To this end, we develop a data structure $\DS$ for maintaining the incremental transitive closure of directed acyclic graphs of small width.
$\DS$ requires $O(n)$ initialization time, after which it supports edge insertions and reachability queries in $O(\log n)$ time.
Here, the width of partial orders is bounded by the number of processes, which is a small constant compared to the length of the trace, and hence our data structure is relevant.

\smallskip\noindent{\bf A practical algorithm and implementation.}
We develop an algorithm for the \emph{function problem} of race detection that is more practical than simply solving the decision problem for all possible pairs.
The efficiency of the algorithm comes while retaining the soundness and completeness guarantees.
We also develop sufficient conditions for detecting dynamically that our algorithm is complete for a given input, 
even in cases where completeness is not guaranteed theoretically.

We make a prototype implementation of our practical algorithm and evaluate it on a standard set of benchmark traces that contain hundreds of millions of events.
We compare the performance of our tool against state-of-the-art, polynomial-time, partial-order-based methods, namely the $\HB$~\cite{Lamport78}, $\WCP$~\cite{Kini17}, $\DC$~\cite{Roemer18} and $\SHB$~\cite{Mathur18} methods.
Our approach detects significantly more races than each of these methods,
while it has comparable running time, and typically being faster.
In fact, our algorithm does not simply detect more races; it detects \emph{all} races in the benchmark traces, and soundly reports that no more races (other than the detected ones) exist.
To our knowledge, this is the first sound algorithm that achieves such a level of performance on both running time and completeness of the reported races.

\section{Preliminaries}\label{SEC:PRELIMINARIES}

In this section we introduce useful notation and define the problem of dynamic race detection for lock-based concurrent programs.
The model follows similar recent works (e.g.,~\cite{Kini17}). 

\smallskip\noindent{\bf Concurrent program.}
Given a natural number $k$, let $[k]$ denote the set $\{1,\dots, k\}$.
We consider a shared-memory concurrent program $\System$ that consists of $k$ processes $\{\Process_i\}_{i\in [k]}$, under sequential consistency semantics.
For simplicity of presentation we assume that $k$ is fixed a-priori, and no process is created dynamically.
All results presented here can be extended to a setting with dynamic process creation.
Communication between processes occurs over a set of global variables $\Globals$, and synchronization over a set of locks $\Locks$.
We let $\Vars=\Globals\cup \Locks$ be the set of all variables of $\System$.
Each process is deterministic, and performs a sequence of operations on execution.
We are only interested in the operations that access a global variable or a lock, which are called \emph{events}. 
\begin{compactenum}
\item Given a global variable $x\in \Globals$, a process can \emph{write/read} to $x$ via an event $\Write(x)$/$\Read(x)$.
\item Given a lock $l\in \Locks$, a process can \emph{acquire} $\ell$ via an event $\Acquire(l)$ and \emph{release} $l$ via an event $\Release(l)$.

\end{compactenum}
Each such event is atomic. Given an event $\Event$, we let $\Location{\Event}$ denote the global variable (or lock) that $\Event$ accesses.
We denote by $\SysWrites_{\Process}$ (resp. $\SysReads_{\Process}$, $\SysAcquires_{\Process}$, $\SysReleases_{\Process}$) the set of all write (resp. read, acquire, release) events that can be performed by process $\Process$.
We let $\SysEvents_{\Process}=\SysWrites_{\Process}\cup \SysReads_{\Process}\cup \SysAcquires_{\Process}\cup \SysReleases_{\Process}$, and assume that $\SysEvents_{\Process}\cap \SysEvents_{\Process'}=\emptyset$ for every $\Process\neq \Process'$.
We denote by $\SysEvents=\bigcup_{\Process} \SysEvents_{\Process}$, $\SysWrites=\bigcup_{\Process} \SysWrites_{\Process}$, $\SysReads=\bigcup_{\Process} \SysReads_{\Process}$, $\SysAcquires=\bigcup_{\Process} \SysAcquires_{\Process}$, $\SysReleases=\bigcup_{\Process} \SysReleases_{\Process}$ the events, write, read, acquire and release events of the program $\System$, respectively.
Given an event $\Event\in \SysEvents$, we denote by $\Proc{\Event}$ the process that $\Event$ belongs to.
Finally, given a set of events $X\subseteq \SysEvents$, we denote by $\Reads{X}$ (resp., $\Writes{X}$, $\Acquires{X}$, $\Releases{X}$) the set of read (resp., write, lock-acquire, lock-release) events of $X$.

\smallskip\noindent{\bf Conflicting events.}
Given two distinct events $\Event_1, \Event_2\in \SysWrites\cup \SysReads$, we say that $\Event_1$ and $\Event_2$ are \emph{conflicting}, denoted by $\Confl{\Event_1}{\Event_2}$, if 
(i)~$\Location{\Event_1}=\Location{\Event_2}$ (i.e., they access the same global variable) and
(ii)~$\{\Event_1, \Event_2 \}\cap \SysWrites\neq \emptyset$ (i.e., at least one is a write event).
We extend the notion of conflict to locks, and say that two events $\Event_1, \Event_2\in \SysAcquires \cup \SysReleases$ are conflicting if $\Location{\Event_1}=\Location{\Event_2}$ (i.e., they use the same lock).

\smallskip\noindent{\bf Event sequences.}
Let $\Trace$ be a sequence of events.
We denote by $\Events{\Trace}$ the set of events, by $\Locks(\Trace)$ the set of locks, and by $\Globals(\Trace)$ the set of global variables in $\Trace$.
We let $\Writes{\Trace}$ (resp., $\Reads{\Trace}$, $\Acquires{\Trace}$, $\Releases{\Trace}$) denote the set $\Writes{\Events{\Trace}}$ (resp., $\Reads{\Events{\Trace}}$, $\Acquires{\Events{\Trace}}$, $\Releases{\Events{\Trace}}$).
Given two distinct events $\Event_1, \Event_2\in \Events{\Trace}$, 
we say that $\Event_1$ \emph{is earlier than}  $\Event_2$ in $\Trace$,
denoted by $\Event_1 <_{\Trace} \Event_2$ iff $\Event_1$ appears before $\Event_2$ in $\Trace$.
We say that $\Event_1$ is \emph{program-ordered earlier than} $\Event_2$,
denoted by $\Event_1<_{\TO(\Trace)}\Event_2$, to mean that $\Event_1 <_{\Trace} \Event_2$ and $\Proc{\Event_1}=\Proc{\Event_2}$.
When $\Trace$ is clear from the context, we simply write $\TO$ to denote $\TO(\Trace)$.
We let $=^{\Trace}$ be the identity relation on $\Events{\Trace}$, and denote by $\leq_{\Trace}$, $\leq_{\TO}$ the relations
$<_{\Trace} \cup =_{\Trace}$ and $<_{\TO} \cup =_{\Trace}$ respectively.
Given a set of events $X\subseteq \SysEvents$, we denote by $\Trace\Project X$ the \emph{projection} of $\Trace$ onto $X$,
i.e., it is the sub-sequence of events of $\Trace$ that belong to $X$.
Given two event sequences $\Trace_1$, $\Trace_2$, we denote by $\Trace_1\circ \Trace_2$ the concatenation of $\Trace_1$ with $\Trace_2$.
Finally, given a process $\Process_i$, we let $\Trace\Project\Process_i=\Trace\Project\SysEvents_{\Process_i}$.

\smallskip\noindent{\bf Lock events.}
Given a sequence of events $\Trace$ and a lock-acquire event $\Acquire\in \Acquires{\Trace}$, we denote by $\Match[\Trace]{\Acquire}$ the earliest lock-release event in $\Release\in \Releases{\Trace}$ such that $\Confl{\Release}{\Acquire}$ and $\Acquire<_{\Trace}\Release$, and let $\Match[\Trace]{\Acquire}=\bot$ if no such lock-release event exists.
If $\Match[\Trace]{\Acquire}\neq \bot$, we require that $\Proc{\Acquire}=\Proc{\Match[\Trace]{\Acquire}}$, i.e., the two lock events belong to the same process.
Similarly, given a lock-release event $\Release\in \Releases{\Trace}$, we denote by $\Match[\Trace]{\Release}$ the acquire event $\Acquire\in \Acquires{\Trace}$ such that $\Match[\Trace]{\Acquire}=\Release$ and require that such a lock-acquire event always exists.

\smallskip\noindent{\bf Traces and observation functions.}
A sequence $\Trace$ is called a \emph{trace} if it satisfies the following.
\begin{compactenum}
\item For every read event $\Read\in \Reads{\Trace}$, there exists a write event $\Write\in \Writes{\Trace}$
such that $\Location{\Read}=\Location{\Write}$ and $\Write<_{\Trace} \Read$.
\item For any two lock-acquire events $\Acquire_1,\Acquire_2\in \Acquires{\Trace}$, if $\Location{\Acquire_1}=\Location{\Acquire_2}$ and $\Acquire_1<_{\Trace}\Acquire_2$,
then $\Release_1 =\Match[\Trace]{\Acquire_1}\in \Releases{\Trace}$ and $\Release_1<_{\Trace}\Acquire_2$.
\end{compactenum}

Given a trace $\Trace$, we define its \emph{observation function} $\Observation_{\Trace}:\Reads{\Trace}\to \Writes{\Trace}$ as follows:
$\Observation_{\Trace}(\Read)=\Write$ iff
\[
\Location{\Read}=\Location{\Write}
\quad\text{and}\quad
\Write<_{\Trace} \Read
\quad\text{and}\quad
\forall \Write'\in \Writes{\Trace}\setminus\{ \Write \} \text{ with } \Confl{\Write}{\Write'}: \Write'<_{\Trace} \Read \Implies \Write'<_{\Trace} \Write
\]
In words, $\Observation_{\Trace}$ maps every read event $\Read$ to the write event $\Write$ that $\Read$ observes in $\Trace$.
For simplicity, we assume that $\Trace$ starts with a write event to every location, hence $\Observation_{\Trace}$ is well-defined.

\smallskip\noindent{\bf Enabled events and races.}
An event $\Event\in \SysEvents$ is said to be \emph{enabled} in a trace $\Trace$ if $\Trace^*=\Trace\circ \Event$ is a trace of $\System$.
A trace $\Trace$ is said to exhibit a \emph{race} if there exist two consecutive conflicting events in $\Trace$ that belong to different processes.
Formally, there exist two events $\Event_1, \Event_2\in \SysReads\cup \SysWrites$ such that 
(i)~$\Proc{\Event_1}\neq \Proc{\Event_2}$,
(ii)~$\Confl{\Event_1}{\Event_2}$, 
(iii)~$\Event_1<_{\Trace}\Event_2$, and 
(iv)~for every $\Event\in \Events{\Trace}\setminus\{\Event_1, \Event_2\}$, we have that $\Event<_{\Trace}\Event_2\Implies \Event<_{\Trace}\Event_1$.

\smallskip\noindent{\bf Predictable races.}
A trace $\Trace'$ is a (prefix) \emph{correct reordering} of another trace $\Trace$ if (i)~for every process $\Process_i$,
we have that $\Trace'\Project \Process_i$ is a prefix of $\Trace\Project \Process_i$ and (ii)~$\Observation_{\Trace'}\subseteq \Observation_{\Trace}$,
i.e., the observation functions of $\Trace'$ and $\Trace$ agree on their common read events.
We say that $\Trace$ has a \emph{predictable race} on a pair of events $\Event_1, \Event_2\in \Events{\Trace}$ 
if there exists a correct reordering $\Trace'$ of $\Trace$ such that $\Trace^*=\Trace' \circ \Event_1 \circ \Event_2$ is a trace that exhibits the race $(\Event_1, \Event_2)$.


\smallskip\noindent{\bf Computational problems.}
The aim of this work is to present sound and fast algorithms for race detection, that also have certain completeness guarantees.
As usual in algorithmic parlance, we are concerned with two versions of the problem, namely the following.
Given an input trace $\Trace$,
\begin{compactenum}
\item the \emph{decision problem} is stated on two events $\Event_1,\Event_2\in\Events{\Trace}$, 
and asks whether $(\Event_1, \Event_2)$ is a predictable race of $\Trace$, and
\item the \emph{function problem} asks to compute the set of all pairs $\{(\Event_1^i, \Event_2^i)\}_{i}$
such that each $(\Event_1^i, \Event_2^i)$ is a predictable race of $\Trace$.
\end{compactenum}

\smallskip\noindent{\bf Soundness, completeness and complexity.}
A predictive race-detection algorithm is called \emph{sound} if on every input trace $\Trace$, every reported race is a predictable race of $\Trace$.
The algorithm is called \emph{complete} if it reports all predictable races of $\Trace$.
We note that these notions are often used in reverse in program verification.
However, here we align with the terminology used in predictive techniques, hence soundness (resp., completeness) means the absence of false positives (resp., false negatives).
We measure complexity in terms of the length $n$ of $\Trace$.
Other important parameters are the number of processes $k$ and the number of global variables $\Globals$.
Typically $k$ is much smaller than $n$, and is treated as a constant.
For simplicity, we also ignore $\Globals$ in our complexity statements.
In all cases, our algorithms have a dependency of factor $k^2\cdot |\Globals|$ (and hence polynomial) on these parameters.

\smallskip\noindent{\bf Dynamic process creation and other synchronization primitives.}
To keep the presentation simple, in the theoretical part of this work we neglect dynamic process creation (i.e., fork/join events).
We note that such events can be handled naturally in our framework.
In our experiments (\cref{SEC:EXPERIMENTS}) we explain how we handle dynamic process creation, which is present in our benchmark set.
Similarly, our focus on locks is for simplicity of presentation and not restrictive to our model.
For dynamic race detection, other synchronization primitives, such as compare-and-swap, intrinsic locks and synchronized methods
can be simulated with locks and extra orderings in the partial orders.
Indeed, this modeling approach has been taken in many other works, as e.g. in~\cite{Smaragdakis12,Kini17,Mathur18,Roemer18}.

Due to limited space, all proofs are relegated to \cref{sec:missing_proofs}.
\section{Trace-closed Partial Orders}\label{sec:po}

In this section we present relevant notation on partial orders, and introduce the concept of closed partial orders.
We also present max-min linearizations which linearize closed partial orders to valid traces.
Since this our most technical section, we provide here an overview to assist the reader.

\begin{compactenum}
\item In \cref{subsec:po} we define general notation on partial orders.
Since these are partial orders over sets of events $X$ of an input trace $\Trace$,
we introduce a \emph{feasibility} criterion for these sets, 
which requires that certain events are present in the partial order.
For example, for every two conflicting lock-acquire events in $X$, at least one corresponding lock-release event must also be in $X$.
\item In \cref{subsec:ol_closure} we define trace-closed partial orders.
Intuitively, this notion requires certain orderings between conflicting events to be present in the partial order.
Note that \emph{not} every linearization of a partial order leads to a valid trace
(e.g., some linearizations might not respect the lock semantics).
Nevertheless, we show that for a specific class of trace-closed partial orders, 
a specific type of \emph{max-min} linearization is guaranteed to \emph{always} produce a valid trace.
\item In \cref{subsec:closure_comp} we develop an algorithm that computes the trace-closure of a partial order efficiently. 
To this end, we develop a data structure $\DS$ for the efficient representation of partial orders.
For ease of presentation, we relegate the technical description of $\DS$ in \cref{sec:dag_reach}.
\end{compactenum}


\subsection{Partial Orders}\label{subsec:po}

\smallskip\noindent{\bf Feasible sets.}
Given a set of events $X\subseteq \Events{\Trace}$, we say that $X$ is \emph{prefix-closed} for $\Trace$ if
for every pair of events $\Event_1,\Event_2\in \Events{\Trace}$ if $\Event_1\leq_{\TO} \Event_2$ and $\Event_2\in X$,
then $\Event_1\in X$ (i.e., $X$ is an ideal of $\leq_{\TO}$).
We define the \emph{open acquires} of $X$ under $\Trace$ as
$
\OpenAcquires_{\Trace}(X)=\{\Acquire\in \Acquires{X}:~\Match[\Trace]{\Acquire}\not\in X \}
$.

We call $X$ \emph{observation-feasible} for $\Trace$ if for every read event $\Read\in \Reads{X}$, we have  $\Observation_{\Trace}(\Read)\in X$.
We call $X$ \emph{lock-feasible} for $\Trace$ if
(i)~for every lock-release event $\Release\in \Releases{X}$, we have  $\Match[\Trace]{\Release}\in X$, and
(ii)~for every distinct pair of lock-acquire events $\Acquire_1,\Acquire_2\in \OpenAcquires_{\Trace}(X)$, we have  $\Location{\Acquire_1}\neq \Location{\Acquire_2}$.
In words, $X$ is lock-feasible if every release event of $X$ has its matching acquire event also in $X$, and every open lock of $X$ remains open by exactly one acquire event of $X$.
Finally, we call $X$ \emph{feasible} for $\Trace$ if $X$ is 
prefix-closed, 
observation-feasible, and 
lock-feasible for $\Trace$.

\smallskip\noindent{\bf Partial orders.}
Given a trace $\Trace$ and a set $X\subseteq \Events{\Trace}$, a \emph{partial order} $P(X)$ over $X$ is a reflexive, antisymmetric and transitive relation over $X$
(i.e., $\leq_{P(X)}\subseteq X\times X$).
When $X$ is clear from the context, we will simply write $P$ instead of $P(X)$.
Given two events $\Event_1, \Event_2$ we write $\Event_1<_{P}\Event_2$ to denote that $\Event_1\leq_{P}\Event_2$ and $\Event_1\neq \Event_2$.
Given two distinct events $\Event_1,\Event_2\in X$, we say that $\Event_1$ and $\Event_2$ are \emph{unordered} by $P$, denoted by $\Unordered{\Event_1}{P}{\Event_2}$, if neither $\Event_1<_{P}\Event_2$ nor $\Event_2<_{P} \Event_1$.
Given a set $Y\subseteq X$, we denote by $P\Project Y$ the \emph{projection} of $P$ on $Y$,
i.e., we have $\leq_{P\Project Y}\subseteq Y\times Y$, and for all $\Event_1,\Event_2\in Y$, $\Event_1\leq_{P\Project Y} \Event_2$ iff $\Event_1\leq_{P} \Event_2$.
Given two partial orders $P$ and $Q$ over a common set $X$, we say that $Q$ \emph{refines} $P$, denoted by $Q\Refines P$, if
for every pair of events $\Event_1, \Event_2\in X$, if $\Event_1\leq_{P}\Event_2$ then $\Event_1\leq_{Q}\Event_2$.
If $Q$ refines $P$, we say that $P$ is \emph{weaker} than $Q$.
A \emph{linearization} of $P$ is a total order that refines $P$.
We make the following remark.

\smallskip
\begin{remark}\label{rem:linearization}
Not every linearization of a partial order $P$ is a valid trace, and generally, $P$ is not guaranteed to have such a linearization.
Our algorithm for dynamic race detection relies on developing sufficient conditions
under which $P$ indeed has a linearization to a valid trace.
\end{remark}

\smallskip\noindent{\bf Width and Mazurkiewicz traces.}
Let $P$ be a partial order over a set $X\subseteq \Events{\Trace}$.
The \emph{width} $\Width(P)$ of $P$ is the length of its longest antichain.
i.e., it is the largest size of a set $Y\subseteq X$ such that for every pair of distinct elements $\Event_1,\Event_2\in Y$ we have $\Unordered{\Event_1}{P}{\Event_2}$.
The partial order $P$ is called a \emph{Mazurkiewicz trace} (or \emph{M-trace} for short) if for every two conflicting events $\Event_1, \Event_2\in X$, we have $\Ordered{\Event_1}{P}{\Event_2}$~\cite{Mazurkiewicz87}.
Note that if $\Width(P)=1$ then $P$ is trivially an M-trace.

%

\subsection{Trace-closed Partial Orders}\label{subsec:ol_closure}

In this section we define the notion of trace-closed partial orders.
This is a central concept in this work, as our race-detection algorithm is based on computing trace-closed partial orders efficiently.

\smallskip\noindent{\bf Trace-respecting partial orders.}
Let $\Trace$ be a trace, and $P$ a partial order over a feasible set $X\subseteq \Events{\Trace}$.
We say that $P$ \emph{respects} $\Trace$ if the following conditions hold.
\begin{compactenum}
\item\label{item:resp1} $P\Refines \TO\Project X$, i.e., $P$ refines the program order when restricted to the set $X$.
\item\label{item:resp2} For every read event $\Read\in \Reads{X}$ we have  $\Observation_{\Trace}(\Read)<_{P} \Read$.
\item\label{item:resp3} For every lock-acquire event $\Acquire\in \Acquires{X}$, if $\Match[\Trace]{\Acquire}\not \in X$,
then for every lock-release event $\Release\in \Releases{X}$ such that $\Confl{\Release}{\Acquire}$,
we have  $\Release<_{P}\Acquire$.
\end{compactenum}
We denote by $\RespectPO{\Trace}{X}$ the weakest partial order over $X$ that respects $\Trace$.

\smallskip\noindent{\bf Trace-closed partial orders.}
Let $\Trace$ be a trace, and $P$ a partial order over a feasible set $X\subseteq \Events{\Trace}$ such that $P$ respects $\Trace$.
We call $P$ \emph{observation-closed} if the following condition holds.
For every read event $\Read\in \Reads{X}$, let $\Write=\Observation_{\Trace}(\Read)$.
For every write event $\Write'\in \Writes{X}\setminus\{ \Write \}$ such that
$\Confl{\Write'}{\Read}$, we have
\[
\text{if }  \Write' <_{P} \Read  \text{ then } \Write'<_{P}\Write \quad\text{and}\quad
\text{if }  \Write <_{P} \Write' \text{ then } \Read<_{P} \Write'
\]
For a pair of lock-release events $\Release_1, \Release_2\in \Releases{X}$, let $\Acquire_i=\Match[\Trace]{\Release_i}$.
We call $P$ \emph{lock-closed} if for every $\Acquire_1,\Acquire_2\in\SysAcquires$ and $\Release_1,\Release_2\in\SysReleases$,
if $\Confl{\Release_2}{\Acquire_1}$ and $\Acquire_1\leq_{P} \Release_2$, then $\Release_1\leq_{P} \Acquire_2$.
Finally, we call $P$ \emph{trace-closed} (or simply \emph{closed}) if it is both observation-closed and lock-closed.
See \cref{fig:closure_defs} for an illustration.
Note that a closed partial order can still contain conflicting events that are unordered.
In addition, it does not necessarily admit a linearization to a valid trace.
In the next paragraph we develop sufficient conditions for when such a linearization exists.


\begin{figure}[!h]
\begin{subfigure}[t]{0.25\textwidth}
\centering
\begin{tikzpicture}[thick,
pre/.style={<-,shorten >= 1pt, shorten <=1pt, thick},
post/.style={->,shorten >= 2pt, shorten <=2pt,  very thick},
und/.style={very thick, draw=gray},
node1/.style={rectangle, minimum size=5mm, fill=white!100,  line width=1pt, inner sep=0},
virt/.style={circle,draw=black!50,fill=black!20, opacity=0}]

\newcommand{\xdisposition}{0}
\newcommand{\ydisposition}{0}
\newcommand{\xstep}{1.2}
\newcommand{\ystep}{0.9}
\newcommand{\ybias}{0.6}

\node	[node1]		(1)	at	(\xdisposition + 0*\xstep, \ydisposition + 0*\ystep)	{$\Write$};
\node	[node1]		(2)	at	(\xdisposition + 0*\xstep, \ydisposition -1.5*\ystep)	{$\Read$};
\node	[node1]		(3)	at	(\xdisposition + 1*\xstep, \ydisposition -0.75*\ystep)	{$\Write'$};

\draw[post ] (1) to (2);
\draw[post ] (3) to (2);
\draw[post, \darkred, dashed] (3) to (1);

\end{tikzpicture}
\caption{}
\label{subfig:oc1}
\end{subfigure}
\begin{subfigure}[t]{0.25\textwidth}
\centering
\begin{tikzpicture}[thick,
pre/.style={<-,shorten >= 1pt, shorten <=1pt, thick},
post/.style={->,shorten >= 2pt, shorten <=2pt,  very thick},
und/.style={very thick, draw=gray},
node1/.style={rectangle, minimum size=5mm, fill=white!100,  line width=1pt, inner sep=0},
virt/.style={circle,draw=black!50,fill=black!20, opacity=0}]

\newcommand{\xdisposition}{0}
\newcommand{\ydisposition}{0}
\newcommand{\xstep}{1.2}
\newcommand{\ystep}{0.9}
\newcommand{\ybias}{0.6}

\node	[node1]		(10)	at	(\xdisposition + 0*\xstep, \ydisposition + 0*\ystep)	{$\Write$};
\node	[node1]		(20)	at	(\xdisposition + 0*\xstep, \ydisposition -1.5*\ystep)	{$\Read$};
\node	[node1]		(30)	at	(\xdisposition + 1*\xstep, \ydisposition -0.75*\ystep)	{$\Write'$};

\draw[post] (10) to (20);
\draw[post, \darkred, dashed ] (20) to (30);
\draw[post ] (10) to (30);

\end{tikzpicture}
\caption{}
\label{subfig:oc2}
\end{subfigure}
\qquad
\begin{subfigure}[t]{0.4\textwidth}
\centering
\begin{tikzpicture}[thick,
pre/.style={<-,shorten >= 1pt, shorten <=1pt, thick},
post/.style={->,shorten >= 2pt, shorten <=2pt, very thick},
und/.style={very thick, draw=gray},
node1/.style={rectangle, minimum size=5mm, fill=white!100,  line width=1pt, inner sep=0},
virt/.style={circle,draw=black!50,fill=black!20, opacity=0}]

\newcommand{\xdisposition}{0}
\newcommand{\ydisposition}{0}
\newcommand{\xstep}{1.2}
\newcommand{\ystep}{1.4}
\newcommand{\ybias}{0.6}

\node	[node1]		(1)	at	(\xdisposition + 0*\xstep, \ydisposition + 0*\ystep)	{$\Acquire_1$};
\node	[node1]		(3)	at	(\xdisposition + 0*\xstep, \ydisposition -1*\ystep)	{$\Release_1$};

\node	[node1]		(4)	at	(\xdisposition + 1*\xstep, \ydisposition + 0*\ystep)	{$\Acquire_2$};
\node	[node1]		(6)	at	(\xdisposition + 1*\xstep, \ydisposition -1*\ystep)	{$\Release_2$};

\draw[post ] (1) to (3);
\draw[post ] (4) to (6);
\draw[post ] (1) to (6);
\draw[post, \darkred, dashed] (3) to (4);

\end{tikzpicture}
\caption{}
\label{subfig:lc}
\end{subfigure}
\caption{The conditions of observation closure (\protect\subref{subfig:oc1},\protect\subref{subfig:oc2}) and lock closure (\protect\subref{subfig:lc}). 
Solid edges and dashed edges represent existing and inferred orderings, respectively.}
\label{fig:closure_defs}
\end{figure}

\smallskip\noindent{\bf Max-min linearizations.}
The key technical challenge in race prediction is, given a trace $\Trace$, to construct a partial order $P$ over $\Events{\Trace}$ such that $P$ is efficiently linearizable to a correct reordering of $\Trace$ that manifests the race.
Here we use trace-closed partial orders to provide a sufficient condition for efficient linearization, which we call the \emph{max-min linearization}.
In later sections, our race-detection algorithm constructs trace-closed partial orders.
The max-min linearization of such partial orders will guarantee that the races exposed by these partial orders are indeed valid races,
which are exhibited by a trace constructed using the max-min linearization.

Let $\Trace$ be a trace, and consider a partial order $P$ over a feasible set $X\subseteq \Events{\Trace}$ such that
$P$ is trace-closed for $\Trace$ and $X$ can be partitioned into two sets $X_1, X_2\subseteq X$ such that
(i)~$\Width(P\Project X_1)=1$ and 
(ii)~$P\Project X_2$ is an M-trace.
The \emph{max-min} linearization $\Trace^{*}$ is a linearization of $P$ given by \cref{algo:maxmin}.
In words, first every event of $X_1$ is ordered before every event of $X_2$, as long as this is allowed by $P$, and then the resulting partial order is linearized arbitrarily.
Intuitively, we obtain the sequence $\Trace^*$ by linearizing $X_1$ maximally, and $X_2$ minimally.
See \cref{fig:maxmin} for an illustration.

\noindent{\em Intuition.}
First, observe that $P$ can contain pairs of conflicting events that are unordered, i,.e., between the sets $X_1$ and $X_2$.
Conceptually, $\MinMaxAlgo$ shows that as we attempt to linearize $P$, we do not have to make an exhaustive search over all the possible (exponentially many) orderings of such pairs.
Instead, the specific orderings made by $\MinMaxAlgo$ are guaranteed to produce a correct linearization.
The intuition behind the correctness of $\MinMaxAlgo$ can be summarized as follows.
\begin{compactenum}
\item Since $\Width(P\Project X_1)=1$, ordering every two events $\Event_1\in X_1$, $\Event_2\in X_2$ as $\Event_1\to \Event_2$ (provided that $\Event_2\not<_{P}\Event_1)$ creates a partial order (i.e., no cycle is formed).
\item Since $P$ is closed and $P\Project X_2$ is an M-trace, this ordering respects the observation $\Write$ of every read event $\Read$.
Indeed, if the ordering was forcing some other conflicting write event $\Write'$ between $\Write$ and $\Read$, then $\Write'$ must be ordered with at least one of $\Write$ and $\Read$, and then the corresponding closure rule (\cref{fig:closure_defs}) would  have resolved this conflict entirely.
\end{compactenum}

\smallskip
\begin{restatable}{theorem}{maxmin}\label{them:maxmin}
Let $\Trace$ be a trace and $P$ a partial order over a feasible set $X\subseteq\Events{\Trace}$ such that
$P$ is trace-closed for $\Trace$ and $X$ can be partitioned into two sets $X_1, X_2$ so that
(i)~$\Width(P\Project X_1)=1$ and
(ii)~$P\Project X_2$ is a Mazurkiewicz trace.
The max-min linearization of $P$ produces a correct reordering of $\Trace$.
\end{restatable}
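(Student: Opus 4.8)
The plan is to establish three facts about the linearization $\Trace^{*}$ produced by $\MinMaxAlgo$ (\cref{algo:maxmin}): that it is well defined (the relation $\MinMaxAlgo$ builds is acyclic), that each projection $\Trace^{*}\Project\Process_i$ is a prefix of $\Trace\Project\Process_i$, and that $\Trace^{*}$ is a trace with $\Observation_{\Trace^{*}}\subseteq\Observation_{\Trace}$. Write $E=\{(\Event_1,\Event_2)\mid \Event_1\in X_1,\ \Event_2\in X_2,\ \Event_2\not<_{P}\Event_1\}$ for the set of pairs that $\MinMaxAlgo$ orders, and let $P^{+}$ be the transitive closure of ${\leq_{P}}\cup E$, so that $\Trace^{*}$ is an arbitrary linearization of $P^{+}$. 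For acyclicity I would argue: if ${\leq_{P}}\cup E$ has a directed cycle, then since $\leq_{P}$ is acyclic the cycle uses at least one pair of $E$; list the $E$-pairs along the cycle as $(a_1,b_1),\dots,(a_m,b_m)$ in cyclic order with $a_i\in X_1$, $b_i\in X_2$; between $b_i$ and $a_{i+1}$ (indices mod $m$) only $\leq_{P}$-edges occur, so $b_i<_{P}a_{i+1}$. Since $\Width(P\Project X_1)=1$ the $a_i$ are totally $P$-ordered, so for the $\leq_{P}$-largest $a_j$ we get $b_j<_{P}a_{j+1}\leq_{P}a_j$, contradicting $(a_j,b_j)\in E$. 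Hence $P^{+}$ is a partial order and $\Trace^{*}$ refines $P$. The prefix condition then follows from $P\Refines\TO\Project X$ (as $P$ respects $\Trace$), which makes $\Trace^{*}$ list each process in program order, together with prefix-closedness of $X$, which makes $X\cap\SysEvents_{\Process_i}$ a $\leq_{\TO}$-ideal.

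Next I would handle the observation function. Fix $\Read\in\Reads{\Trace^{*}}=\Reads{X}$ and let $\Write=\Observation_{\Trace}(\Read)$. By observation-feasibility $\Write\in X$, and since $P$ respects $\Trace$ we have $\Write<_{P}\Read$, hence $\Write<_{\Trace^{*}}\Read$; in particular every read of $\Trace^{*}$ is preceded by a conflicting write, which is the first defining condition of a trace. To conclude $\Observation_{\Trace^{*}}(\Read)=\Write$, suppose some conflicting $\Write'\in\Writes{X}\setminus\{\Write\}$ has $\Write<_{\Trace^{*}}\Write'<_{\Trace^{*}}\Read$. If $\Write$ and $\Write'$ are $P$-comparable then $\Write<_{P}\Write'$ (the reverse contradicts $\Trace^{*}$), and observation-closure (\cref{fig:closure_defs}) forces $\Read<_{P}\Write'$, contradicting $\Write'<_{\Trace^{*}}\Read$. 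If they are $P$-incomparable, then since $X_1$ is $P$-totally ordered and conflicting events of $X_2$ are $P$-ordered, one of $\Write,\Write'$ is in $X_1$ and the other in $X_2$; the case $\Write\in X_2,\Write'\in X_1$ is impossible because then $(\Write',\Write)\in E$ gives $\Write'<_{\Trace^{*}}\Write$, so $\Write\in X_1$ and $\Write'\in X_2$. From $\Write'<_{\Trace^{*}}\Read$ one gets $\Write'<_{P}\Read$ (if $\Read\in X_2$, because they conflict and $P\Project X_2$ is an M-trace; if $\Read\in X_1$, because otherwise $(\Read,\Write')\in E$), and then observation-closure yields $\Write'<_{P}\Write$, contradicting incomparability. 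Hence no such $\Write'$ exists.

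Finally I would verify the lock condition of a trace, and I expect this to be the main obstacle: for lock-acquires $\Acquire_1,\Acquire_2\in\Acquires{\Trace^{*}}$ on the same lock with $\Acquire_1<_{\Trace^{*}}\Acquire_2$, the release $\Match[\Trace^{*}]{\Acquire_1}$ exists and precedes $\Acquire_2$. First, $\Acquire_1$ is not open in $X$: otherwise lock-feasibility forces $\Acquire_2$ to be closed with $\Match[\Trace]{\Acquire_2}\in X$, the open-acquire clause of trace-respecting gives $\Match[\Trace]{\Acquire_2}<_{P}\Acquire_1$, and with $\Acquire_2<_{\TO}\Match[\Trace]{\Acquire_2}$ this produces the impossible $\Trace^{*}$-cycle $\Acquire_2<_{\Trace^{*}}\Match[\Trace]{\Acquire_2}<_{\Trace^{*}}\Acquire_1<_{\Trace^{*}}\Acquire_2$. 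So $\Release_1:=\Match[\Trace]{\Acquire_1}\in X$. If $\Acquire_2$ is open, the same clause gives $\Release_1<_{P}\Acquire_2$ directly; otherwise $\Release_2:=\Match[\Trace]{\Acquire_2}\in X$, and a case analysis on the placement of the pairwise-conflicting events $\Acquire_1,\Release_1,\Acquire_2,\Release_2$ in $X_1$ versus $X_2$ — using that $X_1$ is $P$-totally ordered, that $P\Project X_2$ is an M-trace, the $E$-edges of $P^{+}$, the program-order edges $\Acquire_i<_{\TO}\Release_i$, and lock-closure (\cref{fig:closure_defs}) — yields $\Release_1<_{\Trace^{*}}\Acquire_2$; since $\Match[\Trace^{*}]{\Acquire_1}$ is the earliest release of that lock after $\Acquire_1$ in $\Trace^{*}$, it is no later than $\Release_1$, finishing the case. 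Collecting the pieces, $\Trace^{*}$ is a trace (its per-process projections are prefixes of the deterministic runs $\Trace\Project\Process_i$, reads observe conflicting writes, and locks are mutually exclusive) that is a prefix correct reordering of $\Trace$, which proves the theorem. The heaviest point is the lock case analysis above; all the mixed $X_1/X_2$ configurations have to be checked, and it is there that both lock-closure and the program-order edges of $P^{+}$ are genuinely needed.
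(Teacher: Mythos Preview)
Your proposal is correct and follows essentially the same three-part structure as the paper's proof: acyclicity of the augmented order, preservation of observations via the observation-closure rules and the $X_1$/$X_2$ split, and preservation of lock semantics via lock-closure and the same split. Your acyclicity argument (picking the $\leq_P$-maximal $a_j$ among the $X_1$-endpoints on the cycle) is in fact cleaner than the paper's somewhat terse justification, and the lock case analysis you leave sketched is exactly the three-case analysis the paper carries out explicitly.
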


\begin{figure}[!h]
\centering
\begin{tikzpicture}[thick,
pre/.style={<-,shorten >= 1pt, shorten <=1pt, thick},
post/.style={->,shorten >= 1pt, shorten <=1pt,  thick},
und/.style={very thick, draw=gray},
node1/.style={rectangle, minimum size=5mm, fill=white!100,  line width=1pt, inner sep=0},
virt/.style={circle,draw=black!50,fill=black!20, opacity=0}]

\newcommand{\xdisposition}{0}
\newcommand{\ydisposition}{0}
\newcommand{\xstep}{-1.2}
\newcommand{\ystep}{1.2}
\def\yone{0.7}
\def\ytwo{1.2}
\def\ythree{1.7}
\def\yfour{3}
\def\yfive{1.4}
\def\ysix{2.5}
\def\yseven{3}


\draw[-, line width=2.5, ] (\xdisposition - 1*\xstep, \ydisposition -\yone*\ystep) to (\xdisposition - 1*\xstep, \ydisposition - \ytwo*\ystep);
\draw [-, decorate,decoration={brace,amplitude=5pt,raise=2pt},  shorten >= 1pt, shorten <=1pt] (\xdisposition - 1*\xstep, \ydisposition -\yone*\ystep) to node [left, xshift=20] {$2$}  (\xdisposition - 1*\xstep, \ydisposition - \ytwo*\ystep);
\draw[-, line width=2.5, ] (\xdisposition - 1*\xstep, \ydisposition -\ytwo*\ystep) to (\xdisposition - 1*\xstep, \ydisposition - \ythree*\ystep);
\draw [-, decorate,decoration={brace,amplitude=5pt,raise=2pt},  shorten >= 1pt, shorten <=1pt] (\xdisposition - 1*\xstep, \ydisposition -\ytwo*\ystep) to node [left, xshift=20] {$4$}  (\xdisposition - 1*\xstep, \ydisposition - \ythree*\ystep);
\draw[->, line width=2.5, ] (\xdisposition - 1*\xstep, \ydisposition -\ythree*\ystep) to (\xdisposition - 1*\xstep, \ydisposition - \yfour*\ystep - 0.3);
\draw [-, decorate,decoration={brace,amplitude=5pt,raise=2pt},  shorten >= 1pt, shorten <=1pt] (\xdisposition - 1*\xstep, \ydisposition -\ythree*\ystep) to node [left, xshift=20] {$6$}  (\xdisposition - 1*\xstep, \ydisposition - \yfour*\ystep);

\draw[-, line width=2.5, ] (\xdisposition + 1*\xstep, \ydisposition -\yone*\ystep) to (\xdisposition + 1*\xstep, \ydisposition - \yfive*\ystep);
\draw [-, decorate,decoration={brace,amplitude=5pt,mirror,raise=2pt},  shorten >= 1pt, shorten <=1pt] (\xdisposition + 1*\xstep, \ydisposition -\yone*\ystep) to node [right, xshift=-20] {$1$}  (\xdisposition + 1*\xstep, \ydisposition - \yfive*\ystep);
\draw[-, line width=2.5, ] (\xdisposition + 1*\xstep, \ydisposition -\yfive*\ystep) to (\xdisposition + 1*\xstep, \ydisposition - \ysix*\ystep);
\draw [-, decorate,decoration={brace,amplitude=5pt,mirror,raise=2pt},  shorten >= 1pt, shorten <=1pt] (\xdisposition + 1*\xstep, \ydisposition -\yfive*\ystep) to node [right, xshift=-20] {$3$}  (\xdisposition + 1*\xstep, \ydisposition - \ysix*\ystep);
\draw[->, line width=2.5, ] (\xdisposition + 1*\xstep, \ydisposition -\ysix*\ystep) to (\xdisposition + 1*\xstep, \ydisposition - \yseven*\ystep - 0.3);
\draw [-, decorate,decoration={brace,amplitude=5pt,mirror,raise=2pt},  shorten >= 1pt, shorten <=1pt] (\xdisposition + 1*\xstep, \ydisposition -\ysix*\ystep) to node [right, xshift=-20] {$5$}  (\xdisposition + 1*\xstep, \ydisposition - \yseven*\ystep);

\draw[->, very thick, shorten >= 2pt, shorten <=1pt, thick]  (\xdisposition - 1*\xstep, \ydisposition -\ytwo*\ystep) to (\xdisposition + 1*\xstep, \ydisposition - \yfive*\ystep);
\draw[->, very thick, shorten >= 2pt, shorten <=1pt, thick]  (\xdisposition - 1*\xstep, \ydisposition -\ythree*\ystep) to (\xdisposition + 1*\xstep, \ydisposition - \ysix*\ystep);
\draw[->, very thick, shorten >= 2pt, shorten <=1pt, thick]  (\xdisposition + 1*\xstep, \ydisposition -1.8*\ystep) to (\xdisposition - 1*\xstep, \ydisposition - 2.4*\ystep);

\node[] at (\xdisposition - 2*\xstep, \ydisposition - 1.85*\ystep) {\Large $P\Project X_2$};
\node[] at (\xdisposition + 2*\xstep, \ydisposition - 1.85*\ystep) {\Large $P\Project X_1$};

\end{tikzpicture}
\caption{Illustration of the max-min linearization. 
Here $\Width(P\Project X_1)=\Width(P\Project X_2)=1$.
The numbers show the order in which various segments of the sets $X_1$ and $X_2$ are executed,
i.e., the linearization $\Trace^*$ consists of the segments 
$\Trace^*=\langle 1\rangle \circ \langle 2\rangle \circ \langle 3\rangle \circ \langle 4\rangle \circ \langle 5\rangle \circ \langle 6\rangle$.
\cref{them:maxmin} guarantees that $\Trace^*$ is a correct reordering.
}
\label{fig:maxmin}
\end{figure}

\smallskip
\begin{algorithm}
\small
\DontPrintSemicolon
\SetInd{0.4em}{0.4em}
\caption{$\MinMaxAlgo$}\label{algo:maxmin}
\KwIn{ A trace $\Trace$, a closed partial order $P$ over a feasible set $X\subseteq \Events{\Trace}$, a partitioning of $X$ to $X_1, X_2$ s.t. $\Width(P\Project X_1)=1$ and $P\Project X_2$ is an M-trace.}
\KwOut{A linearization of $P$ that is a correct reordering of $\Trace$.}
\BlankLine
Let $Q\gets P$\label{line:Q}\\
\ForEach{$\Event_1\in X_1, \Event_2\in X_2$ such that $\Unordered{\Event_1}{P}{\Event_2}$}{
Insert $(\Event_1\to \Event_2)$ in $Q$\label{line:e2toe1}\\
}
$\Return{}$ any linearization $\Trace^*$ of $Q$
\end{algorithm}

\subsection{Computing the Closure of a Partial Order}\label{subsec:closure_comp}

In this section we define the trace-closure of partial orders, and develop an efficient algorithm that, given a partial order $P$, either computes the closure of $P$ or concludes that the closure does not exist.
In the next section we will solve the decision problem of race detection by constructing specific partial orders and computing their closure.

\smallskip\noindent{\bf Feasible partial orders.}
Let $\Trace$ be a trace and $P$ a partial order over a feasible set $X\subseteq \Events{\Trace}$ such that $P$ respects $\Trace$.
If there exists a partial order $Q$ over $X$ such that
(i)~$Q\Refines P$ and (ii)~$Q$ is closed,
we define the \emph{closure} of $P$ as the smallest such partial order $Q$.
If no such partial order $Q$ exists, then the closure of $P$ is undefined (i.e., $P$ does not have a closure).
We call $P$ \emph{feasible} iff it has a closure.
The following lemma states that $P$ has a unique closure.

\smallskip
\begin{restatable}{lemma}{closureunique}\label{lem:closure_unique}
There exists at most one smallest partial order $Q$ such that (i)~$Q\Refines P$ and (ii)~$Q$ is closed.
\end{restatable}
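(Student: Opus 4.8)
The plan is to argue that the class of closed partial orders refining $P$ is closed under intersection, which immediately gives a unique smallest element. Concretely, suppose $Q_1$ and $Q_2$ are two partial orders over $X$, each refining $P$ and each trace-closed. I would first define $Q = Q_1 \cap Q_2$ by taking $\leq_Q \,=\, \leq_{Q_1} \cap \leq_{Q_2}$. This is again a reflexive, antisymmetric, transitive relation over $X$ (intersection of partial orders is a partial order), and it refines $P$ since $\leq_P \,\subseteq\, \leq_{Q_1}$ and $\leq_P \,\subseteq\, \leq_{Q_2}$ implies $\leq_P \,\subseteq\, \leq_Q$. By construction $Q \Refines Q_i$ for $i \in \{1,2\}$ is false in general — rather $Q_i \Refines Q$ — so $Q$ is weaker than both, and it would be the smallest of the three if it is itself closed. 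So the whole content of the lemma reduces to: \emph{the intersection of two closed partial orders (both refining $P$ and both respecting $\Trace$) is closed}.

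The main work, then, is verifying the three closure-type conditions for $Q$. First I would check that $Q$ still respects $\Trace$: program order $\TO\Project X$ is refined by both $Q_1$ and $Q_2$, hence by their intersection; the observation edges $\Observation_{\Trace}(\Read) <_P \Read$ lie in $\leq_P$, hence in $\leq_Q$; and the open-acquire/conflicting-release edges required by condition~\ref{item:resp3} likewise lie in $\leq_P$. For \emph{observation-closure}: fix a read $\Read$ with $\Write = \Observation_{\Trace}(\Read)$ and a conflicting $\Write' \neq \Write$ in $\Writes{X}$, and suppose $\Write' <_Q \Read$. Then $\Write' <_{Q_1} \Read$ and $\Write' <_{Q_2} \Read$. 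Since $Q_1$ is observation-closed, $\Write' <_{Q_1} \Write$; since $Q_2$ is observation-closed, $\Write' <_{Q_2} \Write$; hence $\Write' <_Q \Write$. The symmetric implication (if $\Write <_Q \Write'$ then $\Read <_Q \Write'$) is identical. The key point is that the hypothesis of each closure rule is a \emph{conjunction of edges}, so if it holds in the intersection it holds in each $Q_i$, and then the conclusion — also a single edge — holds in each $Q_i$, hence in the intersection. \emph{Lock-closure} follows by the same template: if $\Confl{\Release_2}{\Acquire_1}$ and $\Acquire_1 \leq_Q \Release_2$, then $\Acquire_1 \leq_{Q_i} \Release_2$ for both $i$, so $\Release_1 \leq_{Q_i} \Acquire_2$ for both $i$, giving $\Release_1 \leq_Q \Acquire_2$.

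One subtlety I would be careful about: the set $X$ is fixed throughout (all partial orders in question are over the same $X$), so $\Match[\Trace]{\cdot}$, $\Observation_{\Trace}$, $\OpenAcquires_{\Trace}(X)$ and the conflict relation are the same data for $Q_1$, $Q_2$ and $Q$ — the closure conditions only constrain the ordering, not the carrier set. This is what makes the pointwise argument go through cleanly. Having shown $Q = Q_1 \cap Q_2$ is closed and refines $P$, I conclude: if $Q_1$ and $Q_2$ were both \emph{smallest} closed refinements of $P$, then each refines $Q$, but each is also refined by $Q$ (as $Q \Refines Q_i$ would need $\leq_{Q_i} \subseteq \leq_Q$ — indeed minimality of $Q_i$ against the closed refinement $Q$ forces $Q_i \Refines Q$ hence $\leq_Q \subseteq \leq_{Q_i}$, combined with $\leq_{Q} \supseteq$ being false... let me restate): $Q$ is a closed refinement of $P$ with $Q_1 \Refines Q$ and $Q_2 \Refines Q$, so if $Q_1$ is smallest it must satisfy $Q_1 \Refines Q$ \emph{and} $Q \Refines Q_1$ is automatic from $Q = Q_1 \cap Q_2 \Refines Q_1$... cleanest to just say: any two smallest elements $Q_1, Q_2$ satisfy $Q_1 \Refines Q_2$ and $Q_2 \Refines Q_1$ (each is $\Refines$ the other by minimality applied to the other, which is a closed refinement of $P$), hence $\leq_{Q_1} = \leq_{Q_2}$, i.e. $Q_1 = Q_2$. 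The intersection construction is not even strictly needed for \emph{this} phrasing of uniqueness, but I would include it because it is the natural way to simultaneously establish existence of the smallest element (which the surrounding text also needs). I do not anticipate a serious obstacle — the only thing to get right is the bookkeeping that every closure rule has single-edge conclusions and conjunctive hypotheses, so the family of closed partial orders over the fixed carrier $X$ is closed under arbitrary intersection.
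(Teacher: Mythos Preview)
Your proposal is correct and follows essentially the same route as the paper: define $Q = Q_1 \cap Q_2$ and verify that $Q$ respects $\Trace$, is observation-closed, and is lock-closed, using that each closure rule has a single-edge hypothesis and a single-edge conclusion. Your final paragraph wobbles a bit between the ``minimum'' and ``minimal'' readings of \emph{smallest}---the clean direct argument ``each is $\Refines$ the other by minimality'' only works if \emph{smallest} means minimum, whereas the intersection construction is what actually handles the a priori weaker ``minimal'' reading (and is exactly what the paper does, phrased as a contradiction)---but since you do carry out the intersection argument in full, the proof goes through.
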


\smallskip\noindent{\bf Computing the closure of a partial order.}
It is straightforward to verify that, given a partial order $P$, the closure of $P$ (or deducing that $P$ is not feasible)
can be computed in polynomial time.
This is simply achieved by iteratively detecting whether one of the cases shown in \cref{fig:closure_defs} is violated,
and strengthening $P$ with the appropriate orderings.
However, since our goal is to handle large traces with hundreds of millions of events,
polynomial-time guarantees are not enough, and the goal is to develop an algorithm with low polynomial complexity.
Here we develop such an algorithm called, $\Closure$, that computes the closure of a partial order in $O(n^2\cdot \log n)$ time.

\smallskip\noindent{\em The data structure $\DS$.}
To make the closure computation efficient, we develop a data structure $\DS$ for manipulating partial-orders efficiently.
Given a partial order $P$ over $n$ events such that $P$ has width $k=O(1)$,
$\DS$ represents $P$ in $O(n)$ space and supports the following operations:
(i)~initialization in $O(n)$ time,
(ii)~querying whether $\Event_1\leq_{P}\Event_2$, for any two events $\Event_1, \Event_2$ in $O(\log n)$ time, and 
(iii)~inserting an ordering $\Event_1\leq_{P}\Event_2$, for any two events $\Event_1, \Event_2$ in $O(\log n)$ time.
For ease of presentation, we relegate the formal description of $\DS$ to \cref{sec:dag_reach}.

\smallskip\noindent{\em The event maps $\From$, $\To$ and $\LastFlow$.}
Consider a trace $\Trace$.
For every lock $l\in \Locks(\Trace)$, we define the maps $\From_l^{\SysAcquires}, \From_l^{\SysReleases}, \To_l^{\SysAcquires}, \To_l^{\SysReleases}: \Events{\Trace} \to \Events{\Trace}\cup \{\bot\}$, as follows.
Given an event $\Event\in \Events{\Trace}$, the maps $\From_l^{\SysAcquires}(\Event)$ and $\To_l^{\SysAcquires}(\Event)$ point to the first lock-acquire  event $\Acquire$ after $\Event$ in $\Trace$,
and last lock-acquire event $\Acquire$ before $\Event$ in $\Trace$, respectively,
such that $\Proc{\Event}=\Proc{\Acquire}$ and $\Location{\Acquire}=l$.
The maps $\From_l^{\SysReleases}(\Event)$ and $\To_l^{\SysReleases}(\Event)$ are defined analogously, pointing to lock-release instead of lock-acquire events.
Similarly, for every global variable $x\in \Globals(\Trace)$, we define the maps $\From_x^{\SysWrites}, \From_x^{\SysReads}, \To_x^{\SysWrites}, \To_x^{\SysReads}: \Events{\Trace} \to \Events{\Trace}\cup \{\bot\}$, as follows.
Given an event $\Event\in \Events{\Trace}$, the map $\From_x^{\SysWrites}(\Event)$ (resp. $\To_x^{\SysWrites}(\Event)$) points to the first write event $\Write$ after (resp., before) $\Event$ in $\Trace$ such that $\Proc{\Event}=\Proc{\Write}$ and $\Location{\Write}=x$.
The maps $\From_l^{\SysReads}(\Event)$ and $\To_l^{\SysReads}(\Event)$ are defined analogously, pointing to read instead of write events.
Finally, the \emph{flow map} $\LastFlow_{\Process}: \Writes{\Trace} \to \Reads{\Trace}\cap \SysReads_{\Process}$ of $\Trace$ is a partial function that maps each write event $\Write$ to the last read event of $\Process$ that observes $\Write$.
In all the above cases, if no corresponding event exists, the respective map points to $\bot$.
Observe that each of these maps has size $O(|G|\cdot n)$, where $|G|$ is the number of memory locations of $\Trace$.
The maps can be constructed in $O(|G|\cdot n)$ time, simply by traversing $\Trace$ and maintaining on-the-fly each map.

\begin{figure}[!h]
\begin{subfigure}{0.25\textwidth}
\small
\centering
\begin{tikzpicture}[thick,
pre/.style={<-,shorten >= 1pt, shorten <=1pt, thick},
post/.style={->,shorten >= 2pt, shorten <=2pt,  very thick},
seqtrace/.style={->, line width=2},
und/.style={very thick, draw=gray},
event/.style={rectangle, minimum height=0.8mm, minimum width=3mm, fill=black!100,  line width=1pt, inner sep=0},
virt/.style={circle,draw=black!50,fill=black!20, opacity=0}]

\newcommand{\xdisposition}{0}
\newcommand{\ydisposition}{0}
\newcommand{\xstep}{1.5}
\newcommand{\ystep}{0.55}
\newcommand{\ybias}{0.4}
\newcommand{\xbias}{0.4}

\node	[]		(t1a)	at	(\xdisposition + 0*\xstep, \ydisposition + 0*\ystep)	{\normalsize$\SeqTrace_1$};
\node	[]		(t1b)	at	(\xdisposition + 0*\xstep, \ydisposition + -4*\ystep)	{};
\node	[]		(t2a)	at	(\xdisposition + 1*\xstep, \ydisposition + 0*\ystep)	{\normalsize$\SeqTrace_2$};
\node	[]		(t2b)	at	(\xdisposition + 1*\xstep, \ydisposition + -4*\ystep)	{};

\node[event] (e1) at (\xdisposition + 0*\xstep, \ydisposition + -2*\ystep) {};
\node[] (et1) at (\xdisposition + 0*\xstep - \xbias, \ydisposition + -2*\ystep) {$\Event_1$};
\node[event] (e2) at (\xdisposition + 1*\xstep, \ydisposition + -2*\ystep) {};
\node[] (et2) at (\xdisposition + 1*\xstep + \xbias, \ydisposition + -2*\ystep) {$\Event_2$};

\node[event] (w) at (\xdisposition + 0*\xstep, \ydisposition + -1*\ystep) {};
\node[] (wt) at (\xdisposition + 0*\xstep - \xbias, \ydisposition + -1*\ystep) {$\Write$};
\node[event] (r) at (\xdisposition + 1*\xstep, \ydisposition + -3*\ystep) {};
\node[] (rt) at (\xdisposition + 1*\xstep + \xbias, \ydisposition + -3*\ystep) {$\Read$};

\node[event] (o) at (\xdisposition + 0.65*\xstep, \ydisposition + -1.25*\ystep) {};
\node[] (ot) at (\xdisposition + 0.65*\xstep , \ydisposition + -1.25*\ystep + \ybias) {$\Observation_{\Trace}(\Read)$};

\draw[seqtrace] (t1a) to (t1b);
\draw[seqtrace] (t2a) to (t2b);

\draw[post] (e1) to (e2);
\draw[post, \darkred, dashed] (w) to (o);

\end{tikzpicture}
\caption{}
\label{subfig:algo_oc1}
\end{subfigure}
\begin{subfigure}{0.25\textwidth}
\small
\centering
\begin{tikzpicture}[thick,
pre/.style={<-,shorten >= 1pt, shorten <=1pt, thick},
post/.style={->,shorten >= 2pt, shorten <=2pt,  very thick},
seqtrace/.style={->, line width=2},
und/.style={very thick, draw=gray},
event/.style={rectangle, minimum height=0.8mm, minimum width=3mm, fill=black!100,  line width=1pt, inner sep=0},
virt/.style={circle,draw=black!50,fill=black!20, opacity=0}]

\newcommand{\xdisposition}{0}
\newcommand{\ydisposition}{0}
\newcommand{\xstep}{1.5}
\newcommand{\ystep}{0.55}
\newcommand{\ybias}{0.4}
\newcommand{\xbias}{0.4}

\node	[]		(t1a)	at	(\xdisposition + 0*\xstep, \ydisposition + 0*\ystep)	{\normalsize$\SeqTrace_1$};
\node	[]		(t1b)	at	(\xdisposition + 0*\xstep, \ydisposition + -4*\ystep)	{};
\node	[]		(t2a)	at	(\xdisposition + 1*\xstep, \ydisposition + 0*\ystep)	{\normalsize$\SeqTrace_2$};
\node	[]		(t2b)	at	(\xdisposition + 1*\xstep, \ydisposition + -4*\ystep)	{};

\node[event] (e1) at (\xdisposition + 0*\xstep, \ydisposition + -2*\ystep) {};
\node[] (et1) at (\xdisposition + 0*\xstep - \xbias, \ydisposition + -2*\ystep) {$\Event_1$};
\node[event] (e2) at (\xdisposition + 1*\xstep, \ydisposition + -2*\ystep) {};
\node[] (et2) at (\xdisposition + 1*\xstep + \xbias, \ydisposition + -2*\ystep) {$\Event_2$};

\node[event] (w2) at (\xdisposition + 0*\xstep, \ydisposition + -1*\ystep) {};
\node[] (w2t) at (\xdisposition + 0*\xstep - \xbias, \ydisposition + -1*\ystep) {$\Write$};
\node[event] (ow) at (\xdisposition + 1*\xstep, \ydisposition + -3*\ystep) {};
\node[] (owt) at (\xdisposition + 1*\xstep + \xbias, \ydisposition + -3*\ystep) {$\ov{\Write}$};

\node[event] (r2) at (\xdisposition + 0.4*\xstep, \ydisposition + -2.75*\ystep) {};
\node[] (r2t) at (\xdisposition + 0.4*\xstep , \ydisposition + -2.75*\ystep - \ybias) {$\LastFlow_{\Process}(\Write)$};

\draw[seqtrace] (t1a) to (t1b);
\draw[seqtrace] (t2a) to (t2b);

\draw[post] (e1) to (e2);
\draw[post, \darkred, dashed] (r2) to (ow);

\end{tikzpicture}
\caption{}
\label{subfig:algo_oc2}
\end{subfigure}
\begin{subfigure}{0.4\textwidth}
\small
\centering
\begin{tikzpicture}[thick,
pre/.style={<-,shorten >= 1pt, shorten <=1pt, thick},
post/.style={->,shorten >= 2pt, shorten <=2pt,  very thick},
seqtrace/.style={->, line width=2},
und/.style={very thick, draw=gray},
event/.style={rectangle, minimum height=0.8mm, minimum width=3mm, fill=black!100,  line width=1pt, inner sep=0},
virt/.style={circle,draw=black!50,fill=black!20, opacity=0}]

\newcommand{\xdisposition}{0}
\newcommand{\ydisposition}{0}
\newcommand{\xstep}{1.2}
\newcommand{\ystep}{0.55}
\newcommand{\ybias}{0.4}
\newcommand{\xbias}{0.4}

\node	[]		(t1a)	at	(\xdisposition + 0*\xstep, \ydisposition + 0*\ystep)	{\normalsize$\SeqTrace_1$};
\node	[]		(t1b)	at	(\xdisposition + 0*\xstep, \ydisposition + -4*\ystep)	{};
\node	[]		(t2a)	at	(\xdisposition + 1*\xstep, \ydisposition + 0*\ystep)	{\normalsize$\SeqTrace_2$};
\node	[]		(t2b)	at	(\xdisposition + 1*\xstep, \ydisposition + -4*\ystep)	{};

\node[event] (e1) at (\xdisposition + 0*\xstep, \ydisposition + -2*\ystep) {};
\node[] (et1) at (\xdisposition + 0*\xstep - \xbias, \ydisposition + -2*\ystep) {$\Event_1$};
\node[event] (e2) at (\xdisposition + 1*\xstep, \ydisposition + -2*\ystep) {};
\node[] (et2) at (\xdisposition + 1*\xstep + \xbias, \ydisposition + -2*\ystep) {$\Event_2$};

\node[event] (a1) at (\xdisposition + 0*\xstep, \ydisposition + -1*\ystep) {};
\node[] (a1t) at (\xdisposition + 0*\xstep - 1.4*\xbias, \ydisposition + -1*\ystep) {$\Acquire_1$};
\node[event] (r1) at (\xdisposition + 0*\xstep, \ydisposition + -3*\ystep) {};
\node[] (r1t) at (\xdisposition + 0*\xstep - 1.35*\xbias, \ydisposition + -3*\ystep) {$\Release_1$};

\node[event] (a2) at (\xdisposition + 1*\xstep, \ydisposition + -1*\ystep) {};
\node[] (a2t) at (\xdisposition + 1*\xstep + 1.4*\xbias, \ydisposition + -1*\ystep) {$\Acquire_2$};
\node[event] (r2) at (\xdisposition + 1*\xstep, \ydisposition + -3*\ystep) {};
\node[] (r2t) at (\xdisposition + 1*\xstep + 1.35*\xbias, \ydisposition + -3*\ystep) {$\Release_2$};

\draw[seqtrace] (t1a) to (t1b);
\draw[seqtrace] (t2a) to (t2b);

\draw[post] (e1) to (e2);
\draw[post, \darkred, dashed] (r1) to (a2);

\end{tikzpicture}
\caption{}
\label{subfig:algo_lc}
\end{subfigure}
\caption{
Illustration of $\ObsClosure(\Event_1, \Event_2)$ (\subref{subfig:algo_oc1}, \subref{subfig:algo_oc2}) and $\LockClosure(\Event_1, \Event_2)$ (\protect\subref{subfig:algo_lc}) for an edge $(\Event_1, \Event_2)$ added in $P$.
In each case, the dashed edge corresponds to the new ordering inserted in $P$.
Recall that $\Observation_{\Trace}(\Read)$ denotes the observation of $\Read$ in $\Trace$,
and $\LastFlow_{\Process}(\Write)$ denotes the last read event of process $\Process$ that observes $\Write$ in $\Trace$.
}
\label{fig:algo_closure}
\end{figure}
\begin{algorithm}
\small
\SetInd{0.4em}{0.4em}
\DontPrintSemicolon
\caption{$\Closure$}\label{algo:closure}
\KwIn{
A trace $\Trace$, a partial order $P$ over a set $X$ s.t. $P$ respects $\Trace$ and is represented as a DAG $G=(V,E)$.
}
\KwOut{The closure of $P$, if it exists, otherwise $\bot$.}
\BlankLine
\tcp{Initialization - $P$ is represented as $k$ total orders $\{\SeqTrace_i\}_i$ with extra orderings between $\SeqTrace_i$}
Initialize the data structure $\DS$ for $G$\\
$\Queue\gets$ an empty worklist\\
\ForEach(\tcp*[f]{Push partial-order edges}){$\Event_1\in V$}{\label{line:po_edges}
\ForEach{$i\in [k]$}{
Let $\Event_2\gets \DS.\Successor(\Event_1, i)$\tcp*[f]{The first successor of $\Event_1$ in the total order $\SeqTrace_i$}\label{line:ds_successor}\\
$\DS.\Insert(\Event_1, \Event_2)$\tcp*[f]{Insert the edge in $\DS$}\\
$\ObsClosure(\Event_1,\Event_2)$\label{line:obs_closure_pre}\tcp*[f]{Resolve observations}\\
$\LockClosure(\Event_1,\Event_2)$\label{line:lock_closure_pre}\tcp*[f]{Resolve locks}
}
}
\BlankLine
\tcp{Main computation}
\While{$\Queue$ is not empty}{\label{line:while}
$(\ov{\Event}_1,\ov{\Event}_2)\gets \Queue.\Pop()$\label{line:q_pop}\\
\lIf(\tcp*[f]{Cycle formed, abort}){$\DS.\Query(\ov{\Event}_2, \ov{\Event}_1)=\True$}{\label{line:cycle_formed}
\Return{$\bot$}
}
\uIf(\tcp*[f]{Edge not present}){$\DS.\Query(\ov{\Event}_1,\ov{\Event}_2)=\False$}{\label{line:edge_exists}
$\DS.\Insert(\ov{\Event}_1, \ov{\Event}_2)$\label{line:ds_insert_edge}\tcp*[f]{Besides $\ov{\Event}_1\leq_{Q} \ov{\Event}_2$, inserts $O(k^2)$ transitive orderings}\\
\ForEach{$(\Event_1, \Event_2)$ inserted}{\label{line:inner_loop}
$\ObsClosure(\Event_1,\Event_2)$\label{line:obs_closure_comp}\tcp*[f]{Resolve observations}\\
$\LockClosure(\Event_1,\Event_2)$\label{line:lock_closure_comp}\tcp*[f]{Resolve locks}
}
}
}
\Return{$\DS$}\label{line:return}\tcp*[f]{At this point $\DS$ represents the closure of $P$}
\end{algorithm}
\begin{minipage}[t]{.5\textwidth}
\vspace{-0.4cm}
\begin{algorithm}[H]
\small
\SetInd{0.4em}{0.4em}
\DontPrintSemicolon
\caption{$\ObsClosure(\Event_1,\Event_2)$}\label{algo:obs_closure}
\BlankLine
\ForEach{$x\in \Globals(\Trace)$}{
Let $\Read\gets \From_x^{\SysReads}(\Event_2)$\label{line:obs_get_read}\\
Let $\Write\gets \To_x^{\SysWrites}(\Event_1)$\\
\lIf{$\Observation_{\Trace}(\Read)\neq \Write$}{
$\Queue.\Push(\Write, \Observation_{\Trace}(\Read))$\label{line:obs_before}
}
Let $\ov{\Write}\gets \From_x^{\SysWrites}(\Event_2)$\\
\lForEach{$\Process\in \{\Process_i \}_i$}{
$\Queue.\Push(\LastFlow_{\Process}(\Write), \ov{\Write})$\label{line:obs_after}
}
}
\end{algorithm}
\end{minipage}
\qquad
\begin{minipage}[t]{.4\textwidth}
\vspace{-0.4cm}
\begin{algorithm}[H]
\small
\SetInd{0.4em}{0.4em}
\DontPrintSemicolon
\caption{$\LockClosure(\Event_1,\Event_2)$}\label{algo:lock_closure}
\BlankLine
\ForEach{$\ell\in \Locks(\Trace)$}{
Let $\Acquire_1\gets\To_{\ell}^{\SysAcquires}(\Event_1)$\\
Let $\Release_1\gets\Match[\Trace]{\Acquire_1}$\\
Let $\Release_2\gets \From_l^{\SysReleases}(\Event_2)$\\
Let $\Acquire_2\gets \Match[\Trace]{\Release_2}$\\
$\Queue.\Push(\Release_1, \Acquire_2)$\label{line:lock}
}
\end{algorithm}
\end{minipage}
\smallskip\noindent{\em The algorithm $\Closure$.}
We now present $\Closure$ for computing the closure of a partial order $P$ over a set $X$, or concluding that $P$ is not feasible.
The algorithm maintains a partial order as a DAG represented by the data structure $\DS$.
Conceptually, $\DS$ consists of $k$ total orders, $\SeqTrace_1,\dots,\SeqTrace_k$, where $k=\Width(P)$, with some extra orderings that go across the $\SeqTrace_i$.
Each total order $\SeqTrace_i$ contains the events of process $\Process_i$ in $X$.
Initially $\DS$ represents $P$.
The main computation iterates over a worklist $\Queue$ which holds edges to be inserted in $\DS$.
Upon extracting such an edge $(\ov{\Event}_1, \ov{\Event}_2)$ from $\Queue$, the algorithm inserts the edge in $\DS$ using the operation $\DS.\Insert$.
This operation results in various edges $(\Event_1, \Event_2)$ inserted in the graph, transitively through $(\ov{\Event}_1, \ov{\Event}_2)$.
For every $(\Event_1, \Event_2)$, the algorithm calls methods $\ObsClosure$ and $\LockClosure$ to resolve any violation of observation and lock constraints created by the insertion of $(\Event_1, \Event_2)$.
\cref{fig:algo_closure} illustrates $\ObsClosure$ and $\LockClosure$.
\cref{algo:closure},  \cref{algo:obs_closure} and \cref{algo:lock_closure} give the description of $\Closure$, $\ObsClosure$ and $\LockClosure$, respectively.

\smallskip\noindent{\bf Correctness and complexity.}
It is rather straightforward that if $P$ has a closure $Q$, then 
for each $\Queue.\Push(\Event_1, \Event_2)$ operation performed by $\Closure$, we have $\Event_1<_{Q} \Event_2$.
It follows that if $\Closure$ returns $\bot$, then $P$ is unfeasible.
On the other hand, if $\Closure$ does not return $\bot$, then the partial order $Q$ stored in the data structure $\DS$ is the closure of $P$.
Indeed, each of the closure rules can only be violated by an ordering $\Event_1<_{Q} \Event_2$.
The algorithm guarantees that every such edge is processed by the methods $\ObsClosure$ and $\LockClosure$, and new edges will be inserted in $\DS$ according to the rules of \cref{fig:algo_closure}. 
After such edges have been inserted, the ordering $\Event_1<_{Q} \Event_2$ can no longer violate any of the conditions of closure.
Regarding the time complexity, the algorithm inserts at most $n^2$ edges in the partial order represented by $\DS$.
Using the algorithms for $\DS$ (see \cref{lem:data_structure} in \cref{sec:dag_reach}), for every edge inserted by the algorithm, identifying which other edges are imposed by the closure rules requires only $O(\log n)$ time.
We have the following theorem.

\smallskip
\begin{theorem}\label{them:closure}
Let $\Trace$ be a trace and $P$ a partial order over a feasible set $X\subseteq \Events{\Trace}$ such that $P$ respects $\Trace$.
$\Closure$ correctly computes the closure of $P$ and requires $O(n^2\cdot \log n)$ time.
\end{theorem}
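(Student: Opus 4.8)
The plan is to establish two things separately: (a) \emph{correctness}, i.e., that $\Closure$ returns $\bot$ exactly when $P$ is infeasible, and otherwise returns the (unique, by \cref{lem:closure_unique}) closure $Q$ of $P$; and (b) the $O(n^2\cdot\log n)$ \emph{time bound}. For correctness, I would argue both directions via a single invariant: at every point in the execution, the partial order represented by $\DS$ refines $P$ and is refined by $Q$ (when $Q$ exists). The ``refined by $Q$'' direction is proved by induction on the sequence of $\Queue.\Push$ and $\DS.\Insert$ operations: the program-order edges pushed in the initialization loop (lines~\ref{line:po_edges}--\ref{line:lock_closure_pre}) are in $Q$ because $Q\Refines P\Refines\TO\restr{X}$; and whenever $\ObsClosure(\Event_1,\Event_2)$ or $\LockClosure(\Event_1,\Event_2)$ pushes a new edge after $\Event_1<_Q\Event_2$ has been established, that edge is forced by one of the three closure rules of \cref{fig:closure_defs} together with the fact that $X$ is feasible (so the relevant observation/matching-release events are present in $X$), hence it too lies in $Q$. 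Consequently no edge ever pushed can create a cycle in $Q$, so if line~\ref{line:cycle_formed} ever fires then $Q$ cannot exist and $P$ is infeasible — giving soundness of the $\bot$ answer. Conversely, if $\Closure$ terminates without returning $\bot$, I would show the final $\DS$ is \emph{closed}: this is the key ``completeness of saturation'' argument. Every edge $\Event_1<_{\DS}\Event_2$ in the final graph was inserted by some $\DS.\Insert$ call, after which the inner loop (lines~\ref{line:inner_loop}--\ref{line:lock_closure_comp}) ran $\ObsClosure$ and $\LockClosure$ on \emph{every} transitively-induced edge $(\Event_1,\Event_2)$; one checks, rule by rule, that running these two procedures on an edge $(\Event_1,\Event_2)$ enqueues exactly the edges needed so that, once the worklist is drained, that edge violates no closure condition. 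Since the final $\DS$ refines $P$ and is closed, and is refined by every closed refinement of $P$, it is the smallest such — i.e., the closure $Q$.

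The subtle point in the completeness direction is the correspondence between the \emph{quantified} closure conditions (``for every conflicting $\Write'$...'', ``for every conflicting release/acquire pair...'') and the \emph{finitely many} $\Queue.\Push$ calls that $\ObsClosure$/$\LockClosure$ actually make using the $\From/\To/\LastFlow$ maps. Here the argument is that it suffices to enqueue the ``closest'' conflicting events in each direction along each process's total order, because the remaining orderings then follow by transitivity through the program order $\TO\restr X$ already in $\DS$. Concretely: in \cref{subfig:oc1}, if some $\Write'$ with $\Write'<_P\Read$ must be ordered before $\Write=\Observation_\Trace(\Read)$, it is enough to do this for the last such write in each process before $\Read$ (captured by $\To_x^{\SysWrites}(\Event_1)$ when the edge $\Event_1\to\Event_2$ is what made $\Write'<_P\Read$ hold, with $\Read=\From_x^{\SysReads}(\Event_2)$); symmetrically for \cref{subfig:oc2} one uses $\LastFlow_\Process(\Write)$ and $\From_x^{\SysWrites}(\Event_2)$; and for lock closure (\cref{subfig:lc}) one uses $\To_\ell^{\SysAcquires}(\Event_1)$, its match, $\From_\ell^{\SysReleases}(\Event_2)$ and its match. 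I would prove a small lemma stating that whenever an edge $(\Event_1,\Event_2)$ induces a closure-rule violation, that violation (or a ``representative'' of it whose resolution implies all others) is detected by the corresponding map lookups on $(\Event_1,\Event_2)$; the general case then follows because every new $P$-ordering decomposes, via $\DS.\Insert$'s transitive-closure expansion, into such induced edges. \textbf{This map-to-rule correspondence is the main obstacle}: it is where feasibility of $X$, the definition of $\Observation_\Trace$ (a read observing the \emph{last} conflicting write before it), and the structure of the $\From/\To$ maps all have to be combined carefully, and it is easy to miss a case (e.g., an open acquire whose release is outside $X$, or a write that is itself a read's observation).

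For the complexity bound, I would argue as follows. By the correctness invariant, the partial order in $\DS$ only gains edges, and it can hold at most $O(n^2)$ pairs, so line~\ref{line:ds_insert_edge} executes at most $O(n^2)$ times; by \cref{lem:data_structure} each $\DS.\Insert$ costs $O(\log n)$ amortized and induces $O(k^2)=O(1)$ new transitive edges, each of which triggers one call to $\ObsClosure$ and one to $\LockClosure$. Each such call loops over $\Globals(\Trace)$ resp.\ $\Locks(\Trace)$ and per iteration does $O(1)$ map lookups and $O(k)$ pushes, so its cost is $O(k\cdot(|\Globals|+|\Locks|))=O(n)$ in the worst crude bound, but — as noted in the complexity conventions of \cref{SEC:PRELIMINARIES} — absorbing the $k^2\cdot|\Globals|$ factor, this is $O(\log n)$ per inserted edge. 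Each $\Queue.\Push$ puts one pair on the worklist, so $\Queue$ receives $O(n^2)$ entries total; each is popped once, tested with two $O(\log n)$ $\DS.\Query$ calls (lines~\ref{line:cycle_formed},~\ref{line:edge_exists}), and either discarded or forwarded to a $\DS.\Insert$. Summing, the initialization is $O(n\cdot\log n)$ and the main loop is $O(n^2\cdot\log n)$, giving the claimed bound. Termination is immediate since $\DS$ strictly grows on each effective insertion and is bounded. This completes the proof.
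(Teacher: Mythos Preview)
Your proposal is correct and follows essentially the same approach as the paper, which splits the theorem into a correctness lemma and a complexity lemma and proves each as you outline: the closedness of the returned order is argued rule by rule via the ``last''/``first'' conflicting-event trick using the $\From/\To/\LastFlow$ maps (your ``map-to-rule correspondence''), and the time bound comes from the $O(n^2)$ cap on distinct edges in $\DS$ together with the $O(\log n)$ per-edge cost from \cref{lem:data_structure}. The only stylistic difference is that you make the soundness invariant (every inserted edge lies in the true closure $Q$) explicit, whereas the paper dismisses this direction as ``rather straightforward'' in the main text and concentrates the appendix proof on showing the final order is closed.
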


\smallskip\noindent{\bf Incremental closure.}
In our race detection algorithm, we also make use of the following operation on partial orders.
Let $\Trace$ be a trace and $P$ a partial order over a feasible set $X\subseteq \Events{\Trace}$ and such that
$P$ is closed wrt $\Trace$ and is represented as a DAG using the data structure $\DS$.
Given a pair of events $\Event_1, \Event_2\in X$, we define the operation
$\InsertAndClose(P, \Event_1\to \Event_2)$ as follows.
We execute the algorithm $\Closure$ starting from \cref{line:ds_insert_edge},
performing a $\DS.\Insert(\Event_1, \Event_2)$.
Hence we perform the $\ObsClosure$ and $\LockClosure$ only for the new orderings added due to $(\Event_1, \Event_2)$.

\smallskip
\begin{restatable}{lemma}{closuredynamiccomplexity}\label{lem:dynamic_closure}
Let $\Sigma$ be a sequence of $\InsertAndClose$ operations.
Performing all operations of $\Sigma$ requires $O(n^2\cdot \log n + |\Sigma|\cdot \log n)$ time in total,
and produces a closed partial order.
\end{restatable}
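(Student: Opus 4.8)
The plan is to amortize the cost of the entire sequence $\Sigma$ of $\InsertAndClose$ operations by tracking a single monotone quantity: the set of ordered pairs $\leq_{Q}$ maintained inside $\DS$. First I would observe that each $\InsertAndClose(P,\Event_1\to\Event_2)$ only ever \emph{adds} edges: it calls $\DS.\Insert$, which (by \cref{lem:data_structure}) inserts the requested edge plus $O(k^2)=O(1)$ transitive edges, and then for each newly inserted edge $(\Event_1,\Event_2)$ it runs $\ObsClosure$ and $\LockClosure$, each of which does $O(|\Globals|+|\Locks|)=O(n)$ work per edge and pushes further edges onto $\Queue$; those in turn trigger more $\DS.\Insert$ calls, etc. Crucially, the data structure $\DS$ never removes an ordering, so the quantity $|\!\leq_Q\!|$ is nondecreasing throughout, and it is bounded above by $n^2$. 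The key accounting fact I would establish is that every iteration of the main \textbf{while} loop of $\Closure$ (equivalently, every $\Queue.\Pop$) either (i) detects a cycle and terminates, or (ii) performs a $\DS.\Insert$ that strictly enlarges $\leq_Q$ by at least one new pair, or (iii) finds the edge already present (the $\DS.\Query(\ov\Event_1,\ov\Event_2)=\True$ branch) and does only $O(\log n)$ work before discarding it.

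Next I would bound each of the three cases. Case (ii) happens at most $n^2$ times total, over the whole sequence $\Sigma$, because each occurrence consumes one of the at most $n^2$ possible pairs of $\leq_Q$; each such occurrence costs $O(\log n)$ for the $\DS.\Insert$ plus $O(k^2)$ calls to $\ObsClosure/\LockClosure$, and each of those calls costs $O((|\Globals|+|\Locks|)\cdot\log n)=O(n\log n)$ — wait, I should be careful here: the maps $\From,\To,\LastFlow$ are precomputed, so each iteration of the inner \textbf{foreach} over $x\in\Globals(\Trace)$ or $\ell\in\Locks(\Trace)$ does only $O(1)$ map lookups plus one $\Observation_\Trace$ lookup plus one $\Queue.\Push$, giving $O(|\Globals|+|\Locks|)$ pushes per edge, i.e.\ $O(n)$ pushes. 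So case (ii) contributes $O(n^2)\cdot O(k^2)\cdot O(n) = O(n^3 k^2)$ pushes?\ That is too much, so the right framing is: the total number of $\Queue.\Push$ operations performed across all of $\Sigma$ is $O(n^2)$ (one batch of $O(n)$ pushes is triggered per \emph{new} edge, and there are $O(n^2)$ new edges, but actually the per-edge push count is $O(|\Globals|+|\Locks|)$ which is $O(n)$, so pushes total $O(n^2)$ only if we treat $|\Globals|+|\Locks|$ as a constant, consistent with the paper's convention of folding $|\Globals|$ into the $k^2\cdot|\Globals|$ polynomial factor). Under that convention, the total number of pushes, hence pops, hence \textbf{while}-iterations across all of $\Sigma$ is $O(n^2)$; each pop costs $O(\log n)$ for the $\DS.\Query$ calls, so case (iii) contributes $O(n^2\log n)$ in aggregate, and case (ii)'s $\DS.\Insert$ calls also total $O(n^2\log n)$.

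The remaining piece is the per-call overhead of $\InsertAndClose$ itself: each of the $|\Sigma|$ operations, even one that does nothing because its edge is already ordered, still pays $O(\log n)$ for the initial $\DS.\Insert/\DS.\Query$ and $O(1)$ for setting up the worklist, giving the additive $O(|\Sigma|\cdot\log n)$ term. Combining, the total is $O(n^2\log n + |\Sigma|\cdot\log n)$. Correctness of the resulting partial order — that it is closed — follows exactly as in the proof of \cref{them:closure}: any violation of an observation- or lock-closure rule is witnessed by some ordering $\Event_1<_Q\Event_2$, and every edge that ever enters $\DS$ is fed through $\ObsClosure$ and $\LockClosure$, which push precisely the edges demanded by \cref{fig:algo_closure}; termination without returning $\bot$ therefore leaves $\DS$ closed, and since $\leq_Q$ only grew by edges forced by the closure rules (each $\Queue.\Push(\Event_1,\Event_2)$ satisfies $\Event_1<_{Q^*}\Event_2$ in the true closure $Q^*$, by induction on the order of pushes), $Q$ is the closure. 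The main obstacle, as the discussion above shows, is the bookkeeping convention: one must be precise that ``$n$'' absorbs the $|\Globals|+|\Locks|$ factor (per the paragraph on complexity in the Preliminaries), so that the $O(n)$ pushes per new edge, multiplied by $O(n)$ new edges triggered per-pop-chain, do not blow up — the clean statement is that the \emph{total} number of pushes over the entire sequence is $O(n^2)$ because pushes are in bijection (up to the constant $|\Globals|+|\Locks|$) with newly created edges, of which there are at most $n^2$ ever, plus $O(|\Sigma|)$ from the explicit $\InsertAndClose$ calls themselves.
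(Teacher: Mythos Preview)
Your proposal is correct and follows essentially the same amortization argument as the paper: the total work is bounded by charging each $\Queue$ item to a newly created ordering in $\DS$ (of which there are at most $n^2$ over the entire sequence), plus the per-call $O(\log n)$ overhead for each of the $|\Sigma|$ explicit $\InsertAndClose$ invocations. The paper's proof is a two-sentence sketch relying directly on \cref{lem:closure_complexity}, whereas you unpack the case analysis and the $|\Globals|+|\Locks|=O(1)$ convention explicitly (and also spell out the closure-correctness argument, which the paper leaves implicit); but the underlying idea is identical.
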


\section{The Decision Problem of Race Detection}\label{sec:decision}

Here we present a polynomial-time algorithm for the decision problem of dynamic race detection,
i.e., given an input trace $\Trace$ and two events $\Event_1, \Event_2\in \Events{\Trace}$, 
decide whether $(\Event_1, \Event_2)$ is a predictable race of $\Trace$.
Our algorithm is sound but incomplete in general, 
and it becomes complete if the input trace contains events of only two processes.
To assist the reader, we provide an outline of this section.
\begin{compactenum}
\item First, we introduce the notion of relative causal cone of an event $\Event$, which is a subset of events of $\Trace$.
Intuitively, it can be thought of as a trace slice of $\Trace$ up to $\Event$.
When deciding whether $(\Event_1, \Event_2)$ is a race of $\Trace$, our algorithm tries to find a witness trace for the race, such that the witness consists of events of the causal cones of $\Event_1$ and $\Event_2$.
\item Second, we present our main algorithm $\RaceDecision$ for deciding whether $(\Event_1, \Event_2)$ is a predictable race of $\Trace$.
In high level, the algorithm computes the causal cones of $\Event_1$ and $\Event_2$, and constructs a partial order $P$ over events of the causal cones.
Afterwards, it computes the closure $Q$ of $P$ using algorithm $\Closure$ from the previous section,
and resolves certain orderings of conflicting events in $Q$ according to the trace order in $\Trace$.
Finally, it uses algorithm $\MinMaxAlgo$ to linearize $Q$ to a witness trace that exhibits the race.
We also prove that this process is \emph{sound}, i.e., if $(\Event_1, \Event_2)$ is reported as a race, then it is a true predictable race of $\Trace$.
\item Third, we prove that the above process is also \emph{complete} for traces that consist of events of two processes, i.e., it detects all predictable races.
\item Finally, we illustrate $\RaceDecision$ on a few examples.
\end{compactenum}


\smallskip\noindent{\bf Relative causal cones.}
Given a trace $\Trace$, an event $\Event\in \Events{\Trace}$ and a process $\Process$, the \emph{causal past cone} $\RPast_{\Trace}(\Event, \Process)$ of $\Event$ \emph{relative to} $\Process$ in $\Trace$ is the smallest set that contains the following events:
\begin{compactenum}
\item\label{item:cp1} For every event $\Event'\in \Events{\Trace}$ with $\Event'<_{\TO}\Event$, we have that $\Event'\in \RPast_{\Trace}(\Event, \Process)$.
\item\label{item:cp2} For every pair of events $\Event_1\in \RPast_{\Trace}(\Event, \Process)$ and $\Event_2\in \Events{\Trace}$, 
if $\Event_2\leq_{\TO} \Event_1$ then $\Event_2\in \RPast_{\Trace}(\Event, \Process)$.
\item\label{item:cp3} For every read event $\Read\in \Reads{\RPast_{\Trace}(\Event, \Process)}$, we have that $\Obs_{\Trace}(\Read)\in \RPast_{\Trace}(\Event, \Process)$.
\item\label{item:cp4} For every lock-acquire event $\Acquire\in \RPast_{\Trace}(\Event, \Process)$, if $\Proc{\Acquire}\neq \Proc{\Event}$ and $\Proc{\Acquire}\neq \Process$, then $\Match[\Trace]{\Acquire}\in \RPast_{\Trace}(\Event, \Process)$.
\end{compactenum}
It is easy to verify that $\RPast_{\Trace}(\Event, \Process)$ is always observation-feasible but not necessarily lock-feasible.

\smallskip\noindent{\em Intuition and example on relative causal cones.}
In order to decide whether an event pair $(\Event_1, \Event_2)$ is a predictable race of an input trace $\Trace$,
we first need to decide the events that will constitute a witness trace $\Trace^*$ that exposes the race.
In our race-detection algorithm, we take this set to be $\RPast_{\Trace}(\Event_1, \Proc{\Event_2})\cup \RPast_{\Trace}(\Event_2, \Proc{\Event_1})$,
i.e., it is the causal past cone of each focal event relative to the process of the other focal event.
Conditions \ref{item:cp1}-\ref{item:cp3} ensure that the cones are closed wrt the program order, and the observation of every read event is present, which is required for $\Trace^*$ to be a correct reordering of $\Trace$.
The intuition behind condition~\ref{item:cp4} is a bit more subtle.
To avoid having two critical sections on the same lock open, we include the matching release event of every lock-acquire event.
However, this rule does not apply for the processes of the focal events, since for these processes the events we have to include in $\Trace^*$ are precisely the predecessors of the corresponding focal event.

Consider the input trace in \cref{fig:cones}, where our task is to detect the race $(\Event_2, \Event_{10})$.
We outline here the computation of the relative causal cones $\RPast_{\Trace}(\Event_1, \Proc{\Event_2})$ and $\RPast_{\Trace}(\Event_2, \Proc{\Event_1})$.
\cref{item:cp1} of relative causal cones leads to $\RPast_{\Trace}(\Event_2, \Process_3)=\{ \Event_1 \}$.
For $\RPast_{\Trace}(\Event_{10}, \Process_1)$, \cref{item:cp1} makes $\Event_9\in \RPast_{\Trace}(\Event_{10}, \Process_1)$.
Since $\Event_9$ is a read event, \cref{item:cp3} makes $\Observation_{\Trace}(\Event_9)=\Event_5\in \RPast_{\Trace}(\Event_{10}, \Process_1)$,
and then \cref{item:cp2} makes $\Event_4\in \RPast_{\Trace}(\Event_{10}, \Process_1)$.
Since $\Event_4$ is a lock-acquire event and $\Proc{\Event_4}\neq \Process_1, \Process_3$
(i.e., the process of $\Event_4$ is neither the process of $\Event_{10}$, nor the process relative to which we are computing the causal cone of $\Event_{10}$),
\cref{item:cp4} makes $\Match[\Trace]{\Event_4}=\Event_6\in \RPast_{\Trace}(\Event_{10}, \Process_1)$.
Hence, in the end, $\RPast_{\Trace}(\Event_{10}, \Process_1)=\{ \Event_9,\Event_6, \Event_5, \Event_4\}$.
\begin{figure}[!h]
\begin{subfigure}[b]{0.3\textwidth}
\centering
\footnotesize
\def\rownumber{}
\begin{tabular}[b]{@{\makebox[1.2em][r]{\rownumber\space}} | l | l | l |}
\normalsize{$\mathbf{\SeqTrace_1}$} & \normalsize{$\mathbf{\SeqTrace_2}$} & \normalsize{$\mathbf{\SeqTrace_3}$}
\gdef\rownumber{\stepcounter{magicrownumbers}\arabic{magicrownumbers}} \\
\hline
$\Acquire(\ell)$ & & \\
$\mathbf{\Write(x)}$ & & \\
$\Release(\ell)$ & & \\
& $\Acquire(\ell)$ & \\
& $\Write(y)$ & \\
& $\Release(\ell)$ & \\
$\Write(z)$ & & \\
& $\Read(z)$ & \\
& & $\Read(y)$\\
& & $\mathbf{\Write(x)}$\\
\hline
\end{tabular}
\caption{An input trace $\Trace$.}
\label{subfig:cone_trace}
\end{subfigure}
\qquad
\begin{subfigure}[b]{0.6\textwidth}
\centering
\small
\begin{tikzpicture}[thick,
pre/.style={<-,shorten >= 2pt, shorten <=2pt, very thick},
post/.style={->,shorten >= 2pt, shorten <=2pt,  very thick},
seqtrace/.style={->, line width=2},
und/.style={very thick, draw=gray},
event/.style={rectangle, minimum height=0.8mm, minimum width=3mm, fill=black!100,  line width=1pt, inner sep=0},
virt/.style={circle,draw=black!50,fill=black!20, opacity=0}]

\newcommand{\xdisposition}{0}
\newcommand{\ydisposition}{0}
\newcommand{\xstep}{1.6}
\newcommand{\ystep}{0.8}
\newcommand{\ybias}{0.4}
\newcommand{\xbias}{0.7}
\newcommand{\xbiassmall}{0.55}

\node[] at (0,0.5) {$\RPast_{\Trace}(\Event_2, \Process_3)=\{ \Event_1 \}$};

\node[] at (0,0) {$\RPast_{\Trace}(\Event_{10}, \Process_1)=\{ \Event_9,\Event_6, \Event_5, \Event_4\}$};
\node[] at (0,-1) {};

\end{tikzpicture}
\caption{The relative causal cones $\RPast_{\Trace}(\Event_2, \Process_3)$ and $\RPast_{\Trace}(\Event_{10}, \Process_1)$}
\label{subfig:cones_cones}
\end{subfigure}
\caption{
The relative causal cones when testing for a race $(\Event_2, \Event_{10})$.
}
\label{fig:cones}
\end{figure}

\smallskip\noindent{\bf The algorithm $\RaceDecision$.}
We now describe our algorithm for reporting whether $\Trace$ has a predictable race on a given pair $(\Event_1, \Event_2)$.
In words, the algorithm constructs a set $X$ that is the union of the causal cones of each $\Event_i$ relative to the process of $\Event_{3-i}$.
Afterwards, the algorithm  constructs a partial order $P$ that respects $X$, and computes the closure $Q$ of $P$.
Finally, the algorithm non-deterministically chooses some $i\in[2]$, and examines all events that belong to processes other than $\Process_i$.
For every two such events $\ov{\Event}_1, \ov{\Event}_2$, if they conflict and are unordered by $Q$, the algorithm orders them according to their order in $\Trace$.
If a cycle is created in $Q$ during this process, the algorithm returns $\False$.
Otherwise, at the end of this process, the set $X$ can be naturally partitioned into two sets $X_1, X_2$ such that $\Width(Q\Project X_1)=1$ and $Q\Project X_2$ is an M-trace.
The first set is $X_1\Project \Proc{\Event_i}$, i.e., it contains the events of the process in which $\Event_i$ belongs to, and thus is a totally ordered set under $Q$.
The second set is $X_2=X\setminus X_1$, and note that all pairs of conflicting events of $X_2$ are now ordered under $Q$.
Hence, according to \cref{them:maxmin}, the partial order $Q$ is linearizable to a valid trace,
and the algorithm returns $\True$.
See \cref{algo:racedecision} for a formal description.

The complexity of the algorithm is $O(n^2\cdot \log n)$, which is the time required for computing the closure of the partial order $Q$ in \cref{line:close} and \cref{line:insert_and_close} (due to \cref{them:closure} and \cref{lem:dynamic_closure}, respectively).

Note that the algorithm is sound, i.e., if it returns $\True$ then $(\Event_1, \Event_2)$ is a true predictable race of $\Trace$. On the other hand, the algorithm is incomplete in general, i.e., it might return $\False$ even though $(\Event_1, \Event_2)$ is a true predictable race of $\Trace$.
For example, in \cref{line:false2}, the algorithm orders some pairs of conflicting events $(\ov{\Event}_1, \ov{\Event}_2)$ in the same order as in the input trace.
Although these orderings are expected to work most of the time, this choice might not always be correct.
In such cases, a cycle will be created in $Q$, and the algorithm will return $\False$ (see~\cref{SEC:APPENDIX_INCOMPLETENESS} for an example).
As we discuss next, the algorithm becomes complete if the input trace consists of only two processes.

\smallskip
\begin{algorithm}
\small
\DontPrintSemicolon
\SetInd{0.4em}{0.4em}
\caption{$\RaceDecision$}\label{algo:racedecision}
\KwIn{
A trace $\Trace$ and two events $\Event_1, \Event_2\in \Events{\Trace}$ with $\Confl{\Event_1}{\Event_2}$.
}
\KwOut{$\True$ if $(\Event_1, \Event_2)$ is detected as a predictable race of $\Trace$.}
\BlankLine
Let $X\gets \RPast_{\Trace}(\Event_1, \Proc{\Event_2}) \cup \RPast_{\Trace}(\Event_2, \Proc{\Event_1})$\label{line:construct_x}\\
\lIf(){$\{\Event_1, \Event_2\}\cap X \neq\emptyset$ or $X$ is not feasible}{\label{line:test_lock_feasible}
\Return{$\False$}\tcp*[f]{No race}\label{line:false1}
}
Let $P\gets \RespectPO{\Trace}{X}$\tcp*[f]{The weakest po that respects $\Trace$}\label{line:respect_po}\\
Let $Q\gets \Closure(\Trace, P, X)$\tcp*[f]{Trace-close $P$}\label{line:close}\\
\lIf(\tcp*[f]{Closure created a cycle, no race}){$Q=\bot$}{
\Return{$\False$}\label{line:false3}
}
Non-deterministically chose $i\in [2]$\tcp*[f]{In practice, try both $i=1$ and $i=2$}\label{line:nondetermistic_choice}\\
\While{$\exists \ov{\Event}_1, \ov{\Event}_2\in X\setminus \SysEvents_{\Proc{\Event_i}}$ s.t. $\Confl{\ov{\Event}_1}{\ov{\Event}_2}$ and $\Unordered{\ov{\Event}_1}{Q}{\ov{\Event}_2}$ and $\ov{\Event}_1<_{\Trace}\ov{\Event}_2$}{\label{line:mwidth1}
$Q\gets \InsertAndClose(Q, \ov{\Event}_1 \to \ov{\Event}_2)$\tcp*[f]{Order $\ov{\Event}_1\to \ov{\Event}_2$ in $Q$ and close $Q$}\label{line:insert_and_close}\\
\lIf(\tcp*[f]{Closure created a cycle, no race}){$Q=\bot$}{
\Return{$\False$}\label{line:false2}
}
}
\Return{$\True$}
\end{algorithm}

\smallskip\noindent{\bf Completeness for two processes.}
We now discuss the completeness properties of $\RaceDecision$ for reporting races on input traces of two processes.
Assume that $(\Event_1, \Event_2)$ is a race of the input trace.
The key insight is that \cref{line:mwidth1} of $\RaceDecision$ is not executed, as every pair of events $\ov{\Event}_1, \ov{\Event}_2$ in that line belong to the same process, and thus are already ordered. 
Up until that point, all orderings used in constructing the partial order $\RespectPO{\Trace}{X}$ and computing the closure of $\RespectPO{\Trace}{X}$ are necessarily present in every trace that witnesses the race $(\Event_1,\Event_2)$.
Hence the closure computation cannot return $\bot$, and $\RaceDecision$ returns $\True$.
The following theorem concludes the results of this section.

\smallskip
\begin{restatable}{theorem}{decision}\label{them:decision}
Let $\Trace$ be a trace of $k\geq 2$ processes, and $n=|\Events{\Trace}|$.
Let $\Event_1, \Event_2 \in \Events{\Trace}$ be two conflicting events of $\Trace$.
The algorithm $\RaceDecision$ requires $O(n^2\cdot \log n)$ time and soundly reports whether $(\Event_1, \Event_2)$ is a predictable race of $\Trace$.
If $k=2$, $\RaceDecision$ is also complete.
If $\RaceDecision$ reports a race, a witness trace can be constructed in $O(n\cdot \log n)$ time.
\end{restatable}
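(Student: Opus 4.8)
I would reduce everything to the machinery already developed: \cref{them:maxmin} for turning a suitable trace-closed partial order into a correct reordering, and \cref{them:closure} together with \cref{lem:dynamic_closure} for the cost and correctness of $\Closure$ and $\InsertAndClose$. For soundness, suppose $\RaceDecision$ returns $\True$ under the non-deterministic choice $i$, and let $Q$ be the partial order at that point with the partition $X_1=X\cap\SysEvents_{\Proc{\Event_i}}$, $X_2=X\setminus X_1$. I would check that $(Q,X_1,X_2)$ meets the hypotheses of \cref{them:maxmin}: $Q$ is trace-closed, being the closure of $P=\RespectPO{\Trace}{X}$ and kept closed by every $\InsertAndClose$; $Q\Project X_1$ has width one because $X_1$ lies in a single process and is totally ordered by the program order, which $Q$ refines; and $Q\Project X_2$ is a Mazurkiewicz trace because the loop on \cref{line:mwidth1} has exited, so all conflicting pairs inside $X_2$ are ordered. \cref{them:maxmin} then gives a correct reordering $\Trace'$ of $\Trace$ with $\Events{\Trace'}=X$, and since \cref{line:test_lock_feasible} forces $\{\Event_1,\Event_2\}\cap X=\emptyset$, I would finish by observing that $\Trace^{*}=\Trace'\circ\Event_1\circ\Event_2$ is a valid trace exhibiting the race: appending two non-lock events preserves lock semantics, and a brief read/write case split handles the observation of a trailing read using that $\Event_1$ conflicts with $\Event_2$.

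\textbf{Running time and witness.} I would account for: $O(|\Globals|\cdot n)$ to build the relative causal cones and the auxiliary maps of \cref{subsec:closure_comp}; $O(n^2\log n)$ for the single $\Closure$ call (\cref{them:closure}); and at most $O(n^2)$ iterations of the loop, one per conflicting pair of $X$, whose $\InsertAndClose$ calls amount to $O(n^2\log n)$ in total (\cref{lem:dynamic_closure}), where the loop guard is evaluated by enumerating conflicting pairs in trace order and testing $Q$-ordering via $\DS$ in $O(\log n)$ each. A witness is then produced by running $\MinMaxAlgo$ on $Q$, which with $\DS$ takes $O(n\log n)$.

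\textbf{Completeness for $k=2$.} Assume $(\Event_1,\Event_2)$ is a predictable race with witness $\Trace^{*}=\Trace'\circ\Event_1\circ\Event_2$; I would show no $\False$-branch is taken. With two processes the only processes are $\Proc{\Event_1}$ and $\Proc{\Event_2}$, so condition~\ref{item:cp4} in the definition of $\RPast$ never applies; an induction over the cone construction, using that $\Trace'$ is a correct reordering (hence preserves program order and observations), then gives $X\subseteq\Events{\Trace'}$ and therefore $\{\Event_1,\Event_2\}\cap X=\emptyset$. Next I would check $X$ is feasible: prefix-closure and observation-feasibility are immediate from conditions~\ref{item:cp1}--\ref{item:cp3}, while for lock-feasibility I would use that each $X\cap\SysEvents_{\Proc{\Event_j}}$ is program-order downward closed, hence a prefix of the $j$-th local trace, so it contains the matching acquire of each of its releases and has at most one open acquire per lock; the last danger --- an open acquire in each of the two local projections on the same lock --- is impossible since the matching releases of both lie beyond the end of $\Trace^{*}$ (they are program-order after $\Event_1$, resp.\ $\Event_2$, the final two events), so the two critical sections would stay simultaneously open in $\Trace^{*}$, contradicting mutual exclusion; hence \cref{line:test_lock_feasible} does not fire. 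For \cref{line:close} I would exhibit a concrete closed refinement of $P$: the total order $<_{\Trace^{*}}$ restricted to $X$ respects $\Trace$ (its lock clause uses the same ``matching release before an open acquire'' fact) and, being total, trivially satisfies every closure rule, so $P$ has a closure and $\Closure$ does not return $\bot$. Finally, the loop on \cref{line:mwidth1} is vacuous because $X\setminus\SysEvents_{\Proc{\Event_i}}$ sits in one process and has no unordered conflicting pair, so the algorithm returns $\True$.

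\textbf{Main obstacle.} I expect the hard part to be the completeness direction, and within it the lock-feasibility of $X$: the argument must connect the purely $\Trace$-defined set $X$ --- in particular its open acquires --- to an a priori unknown witness $\Trace^{*}$, and rule out any configuration making $X$ lock-infeasible by deriving a contradiction inside $\Trace^{*}$. This is exactly where $k=2$ is essential, via the vacuity of condition~\ref{item:cp4}, matching the known failure of completeness for three or more processes. Soundness, the complexity bookkeeping and the witness extraction are largely mechanical once \cref{them:maxmin}, \cref{them:closure} and \cref{lem:dynamic_closure} are in hand.
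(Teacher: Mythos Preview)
Your proposal is correct and follows essentially the same route as the paper: soundness via \cref{them:maxmin} applied to the closed partial order $Q$ with the partition $X_1=X\cap\SysEvents_{\Proc{\Event_i}}$, $X_2=X\setminus X_1$; complexity via \cref{them:closure} and \cref{lem:dynamic_closure}; and completeness for $k=2$ by exhibiting the witness trace (restricted to $X$) as a closed total refinement of $P$, together with the observation that the loop on \cref{line:mwidth1} is vacuous. Your treatment of completeness is in fact more explicit than the paper's, which asserts $\Events{\Trace^{*}}=X$ as ``easy to see'' and derives feasibility of $X$ from that equality, whereas you prove the inclusion $X\subseteq\Events{\Trace'}$ by induction and argue lock-feasibility directly; both arguments are valid and lead to the same conclusion.
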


As there are $O(n^2)$ pairs of events in $\Trace$, \cref{them:decision} yields the following corollary.

\smallskip
\begin{corollary}\label{cor:function}
Let $\Trace$ be a trace of $k\geq 2$ processes.
There exists a sound algorithm that requires $O(n^4\cdot \log n)$ time and soundly reports predictable races of $\Trace$.
If $k=2$, the algorithm is also complete (i.e., it reports \emph{all} predictable races).
\end{corollary}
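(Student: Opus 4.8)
The plan is to reduce the function problem to $O(n^2)$ independent instances of the decision problem and apply \cref{them:decision} to each. First I would enumerate all ordered pairs $(\Event_1,\Event_2)$ of distinct events of $\Trace$; there are $n(n-1)=O(n^2)$ of them. For each pair I would check in $O(1)$ time whether $\Confl{\Event_1}{\Event_2}$ holds (same location, at least one write); by the definition of a race in \cref{SEC:PRELIMINARIES}, a non-conflicting pair can never be a predictable race, so it is discarded. For every remaining (conflicting) pair I would invoke $\RaceDecision(\Trace,\Event_1,\Event_2)$ and include $(\Event_1,\Event_2)$ in the output exactly when the call returns $\True$. Testing ordered rather than unordered pairs, while keeping the count at $O(n^2)$, is what makes the output coincide with the set of predictable races specified by the function problem.

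Correctness follows directly from \cref{them:decision}. For soundness, $\RaceDecision$ is sound, so every reported pair is a predictable race of $\Trace$, for any value of $k$; hence the overall algorithm has no false positives. For completeness when $k=2$: if $(\Event_1,\Event_2)$ is a predictable race then in particular $\Confl{\Event_1}{\Event_2}$, so the pair survives the filtering step, and since $\RaceDecision$ is complete for two-process traces it returns $\True$ on this pair, hence the pair is reported. Thus all predictable races are reported, and only those.

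For the running time, the filtering costs $O(1)$ per pair, i.e., $O(n^2)$ in total, and each of the $O(n^2)$ invocations of $\RaceDecision$ costs $O(n^2\cdot\log n)$ by \cref{them:decision}, giving $O(n^4\cdot\log n)$ overall; the additional $k^2\cdot|\Globals|$ factor is polynomial and suppressed as discussed in \cref{SEC:PRELIMINARIES}. If one also wants a witness trace for each reported race, \cref{them:decision} supplies one in $O(n\cdot\log n)$ extra time per pair, which does not change the bound.

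There is no genuine mathematical obstacle here, since all the technical content is already packaged in \cref{them:decision}; the only point that deserves an explicit sentence is that restricting attention to conflicting pairs in the filtering step cannot drop a real race (races are conflicting by definition), so neither soundness nor the $k=2$ completeness guarantee is compromised by that optimization.
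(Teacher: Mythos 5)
Your proposal is correct and is exactly the argument the paper intends: the corollary is obtained by invoking $\RaceDecision$ on each of the $O(n^2)$ (conflicting) event pairs and applying \cref{them:decision} to each call, yielding soundness, completeness for $k=2$, and the $O(n^4\cdot\log n)$ bound. Your added remark that filtering to conflicting pairs cannot drop a race is a harmless elaboration of the same approach.
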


\smallskip
\begin{remark}\label{rem:complexity}
We note that the dependency of \cref{cor:function} on the number of variables $|\Globals|$ and number of threads $k$
is $O(|\Globals|\cdot k^2\cdot n^4\cdot \log n)$.
To keep the presentation simple, we have neglected the dependency on $|\Globals|$ and $k$ in the analysis of the algorithm.
\end{remark}

\subsection{Examples}\label{subsec:examples}

We now illustrate the algorithm $\RaceDecision$ on a few examples.

\begin{figure*}[!h]
\begin{subfigure}[b]{0.24\textwidth}
\centering
\footnotesize
\def\rownumber{}
\begin{tabular}[b]{@{\makebox[1.2em][r]{\rownumber\space}} | l | l | l|}
\normalsize{$\mathbf{\SeqTrace_1}$} & \normalsize{$\mathbf{\SeqTrace_2}$} & \normalsize{$\mathbf{\SeqTrace_3}$}
\gdef\rownumber{\stepcounter{magicrownumbers}\arabic{magicrownumbers}} \\
\hline
$\Write(x)$ & & \\
$\Acquire(\ell_1)$ & & \\
$\Acquire(\ell_2)$ & & \\
$\Write(x)$ & & \\
$\Release(\ell_2)$ & & \\
$\mathbf{\Write(y)}$ & & \\
$\Release(\ell_1)$ & & \\
& $\Acquire(\ell_1)$ & \\
& $\Write(x)$  & \\
& $\Release(\ell_1)$ & \\
& &$\Acquire(\ell_1)$ \\
& &$\Release(\ell_1)$ \\
& & $\Acquire(\ell_2)$ \\
& & $\Read(x)$ \\
& & $\Release(\ell_2)$\\
& & $\mathbf{\Read(y)}$ \\
\hline
\end{tabular}
\caption{Is $(\Event_6, \Event_{16})$ a race?}
\label{subfig:example1_trace}
\end{subfigure}
\quad
\begin{subfigure}[b]{0.44\textwidth}
\centering
\small
\begin{tikzpicture}[thick,
pre/.style={<-,shorten >= 2pt, shorten <=2pt, very thick},
post/.style={->,shorten >= 2pt, shorten <=2pt,  very thick},
seqtrace/.style={->, line width=2},
und/.style={very thick, draw=gray},
event/.style={rectangle, minimum height=3mm, draw=black, fill=white, minimum width=4.5mm,   line width=1pt, inner sep=0, font={\small}},
virt/.style={circle,draw=black!50,fill=black!20, opacity=0}]

\newcommand{\xdisposition}{0}
\newcommand{\ydisposition}{0}
\newcommand{\xstep}{1.7}
\newcommand{\ystep}{0.8}
\newcommand{\ybias}{0.4}
\newcommand{\xbias}{0.8}
\newcommand{\xbiassmall}{0.6}

\node	[]		(t1a)	at	(\xdisposition + 0*\xstep, \ydisposition + 0*\ystep)	{\normalsize$\SeqTrace_1$};
\node	[]		(t1b)	at	(\xdisposition + 0*\xstep, \ydisposition + -6*\ystep)	{};
\node	[]		(t2a)	at	(\xdisposition + 1*\xstep, \ydisposition + 0*\ystep)	{\normalsize$\SeqTrace_2$};
\node	[]		(t2b)	at	(\xdisposition + 1*\xstep, \ydisposition + -6*\ystep)	{};
\node	[]		(t3a)	at	(\xdisposition + 2*\xstep, \ydisposition + 0*\ystep)	{\normalsize$\SeqTrace_3$};
\node	[]		(t3b)	at	(\xdisposition + 2*\xstep, \ydisposition + -6*\ystep)	{};

\draw[seqtrace] (t1a) to (t1b);
\draw[seqtrace] (t2a) to (t2b);
\draw[seqtrace] (t3a) to (t3b);

\node[event] (e10) at (\xdisposition + 0*\xstep, \ydisposition + -1*\ystep) {$\Event_{1}$};
\node[] (et10) at (\xdisposition + 0*\xstep - \xbias, \ydisposition + -1*\ystep) {$\Write(x)$};
\node[event] (e11) at (\xdisposition + 0*\xstep, \ydisposition + -2*\ystep) {$\Event_{2}$};
\node[] (et11) at (\xdisposition + 0*\xstep - \xbias, \ydisposition + -2*\ystep) {$\Acquire(\ell_1)$};
\node[event] (e12) at (\xdisposition + 0*\xstep, \ydisposition + -3*\ystep) {$\Event_{3}$};
\node[] (et12) at (\xdisposition + 0*\xstep - \xbias, \ydisposition + -3*\ystep) {$\Acquire(\ell_2)$};
\node[event] (e13) at (\xdisposition + 0*\xstep, \ydisposition + -4*\ystep) {$\Event_{4}$};
\node[] (et13) at (\xdisposition + 0*\xstep - \xbiassmall, \ydisposition + -4*\ystep) {$\Write(x)$};
\node[event] (e14) at (\xdisposition + 0*\xstep, \ydisposition + -5*\ystep) {$\Event_{5}$};
\node[] (et14) at (\xdisposition + 0*\xstep - \xbias, \ydisposition + -5*\ystep) {$\Release(\ell_2)$};

\node[event] (e21) at (\xdisposition + 1*\xstep, \ydisposition + -2*\ystep) {$\Event_{8}$};
\node[] (et21) at (\xdisposition + 1*\xstep + \xbias, \ydisposition + -2*\ystep) {$\Acquire(\ell_1)$};
\node[event] (e22) at (\xdisposition + 1*\xstep, \ydisposition + -3*\ystep) {$\Event_{9}$};
\node[] (et22) at (\xdisposition + 1*\xstep + \xbiassmall, \ydisposition + -3*\ystep) {$\Write(x)$};
\node[event] (e23) at (\xdisposition + 1*\xstep, \ydisposition + -4*\ystep) {$\Event_{10}$};
\node[] (et23) at (\xdisposition + 1*\xstep + \xbias, \ydisposition + -4*\ystep) {$\Release(\ell_1)$};

\node[event] (e31) at (\xdisposition + 2*\xstep, \ydisposition + -1*\ystep) {$\Event_{11}$};
\node[] (et31) at (\xdisposition + 2*\xstep + \xbias, \ydisposition + -1*\ystep) {$\Acquire(\ell_1)$};
\node[event] (e32) at (\xdisposition + 2*\xstep, \ydisposition + -2*\ystep) {$\Event_{12}$};
\node[] (et32) at (\xdisposition + 2*\xstep + \xbias, \ydisposition + -2*\ystep) {$\Release(\ell_1)$};
\node[event] (e33) at (\xdisposition + 2*\xstep, \ydisposition + -3*\ystep) {$\Event_{13}$};
\node[] (et33) at (\xdisposition + 2*\xstep + \xbias, \ydisposition + -3*\ystep) {$\Acquire(\ell_2)$};
\node[event] (e34) at (\xdisposition + 2*\xstep, \ydisposition + -4*\ystep) {$\Event_{14}$};
\node[] (et34) at (\xdisposition + 2*\xstep + \xbiassmall, \ydisposition + -4*\ystep) {$\Read(x)$};
\node[event] (e35) at (\xdisposition + 2*\xstep, \ydisposition + -5*\ystep) {$\Event_{15}$};
\node[] (et35) at (\xdisposition + 2*\xstep + \xbias, \ydisposition + -5*\ystep) {$\Release(\ell_2)$};

\draw[post] (e22) to (e34);
\draw[post] (e23) to (e11);
\draw[post, bend right=25] (e32) to (e11);
\draw[post, bend left=25, dashed, draw=\darkred] (e34) to (e13);
\draw[post, bend left=20, dashed, draw=\darkred] (e35) to (e12);

\end{tikzpicture}
\caption{The partial order $P$ and its closure $Q$.}
\label{subfig:example1_po}
\end{subfigure}
\quad
\begin{subfigure}[b]{0.24\textwidth}
\centering
\footnotesize
\def\rownumber{}
\begin{tabular}[b]{@{\makebox[1.2em][r]{\rownumber\space}} | l | l | l|}
\normalsize{$\mathbf{\SeqTrace_1}$} & \normalsize{$\mathbf{\SeqTrace_2}$} & \normalsize{$\mathbf{\SeqTrace_3}$}
\gdef\rownumber{\stepcounter{magicrownumbers}\arabic{magicrownumbers}} \\
\hline
$\Write(x)$ & & \\
& $\Acquire(\ell_1)$ & \\
& $\Write(x)$  & \\
& $\Release(\ell_1)$ & \\
& & $\Acquire(\ell_1)$\\
& & $\Release(\ell_1)$\\
& & $\Acquire(\ell_2)$ \\
& & $\Read(x)$ \\
& & $\Release(\ell_2)$\\
$\Acquire(\ell_1)$ & & \\
$\Acquire(\ell_2)$ & & \\
$\Write(x)$ & & \\
$\Release(\ell_2)$ & & \\
$\mathbf{\Write(y)}$ & & \\
& & $\mathbf{\Read(y)}$ \\
\hline
\end{tabular}
\caption{The witness trace.}
\label{subfig:example1_witness}
\end{subfigure}
\caption{
(\protect\subref{subfig:example1_trace}) The input trace.
(\protect\subref{subfig:example1_po}) The partial order $P$ (solid edges) and its closure $Q$ (solid and dashed edges).
(\protect\subref{subfig:example1_witness}) The witness trace obtained by extending $\MinMaxAlgo(Q)$ with the racy pair $(\Event_6, \Event_{16})$.
}
\label{fig:example1}
\end{figure*}

\smallskip\noindent{\bf Example of a race (\cref{fig:example1}).}
Consider the trace $\Trace$ shown in \cref{subfig:example1_trace}, and the task is to decide whether $(\Event_6, \Event_{16})$ is a predictable race of $\Trace$.
The algorithm constructs the causal cones
\[
\RPast_{\Trace}(\Event_6, \Process_3) = \{\Event_i\}_{i=1}^5 \quad \text{and} \quad \RPast_{\Trace}(\Event_{16}, \Process_1) = \{\Event_i\}_{i=8}^{15}
\]
and the partial order $P$ that respects $\Trace$, shown in \cref{subfig:example1_po} in solid edges.
Afterwards, the algorithm computes the closure $Q$ of $P$ by inserting the dashed edges in \cref{subfig:example1_po}.
In particular, since for the write event $\Event_4$ we have $\Event_9<_{P} \Event_4$ and $\Event_9$ is observed by the read event $\Event_{14}$, we have $\Event_{14}<_{Q} \Event_4$
(i.e., this is an observation-closure edge).
After this ordering is inserted, for the lock-acquire event $\Event_{13}$ we have $\Event_{13} <_{Q} \Event_5$ and $\Event_5$ is a lock-release event on the same lock, we also have $\Event_{15}<_{Q}\Event_3$, where $\Event_{15}=\Match[\Trace]{\Event_{13}}$ and $\Event_3=\Match[\Trace]{\Event_5}$
(i.e., this is a lock-closure edge).
Now consider the nondeterministic choice made in \cref{line:nondetermistic_choice} of $\RaceDecision$ such that $i=1$.
The algorithm also orders $\Event_{10} <_{Q} \Event_{11}$ by performing $\InsertAndClose(Q, \Event_{10} \to \Event_{11})$.
Notice that, after this operation, $Q$ is a closed partial order.
In addition, the $\Write(x)$ event $\Event_1$ is unordered with the conflicting events $\Event_9$ and $\Event_{10}$, i.e., $Q$ is not an M-trace.
However, by taking $X_1=X\Project \Process_1$ and $X_2=X\setminus X_2$, we have that $\Width(Q\Project X_1)=1$ and $Q\Project X_2$ is an M-trace.
Hence, by \cref{them:maxmin} $Q$ is linearizable to a correct reordering, constructed as the max-min linearization $\Trace^*=\MinMaxAlgo(Q)$.
This illustrates the advantage of our technique over existing methods, as here a correct reordering is exposed even though the initial partial order $P$ contains several pairs of conflicting events that are unordered.
Indeed, this race is missed by all $\HB$, $\WCP$, $\DC$ and $\SHB$.
Finally, the witness trace in is constructed by extending $\Trace^*$ with the racy events $\Event_6$, $\Event_{16}$,
shown in \cref{subfig:example1_witness}.

\begin{figure*}[!h]
\begin{subfigure}[b]{0.4\textwidth}
\centering
\footnotesize
\def\rownumber{}
\begin{tabular}[b]{@{\makebox[1.2em][r]{\rownumber\space}} | l | l | l|}
\normalsize{$\mathbf{\SeqTrace_1}$} & \normalsize{$\mathbf{\SeqTrace_2}$} & \normalsize{$\mathbf{\SeqTrace_3}$}
\gdef\rownumber{\stepcounter{magicrownumbers}\arabic{magicrownumbers}} \\
\hline
& $\Write(y)$ & \\
$\Acquire(\ell)$ & & \\
$\Write(x)$ & & \\
$\Read(y)$ & & \\
$\mathbf{\Write(z)}$ & & \\
$\Release(\ell)$ & & \\
& $\Acquire(\ell)$ & \\
& $\Write(x)$ & \\
& $\Release(\ell)$ & \\
& & $\Read(x)$ \\
& & $\Write(y)$ \\
& & $\Read(x)$ \\
& & $\mathbf{\Read(z)}$ \\
\hline
\end{tabular}
\caption{Is $(\Event_5, \Event_{13})$ a race?}
\label{subfig:example2_trace}
\end{subfigure}
\qquad
\begin{subfigure}[b]{0.5\textwidth}
\small
\centering
\begin{tikzpicture}[thick,
pre/.style={<-,shorten >= 2pt, shorten <=2pt, very thick},
post/.style={->,shorten >= 2pt, shorten <=2pt,  very thick},
seqtrace/.style={->, line width=2},
und/.style={very thick, draw=gray},
event/.style={rectangle, minimum height=3mm, draw=black, fill=white, minimum width=4.5mm,   line width=1pt, inner sep=0, font={\small}},
virt/.style={circle,draw=black!50,fill=black!20, opacity=0}]

\newcommand{\xdisposition}{0}
\newcommand{\ydisposition}{0}
\newcommand{\xstep}{1.7}
\newcommand{\ystep}{0.8}
\newcommand{\ybias}{0.4}
\newcommand{\xbias}{0.8}
\newcommand{\xbiassmall}{0.6}

\node	[]		(t1a)	at	(\xdisposition + 0*\xstep, \ydisposition + 0*\ystep)	{\normalsize$\SeqTrace_1$};
\node	[]		(t1b)	at	(\xdisposition + 0*\xstep, \ydisposition + -5*\ystep)	{};
\node	[]		(t2a)	at	(\xdisposition + 1*\xstep, \ydisposition + 0*\ystep)	{\normalsize$\SeqTrace_2$};
\node	[]		(t2b)	at	(\xdisposition + 1*\xstep, \ydisposition + -5*\ystep)	{};
\node	[]		(t3a)	at	(\xdisposition + 2*\xstep, \ydisposition + 0*\ystep)	{\normalsize$\SeqTrace_3$};
\node	[]		(t3b)	at	(\xdisposition + 2*\xstep, \ydisposition + -5*\ystep)	{};

\draw[seqtrace] (t1a) to (t1b);
\draw[seqtrace] (t2a) to (t2b);
\draw[seqtrace] (t3a) to (t3b);

\node[event] (e11) at (\xdisposition + 0*\xstep, \ydisposition + -2*\ystep) {$\Event_{2}$};
\node[] (et11) at (\xdisposition + 0*\xstep - \xbias, \ydisposition + -2*\ystep) {$\Acquire(\ell)$};
\node[event] (e12) at (\xdisposition + 0*\xstep, \ydisposition + -3*\ystep) {$\Event_{3}$};
\node[] (et12) at (\xdisposition + 0*\xstep - \xbiassmall, \ydisposition + -3*\ystep) {$\Write(x)$};
\node[event] (e13) at (\xdisposition + 0*\xstep, \ydisposition + -4*\ystep) {$\Event_{4}$};
\node[] (et13) at (\xdisposition + 0*\xstep - \xbiassmall, \ydisposition + -4*\ystep) {$\Read(y)$};

\node[event] (e21) at (\xdisposition + 1*\xstep, \ydisposition + -1*\ystep) {$\Event_{1}$};
\node[] (et21) at (\xdisposition + 1*\xstep + \xbiassmall, \ydisposition + -1*\ystep) {$\Write(y)$};
\node[event] (e22) at (\xdisposition + 1*\xstep, \ydisposition + -2*\ystep) {$\Event_{7}$};
\node[] (et22) at (\xdisposition + 1*\xstep + \xbias, \ydisposition + -2*\ystep) {$\Acquire(\ell)$};
\node[event] (e23) at (\xdisposition + 1*\xstep, \ydisposition + -3*\ystep) {$\Event_{8}$};
\node[] (et23) at (\xdisposition + 1*\xstep + \xbiassmall, \ydisposition + -3*\ystep) {$\Write(x)$};
\node[event] (e24) at (\xdisposition + 1*\xstep, \ydisposition + -4*\ystep) {$\Event_{9}$};
\node[] (et24) at (\xdisposition + 1*\xstep + \xbiassmall, \ydisposition + -4*\ystep) {$\Release(\ell)$};

\node[event] (e31) at (\xdisposition + 2*\xstep, \ydisposition + -2*\ystep) {$\Event_{10}$};
\node[] (et31) at (\xdisposition + 2*\xstep + \xbiassmall, \ydisposition + -2*\ystep) {$\Read(x)$};
\node[event] (e32) at (\xdisposition + 2*\xstep, \ydisposition + -3*\ystep) {$\Event_{11}$};
\node[] (et32) at (\xdisposition + 2*\xstep + \xbiassmall, \ydisposition + -3*\ystep) {$\Write(y)$};
\node[event] (e33) at (\xdisposition + 2*\xstep, \ydisposition + -4*\ystep) {$\Event_{12}$};
\node[] (et33) at (\xdisposition + 2*\xstep + \xbiassmall, \ydisposition + -4*\ystep) {$\Read(x)$};

\draw[post] (e23) to (e31);
\draw[post] (e24) to (e11);
\draw[post, bend right=0, dashed, draw=\darkred] (e13) to (e32);
\draw[post, bend left=0, dashed, draw=\darkred] (e12) to (e23);

\end{tikzpicture}
\caption{The partial order $P$  is not feasible.}
\label{subfig:example2_po}
\end{subfigure}
\caption{
(\protect\subref{subfig:example2_trace}) The input trace.
(\protect\subref{subfig:example2_po}) The partial order $P$ (solid edges) and its closure $Q$ (solid and dashed edges).
}
\label{fig:example2}
\end{figure*}
\smallskip\noindent{\bf Example of a non-race (\cref{fig:example2}).}
Consider the trace $\Trace$ shown in \cref{subfig:example2_trace}, and the task is to decide whether $(\Event_5, \Event_{13})$ is a predictable race of $\Trace$.
The algorithm constructs the causal cones
\[
\RPast_{\Trace}(\Event_5, \Process_3) = \{\Event_i\}_{i=1}^4 \quad  \text{and}\quad  \RPast_{\Trace}(\Event_{13}, \Process_1) = \{\Event_1\} \cup \{\Event_i\}_{i=7}^{12}
\]
and the partial order $P$ that respects $\Trace$, shown in \cref{subfig:example2_po} in solid edges.
Afterwards, the algorithm computes the closure $Q$ of $P$ by inserting the dashed edges in \cref{subfig:example2_po}.
Observe that $Q$ contains a cycle and thus $P$ is not feasible, hence the algorithm reports that $(\Event_4, \Event_{13})$ is not a race\footnote{While computing the closure different cycles might appear, depending on the order in which the closure rules are applied.}.

\smallskip\noindent{\bf Examples from \cref{SEC:INTRO}.}
Finally, we outline $\RaceDecision$ on the two races from \cref{fig:motivating}.

\smallskip\noindent{\em Example from \cref{subfig:motivating_incomplete1}.}
The algorithm constructs the causal cones
\[
\RPast_{\Trace}(\Event_2,\Process_2) = \{\Event_1 \} \quad \text{and} \quad \RPast_{\Trace}(\Event_7, \Process_1)=\{ \Event_i \}_{i=4}^6
\]
and the partial order $P$ that respects $\Trace$ by forcing the ordering $\Event_6<_{P} \Event_1$.
Note that $P$ is already closed, hence $Q=P$ by the algorithm $\Closure$.
Finally, the witness trace is constructed by obtaining the max-min linearization $\Trace^*=\MinMaxAlgo(Q)$ and extending $\Trace^*$ with the racy events $\Event_2$, $\Event_{7}$, thereby witnessing the race by the trace $\Trace^*=\Event_4, \Event_5, \Event_6, \Event_1, \Event_2, \Event_7$.

\smallskip\noindent{\em Example from \cref{subfig:motivating_incomplete2}.}
The algorithm constructs the causal cones
\[
\RPast_{\Trace}(\Event_2, \Process_3) = \{ \Event_1 \}
\quad\text{and}\quad
\RPast_{\Trace}(\Event_{14}, \Process_1)=\{\Event_i \}_{i=5}^{13}
\]
and the partial order $P$ that respects $\Trace$ by forcing the orderings $\Event_{10}<_{P} \Event_1$ and $\Event_{7}<_{P}\Event_{12}$.
Afterwards, the algorithm computes the closure of $Q$ by inserting $\Event_8<_{Q}\Event_{11}$.
Finally, the witness trace is constructed by obtaining the max-min linearization $\Trace^*=\MinMaxAlgo(Q)$ and extending $\Trace^*$ with the racy events $\Event_2$, $\Event_{14}$, thereby witnessing the race by the trace $\Trace^*=\Event_5, \Event_6, \Event_7, \Event_8, \Event_9, \Event_{10}, \Event_1, \Event_{11}, \Event_{12}, \Event_{13}, \Event_2, \Event_{14}$.


%

\section{The Function Problem in Practice}\label{sec:function}

\cref{cor:function} solves the function problem by solving the decision problem on every pair of events of the input trace.
Here we present an explicit algorithm for the function problem, called $\RaceFunction$, which is the main contribution of this work.
Although $\RaceFunction$ does not improve the worst-case complexity, it is faster in practice.
The algorithm relies on the following simple lemma.

\begin{restatable}{lemma}{functioncorrectness}\label{lem:function_correctness}
Consider two conflicting events $\Event_1, \Event_2$ and
let $X=\RPast_{\Trace}(\Event_1, \Proc{\Event_2}) \cup \RPast_{\Trace}(\Event_2, \Proc{\Event_1})$.
If $X\cap \{\Event_1, \Event_2 \}=\emptyset$ and $\OpenAcquires_{\Trace}(X)=\emptyset$ then $(\Event_1, \Event_2)$ is a predictable race of $\Trace$.
\end{restatable}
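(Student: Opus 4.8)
The plan is to take the \emph{identity} reordering $\Trace' = \Trace\Project X$ as the witness and then append the racy pair. Unlike in $\RaceDecision$, neither the closure computation nor a max-min linearization (\cref{them:maxmin}) is needed here: since $\Event_1,\Event_2\notin X$ and $\OpenAcquires_{\Trace}(X)=\emptyset$, the trace order of $\Trace$ restricted to $X$ already satisfies every ordering that respecting $\Trace$ could impose. Concretely, I would establish (a)~$X$ is feasible for $\Trace$; (b)~$\Trace' = \Trace\Project X$ is a trace and a correct reordering of $\Trace$; and (c)~$\Trace^* = \Trace'\circ\Event_1\circ\Event_2$ is a trace that exhibits the race $(\Event_1,\Event_2)$.

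\emph{Step (a).} Prefix-closedness of $X$ is immediate from condition~\ref{item:cp2} of the relative causal cone (each cone is an ideal of $\leq_{\TO}$, hence so is their union), and observation-feasibility is immediate from condition~\ref{item:cp3}. For lock-feasibility, part~(ii) is vacuous because the hypothesis $\OpenAcquires_{\Trace}(X)=\emptyset$ says there are no open acquires at all; part~(i) holds because $\Match[\Trace]{\Release}\leq_{\TO}\Release$ for every release in $X$, so prefix-closedness of $X$ places the matching acquire in $X$. Hence $X$ is feasible.

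\emph{Step (b).} I would verify the two defining conditions of a trace for $\Trace'$. For every $\Read\in\Reads{X}$ we have $\Observation_{\Trace}(\Read)\in X$ and $\Observation_{\Trace}(\Read)<_{\Trace}\Read$; moreover no write to $\Location{\Read}$ occurs strictly between $\Observation_{\Trace}(\Read)$ and $\Read$ in $\Trace$ (by definition of $\Observation_{\Trace}$), so the same holds in the subsequence $\Trace'$. Thus $\Read$ has a preceding write in $\Trace'$ and $\Observation_{\Trace'}(\Read)=\Observation_{\Trace}(\Read)$, which is precisely the observation-agreement required of a correct reordering. For locks, given conflicting $\Acquire_1<_{\Trace}\Acquire_2$ in $\Acquires{X}$, the hypothesis forces $\Release_1 = \Match[\Trace]{\Acquire_1}\in X$, and since $\Trace$ is a trace, $\Release_1<_{\Trace}\Acquire_2$; since no conflicting release lies between $\Acquire_1$ and $\Release_1$ in $\Trace$, none does in $\Trace'$, so $\Match[\Trace']{\Acquire_1}=\Release_1\in\Releases{X}$ and $\Release_1<_{\Trace'}\Acquire_2$, as required. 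Finally, $\Trace'\Project\Process_i$ is a prefix of $\Trace\Project\Process_i$ for each $i$ by prefix-closedness of $X$. Hence $\Trace'$ is a correct reordering of $\Trace$.

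\emph{Step (c), and the main obstacle.} Since $X\cap\{\Event_1,\Event_2\}=\emptyset$, each of $\Event_1,\Event_2$ occurs exactly once in $\Trace^*$, as its last two events, and neither is a lock event (they conflict, so both are reads/writes), so appending them creates no new lock obligation. A read appended last has a preceding conflicting write: if $\Event_2$ is a read then $\Event_1$ is a write to the same location preceding it, and if $\Event_1$ is a read then the initial write to $\Location{\Event_1}$, being program-order minimal, lies in $X$ by condition~\ref{item:cp1} and precedes $\Event_1$. Hence $\Trace^*$ is a trace. It exhibits the race because $\Event_1,\Event_2$ are its last two events, they conflict, every other event precedes $\Event_1$ (so the maximality condition is trivial), and $\Proc{\Event_1}\neq\Proc{\Event_2}$ — for otherwise one of them is program-order earlier than the other and would, by condition~\ref{item:cp1}, belong to the causal cone of the other relative to its own process, hence to $X$, contradicting $X\cap\{\Event_1,\Event_2\}=\emptyset$. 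The only genuinely delicate part is the lock bookkeeping in step~(b), i.e.\ checking that acquire/release matchings survive the projection onto $X$; this is exactly where the assumption $\OpenAcquires_{\Trace}(X)=\emptyset$ is essential, and the rest is routine — the conceptual content being that under the two hypotheses no reordering of $X$ is needed at all, so the heavy machinery of the general algorithm is bypassed entirely.
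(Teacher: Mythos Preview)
Your proposal is correct and follows exactly the same approach as the paper: take $\Trace' = \Trace\Project X$ as the witness and append the racy pair. The paper's own proof is a two-sentence sketch (``$\Trace\Project X$ is a trace where there are no open lock-acquire events and $\Event_1,\Event_2$ are enabled in their respective processes''), whereas you spell out feasibility of $X$, preservation of observations and lock matchings under projection, and the derivation of $\Proc{\Event_1}\neq\Proc{\Event_2}$; but the underlying idea is identical.
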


Intuitively, if the conditions of \cref{lem:function_correctness} are met, we can postpone the execution of $\Event_1$ in $\Trace$ until $\Event_2$, and the trace witnessing the race is simply $\Trace\Project X \circ \Event_1, \Event_2$.

\smallskip\noindent{\bf The algorithm $\RaceFunction_{\Event_1}^{\Process}$.}
We are now ready to describe an algorithm for partially solving the function problem on an input trace $\Trace$.
In particular, we present the algorithm $\RaceFunction_{\Event_1}^{\Process}$ given an event $\Event_1\in \Events{\Trace}$ and a process $\Process\neq \Proc{\Event_1}$.
The algorithm returns the set $\Races \subseteq \{\Event_1\}\times \Events{\Trace}\Project \Process$ of races detected between $\Event_1$ and events of process $\Process$.
The algorithm simply iterates over all events $\Event_2$ of $\Process$ in increasing order and computes the causal cone $\RPast_{\Trace}(\Event_2, \Proc{\Event_1})$.
Let $X=\RPast_{\Trace}(\Event_1, \Proc{\Event_2})\cup \RPast_{\Trace}(\Event_2, \Proc{\Event_1})$.
If there are open lock-acquire events in $X$, the algorithm invokes $\RaceDecision$ for solving the decision problem on $(\Event_1, \Event_2)$.
Otherwise, a race $(\Event_1, \Event_2)$ is directly inferred, due to \cref{lem:function_correctness}.
\cref{algo:racefunction} gives the formal description.
\begin{algorithm}
\small
\DontPrintSemicolon
\SetInd{0.4em}{0.4em}
\caption{$\RaceFunction_{\Event_1}^{\Process}$}\label{algo:racefunction}
\KwIn{
A trace $\Trace$, an event $\Event_1\in \Events{\Trace}$, a process $\Process \neq \Proc{\Event_1}$.
}
\KwOut{A set $\Races \subseteq \{\Event_1\}\times \Events{\Trace}$ of predictable races of $\Trace$.
}
\BlankLine
Let $\Races\gets \emptyset$\\
Let $X\gets \RPast_{\Trace}(\Event_1, \Process)$\\
\ForEach{$\Event_2\in \SysEvents(\Process)$ in increasing order of $<_{\TO(\Trace)}$ s.t. $\Event_2 \not \in X$ and $\Confl{\Event_1}{\Event_2}$}{\label{line:functioninner_loop}
Insert $\RPast_{\Trace}(\Event, \Proc{\Event_1})\setminus X$ in $X$\label{line:update_cone}\tcp*[f]{At this point $X=\RPast_{\Trace}(\Event_1, \Process)\cup \RPast_{\Trace}(\Event, \Proc{\Event_1})$}\\
\lIf(\tcp*[f]{$\Event_1\in X$ for all remaining $\Event_2$, return early}){$\Event_1\in X$}{
\Return{$\Races$}
}
\eIf(\tcp*[f]{No open locks, race found}){$\OpenAcquires_{\Trace}(X) = \emptyset$}{\label{line:function_decision}
Insert $(\Event_1, \Event_2)$ in $\Races$\\
}{\tcp{Open locks, use the decision algorithm }
\lIf{$\RaceDecision(\Event_1, \Event_2)$}{
Insert $(\Event_1, \Event_2)$ in $\Races$\label{line:insert_race1}
}
}
}
\end{algorithm}

The efficiency of $\RaceFunction_{\Event_1}^{\Process}$ lies on two observations:
\begin{compactenum}
\item Since critical sections tend to be small, we expect the condition in \cref{line:function_decision} to be $\False$ only a few times,
thus \cref{lem:function_correctness} allows to soundly report a race without constructing a partial order. 
\item The causal cones are closed wrt the program order.
That is, given two events $\Event_2$ and $\Event'_2$ with $\Event_2<_{\TO}\Event'_2$, we have that  $\RPast_{\Trace}(\Event_2, \Proc{\Event_1})\subset \RPast_{\Trace}(\Event'_2, \Proc{\Event_1})$, and thus we only need to consider the difference $\RPast_{\Trace}(\Event'_2, \Proc{\Event_1})\setminus \RPast_{\Trace}(\Event_2, \Proc{\Event_1})$ in \cref{line:update_cone}.
This decreases the total time for constructing all causal cones from quadratic to linear.
\end{compactenum}

\smallskip\noindent{\bf The algorithm $\RaceFunction$.}
Finally, we outline the algorithm $\RaceFunction$ for solving the function problem.
Given an input trace $\Trace$, the algorithm simply invokes $\RaceFunction_{\Event_1}^{\Process}$ for every event $\Event_1\in \Events{\Trace}$ and process $\Process\neq \Proc{\Event_1}$ and obtains the returned race set $\Races_{\Event_1}^{\Process}$.
Since there are $O(n)$ such events, the algorithm makes $O(n)$ invocations.
The reported set of predictable races of $\Trace$ is then $\Races=\bigcup_{\Event_1, \Process}\Races_{\Event_1}^{\Process}$.

\smallskip\noindent{\bf Detecting completeness dynamically.}
Assume that we execute $\RaceDecision$ on input events $\Event_1, \Event_2$ and the algorithm returns $\False$.
It can be easily shown that if the following conditions hold, then $(\Event_1, \Event_2)$ is not a predictable race of $\Trace$ (and hence correctly rejected by $\RaceDecision$).
\begin{compactenum}
\item\label{item:complete1} When computing the relative causal past cones in \cref{line:construct_x} of $\RaceDecision$, no event is added to the cones due to \cref{item:cp4} of the definition of relative causal past cones.
\item\label{item:complete2} $\RaceDecision$ returns $\False$ before executing \cref{line:insert_and_close}.
\end{compactenum}
Let $\MayRaces$ be the set of races that are rejected by $\RaceFunction$ on input trace $\Trace$ such that at least one of the conditions above does not hold.
It follows that $\MayRaces$ over-approximates the set of false negatives of $\RaceFunction$.
The algorithm is \emph{dynamically complete} for $\Trace$ if $\MayRaces_{\Trace}=\emptyset$.

\section{Experiments}\label{SEC:EXPERIMENTS}

In this section we report on an implementation and experimental evaluation of our techniques.

\subsection{Implementation}\label{subsec:implementation}

We have implemented our algorithm $\RaceFunction$ in Java and evaluated its performance on a standard set of benchmarks.
We first discuss some details of the implementation.

\smallskip\noindent{\bf Handling dynamic processes creation.}
In the theoretical part of this paper we have neglected dynamic process creation events.
In practice such events are common, and all our benchmark traces contain $\ForkT(i)$ and $\JoinT(j)$ events
(for forking process $\Process_i$ and joining with process $\Process_j$, respectively).
To handle such events, we include in the program order $\TO$ the following order relationships:
If $\Event_1$ is a $\ForkT(i)$ event and $\Event_2$ is a $\JoinT(j)$ event, then we include the order relationship
$(\Event_1 <_{\TO} \Event'_1)$ and $(\Event'_2<_{\TO} \Event_2)$, where $\Event'_1$ is the first event of $\Process_i$
and $\Event'_2$ is the last event of $\Process_j$ such that $\Event'_2<_{\Trace} \Event_2$.

\smallskip\noindent{\bf Optimizations.}
We make two straightforward optimizations, namely, ignoring non-racy locations and over-approximating the racy events.
Recall that $\RespectPO{\Trace}{X}$ is the weakest partial order over the events of $\Trace$ that respects $\Trace$.
$\RespectPO{\Trace}{X}$ can be constructed efficiently by a single pass of $\Trace$.
For the first optimization, we simply remove from $\Trace$ all events to a location $x$ if every pair of conflicting events on $x$ is ordered by $\RespectPO{\Trace}{X}$.
For the second optimization, we construct the set
\begin{align*}
A=\{ (\Event_1, \Event_2):~ &\Confl{\Event_1}{\Event_2} \quad \text{and} \quad \Unordered{\Event_1}{\RespectPO{\Trace}{X}}{\Event_2} \quad \text{and}
&\Event_1, \Event_2 \text{ are not protected by the same lock} \}.
\end{align*}
which over-approximates the races of $\Trace$,
and, we only consider the pairs $(\Event_1, \Event_2)\in A$ for races.

\subsection{Experimental Setup}\label{subsec:experimental_setup}

\smallskip\noindent{\bf Benchmarks.}
Our benchmark set is a standard one found in recent works on race detection~\cite{Huang14,Kini17,Yu18,Mathur18}, and parts of it also exist in other works~\cite{Flanagan09,Bond10,Zhai12,Roemer18}.
It contains concurrent traces of various sizes, which are concrete executions of concurrent programs taken from standard benchmark suits:
(i)~the IBM Contest benchmark suite~\cite{Farchi03},
(ii)~the Java Grande forum benchmark suite~\cite{Smith01},
(iii)~the DaCapo benchmark suite~\cite{Blackburn06}, and
(iv)~some standalone, real-world software.
We have also included the benchmark \texttt{cryptorsa} from the SPEC JVM08 benchmark suite~\cite{specjvm08}
which we have found to be racy.
We refer to \cref{tab:stats} for various interesting statistics on each benchmark trace.
The columns $k$ and $n$ denote the number of processes and number of events in each input trace.
In each case, $k$ is also used as the bound of the width of the partial orders constructed by our algorithm.

\begin{table}
\small
\centerline{
\begin{tabular}{|c||c|c|c|c||||c||c|c|c|c|}
\hline
\textbf{Benchmark} & \textbf{$n$} & \textbf{$k$} & \textbf{\# variables} & \textbf{\# locks} & \textbf{Benchmark} & \textbf{$n$} & \textbf{$k$} & \textbf{\# variables} & \textbf{\# locks}\\
\hline
\hline
\texttt{array} & 44 & 2 & 30 & 2 & \texttt{moldyn} & 164K & 2 & 1.0K & 2\\
\hline
\texttt{critical} & 49 & 3 & 30 & 0 & \texttt{derby} & 1.0M & 3 & 185K & 1.0K\\
\hline
\texttt{airtickets} & 116 & 2 & 46 & 0 & \texttt{jigsaw} & 3.0M & 13 & 103K & 280\\
\hline
\texttt{account} & 125 & 3 & 41 & 3 & \texttt{bufwriter} & 11M & 5 & 56 & 1\\
\hline
\texttt{pingpong} & 126 & 4 & 54 & 0 & \texttt{hsqldb} & 18M & 43 & 946K & 412\\
\hline
\texttt{bbuffer} & 322 & 2 & 73 & 2 & \texttt{cryptorsa} & 57M & 7 & 1.0M & 8.0K\\
\hline
\texttt{mergesort} & 3.0K & 4 & 621 & 3 & \texttt{eclipse} & 86M & 14 & 10M & 8.0K\\
\hline
\texttt{bubblesort} & 4.0K & 10 & 196 & 3 & \texttt{xalan} & 122M & 6 & 4.0M & 2.0K\\
\hline
\texttt{raytracer} & 16K & 2 & 3.0K & 8 & \texttt{lusearch} & 216M & 7 & 5.0M & 118\\
\hline
\texttt{ftpserver} & 48K & 10 & 5.0K & 304 & - & - & - & - & -\\
\hline
\end{tabular}
}
\caption{Statistics on our benchmark set.}
\label{tab:stats}
\end{table}

\smallskip\noindent{\bf Comparison with $\HB$, $\WCP$, $\DC$ and $\SHB$.}
We compare $\RaceFunction$ against the standard $\HB$, as well as $\WCP$~\cite{Kini17}, $\DC$~\cite{Roemer18}, and $\SHB$~\cite{Mathur18} which, to our knowledge, are the most recent advances in race prediction.
These are partial-order methods based on vector clocks.
All implementations are in Java: we rely on the tool Rapid~\cite{Mathur18} for running $\HB$, $\WCP$ and $\SHB$, and on our own implementation of $\DC$.
To obtain all race reports for an input trace $\Trace$, we use the following process.
We construct the corresponding partial order incrementally, by inserting new events in the order they appear in $\Trace$.
In addition, we use an extra vector clock $R_x$, $W_x$, for every location $x$,
which records the vector clock of the last read and write event, respectively, that accessed the respective location.
These vector clocks are used to determine whether the current event is racy.
After inserting an event $\Event_1$ in the partial order, we iterate over each conflicting event $\Event_2$ that precedes $\Event_1$ in $\Trace$,
and determine whether $\Event_2< \Event_1$. If not, we report a race $(\Event_1, \Event_2)$, and join the vector clock of $\Event_1$ with the vector clock of the corresponding location.

As $\HB$ and $\WCP$ are only sound on the first race, and $\DC$ is unsound, the above process creates, in general, false positives. 
In order to have a basis for comparison on sound reports, $(\Event_1, \Event_2)$ is regarded as a reported race by each of these methods only if $\RaceFunction$ reports it either as a race, or a possibly false negative (i.e., either $(\Event_1, \Event_2)\in \Races$ or $(\Event_1, \Event_2)\in \MayRaces$, the sets $\Races$ and $\MayRaces$ as defined in \cref{sec:function})
\footnote{$\SHB$ is sound on all race reports, and hence this filtering is not performed.}.

\smallskip\noindent{\bf Race reports.}
Recall that a race is defined as a pair of events $(\Event_1, \Event_2)$ of the input trace.
However, the interest of the programmer is on the actual code lines $(l_1, l_2)$ that these events correspond to.
Since long traces typically come from code that executes repeatedly in a loop, 
we expect to have many racy pairs of events  $\{(\Event^i_1, \Event^i_2)\}_i$ that correspond to the same racy pair of code lines $(l_1, l_2)$, and hence all such pairs $(\Event^i_1, \Event^i_2)$ require a single fix.
Hence, although the input trace might contain many different event pairs that correspond to the same line pair, these will result in a single race report. 

\subsection{Experimental Results}\label{subsec:experimental_results}

Our evaluation is summarized in \cref{tab:experimental}.
The columns $\PR$ and $\Time$ show the number of reported races, and the time taken, respectively by each method.
The column $\FalseNegativesBound$ reports the size of the set $\MayRaces$, which gives an upper-bound on the number of false negatives of $\RaceFunction$ (see \cref{sec:function}).

\smallskip\noindent{\bf Race detection capability.}
We see that $\RaceFunction$ is very effective:~overall, it discovers hundreds of real races on all benchmarks, regardless of their size and number of processes.
In addition, $\RaceFunction$ is found complete on \emph{all} benchmarks
(i.e., our over-approximation of the false negatives in column $\FalseNegativesBound$ always reports at most $0$ false negatives).
Hence, $\RaceFunction$ manages to detect all races in our benchmark set.
To our knowledge, this is the first sound technique that reaches such a level of completeness.

On the other hand, the capability of $\HB$, $\WCP$, $\DC$ and $\SHB$ is more limited, as they all miss several races on several benchmarks.
We observe that $\WCP$ catches more races than $\HB$, and $\DC$ more races than $\WCP$.
This is predicted by theory, as $\HB$ races are $\WCP$ races~\cite{Kini17}, and $\WCP$ races are $\DC$ races~\cite{Roemer18}.
On the other hand, although $\SHB$ captures provably more races than $\HB$, $\SHB$ is incomparable with $\DC$.
In either case, $\RaceFunction$ captures more races
than $\DC$ on 10 benchmarks, and
than $\SHB$ on 6 benchmarks.
In addition, on 5 benchmarks (shown in bold), $\RaceFunction$ captures more races than any other algorithm.
In total, $\RaceFunction$ detects 71 more races than $\DC$ and 25 more races than $\SHB$.

We also remark that our algorithm provides more information than the baseline methods even on benchmarks where the number of reported races is the same.
This is because $\RaceFunction$ manages to detect that the reports in such benchmarks are complete
(i.e., no races are missed).

\begin{table}
\small
\centerline{
\begin{tabular}{|c|||c|c||c|c||c|c||c|c||c|c|c|}
\hline
\textbf{Benchmark} & \multicolumn{2}{c||}{$\boldsymbol{\HB}$}& \multicolumn{2}{c||}{$\boldsymbol{\WCP}$}& \multicolumn{2}{c||}{$\boldsymbol{\DC}$}& \multicolumn{2}{c||}{$\boldsymbol{\SHB}$}& \multicolumn{3}{c|}{$\boldsymbol{\RaceFunction}$}\\
\hline
&$\boldsymbol{\PR}$ & $\boldsymbol{\Time}$ & $\boldsymbol{\PR}$ & $\boldsymbol{\Time}$ & $\boldsymbol{\PR}$ & $\boldsymbol{\Time}$ & $\boldsymbol{\PR}$ & $\boldsymbol{\Time}$ & $\boldsymbol{\PR}$ & $\boldsymbol{\FalseNegativesBound}$ & $\boldsymbol{\Time}$ \\
\hline
\hline
\texttt{array} & 0 & 0.30s & 0 & 0.28s & 0 & 2.12s & 0 & 0.29s & 0 & 0 & 0.12s\\
\hline
\texttt{critical} & 3 & 0.29s & 3 & 0.30s & 3 & 2.11s & 8 & 0.28s & 8 & 0 & 0.10s\\
\hline
\texttt{airtickets} & 3 & 0.32s & 3 & 0.31s & 3 & 2.10s & 4 & 0.31s & 4 & 0 & 0.12s\\
\hline
\texttt{account} & 1 & 0.31s & 1 & 0.32s & 1 & 2.12s & 1 & 0.30s & 1 & 0 & 0.12s\\
\hline
\texttt{pingpong} & 2 & 0.30s & 2 & 0.31s & 2 & 2.04s & 2 & 0.31s & 2 & 0 & 0.09s\\
\hline
\texttt{bbuffer} & 2 & 0.30s & 2 & 0.31s & 2 & 2.12s & 2 & 0.31s & 2 & 0 & 0.09s\\
\hline
\texttt{mergesort} & 1 & 0.36s & 1 & 0.41s & 1 & 2.16s & 1 & 0.37s & \textbf{2} & 0 & 0.18s\\
\hline
\texttt{bubblesort} & 4 & 0.46s & 4 & 0.54s & 5 & 2.28s & 6 & 0.62s & 6 & 0 & 0.71s\\
\hline
\texttt{raytracer} & 3 & 0.51s & 3 & 0.56s & 3 & 2.57s & 3 & 0.51s & 3 & 0 & 0.23s\\
\hline
\texttt{ftpserver} & 23 & 0.79s & 23 & 1.28s & 24 & 2.75s & 23 & 0.73s & \textbf{26} & 0 & 0.88s\\
\hline
\texttt{moldyn} & 2 & 1.50s & 2 & 1.81s & 2 & 3.88s & 2 & 1.52s & 2 & 0 & 1.08s\\
\hline
\texttt{derby} & 12 & 8.53s & 12 & 14.54s & 12 & 15.29s & 12 & 8.32s & 12 & 0 & 7.84s\\
\hline
\texttt{jigsaw} & 8 & 17.51s & 10 & 21.80s & 10 & 40.89s & 9 & 17.93s & \textbf{11} & 0 & 14.65s\\
\hline
\texttt{bufwriter} & 2 & 48.64s & 2 & 2m0s & 2 & 2m59s & 2 & 47.71s & 2 & 0 & 57.37s\\
\hline
\texttt{hsqldb} & 4 & 3m53s & 4 & 3m5s & 5 & 4m23s & 9 & 3m53s & 9 & 0 & 7m1s\\
\hline
\texttt{cryptorsa} & 5 & 3m42s & 5 & 3m0s & 7 & 6m58s & 5 & 3m29s & 7 & 0 & 6m6s\\
\hline
\texttt{eclipse} & 33 & 8m1s & 34 & 7m0s & 39 & 14m44s & 54 & 7m11s & \textbf{67} & 0 & 45m23s\\
\hline
\texttt{xalan} & 7 & 8m58s & 7 & 8m25s & 9 & 20m12s & 11 & 9m8s & \textbf{15} & 0 & 7m15s\\
\hline
\texttt{lusearch} & 30 & 16m4s & 30 & 9m59s & 30 & 2h49m6s & 52 & 15m28s & 52 & 0 & 8m9s\\
\hline
\hline
\textbf{Total} & \textbf{145} & \textbf{42m0s} & \textbf{148} & \textbf{34m14s} & \textbf{160} & \textbf{3h39m} & \textbf{206} & \textbf{40m31s} & \textbf{231} & \textbf{0} & \textbf{1h15m}\\
\hline
\end{tabular}
}
\caption{
Experimental comparison between $\HB$, $\WCP$, $\DC$, $\SHB$ and our algorithm $\RaceFunction$.
The column $\FalseNegativesBound$ shows an upper bound on the number of races missed by $\RaceFunction$.
}
\label{tab:experimental}
\end{table}
\begin{table}
\parbox{.45\linewidth}{
\small
\centerline{
\begin{tabular}{|c||c|c|c|}
\hline
\textbf{Benchmark} & \textbf{Mean Distance} & \textbf{Max Distance}\\
\hline
\hline
\texttt{ftpserver} & 939 & 12K\\
\hline
\texttt{jigsaw} & 1K & 4K\\
\hline
\texttt{hsqldb} & 92K & 1M\\
\hline
\texttt{cryptorsa} & 1M & 8M\\
\hline
\texttt{eclipse} & 11M & 53M\\
\hline
\texttt{xalan} & 2.0K & 43K\\
\hline
\texttt{lusearch} & 44M & 125M\\
\hline
\end{tabular}
}
\caption{Mean and maximum distances (in number of intervening events) on races detected by $\RaceFunction$.}
\label{tab:distances}
}
\hfill
\parbox{.45\linewidth}{
\small
\centerline{
\begin{tabular}{|c||c|c|c|c|c|}
\hline
\textbf{Benchmark} &$\boldsymbol{\HB}$& $\boldsymbol{\WCP}$& $\boldsymbol{\DC}$& $\boldsymbol{\SHB}$& $\boldsymbol{\RaceFunction}$\\
\hline
\hline
\texttt{ftpserver} & 3 & 3 & 3 & 3 & 0\\
\hline
\texttt{jigsaw} & 2 & 0 & 0 & 2 & 0\\
\hline
\texttt{cryptorsa} & 1 & 1 & 0 & 1 & 0\\
\hline
\texttt{eclipse} & 16 & 11 & 8 & 10 & 0\\
\hline
\texttt{xalan} & 3 & 2 & 2 & 2 & 0\\
\hline
\texttt{lusearch} & 11 & 11 & 11 & 0 & 0\\
\hline
\hline
\textbf{Total} & \textbf{36} & \textbf{28} & \textbf{24} & \textbf{18} & \textbf{0}\\
\hline
\end{tabular}
}
\caption{
Racy locations missed by each method.
For $\HB$, $\WCP$, $\DC$, $\SHB$ the numbers are lower-bounds.
For $\RaceFunction$, the numbers are upper-bounds.
}
\label{tab:missed_locations}
}
\end{table}

In terms of race distances, we have found that $\RaceFunction$ is able to detect races that are very far apart in the input trace.
We refer to \cref{tab:distances} for a few interesting examples,
where the distance of a race $(\Event_1, \Event_2)$ is counted as the number of intervening events between $\Event_1$ and $\Event_2$ in the input trace.
For instance, in \texttt{lusearch}, the maximum race distance detected by $\RaceFunction$ is 125M events.
Note that, in general, the same memory location can be reported as racy by many data-race pairs $(\Event_1, \Event_2)$.
To assess the significance of the new races detected by $\RaceFunction$, we have also computed the number of racy memory locations that are missed by each method.
We see that $\RaceFunction$ misses $0$ memory locations (i.e., it detects all racy memory locations).
For each of $\HB$, $\WCP$, $\DC$ and $\SHB$, this number has been computed by counting how many locations have been detected by $\RaceFunction$ and missed by the corresponding method.
We refer to \cref{tab:missed_locations} for the cases where at least one method missed some racy memory location.
In total, each of the baseline methods misses tens of racy memory locations.
Finally, we have also computed location-specific race distances, as follows.
For each location $x$, we computed the minimum distance $d_{x}$ between all races on location $x$.
Hence, $d_{x}$ holds the smallest distance of a race which reveals that the location $x$ is racy.
The mean and maximum location-specific race distances in \texttt{eclipse} are 3M and 38M events, respectively,
while in \texttt{lusearch}, they are 33M and 125M events, respectively.
These numbers indicate that windowing techniques, which are typically restricted to windows of a few hundreds/thousands of events, are likely to produce highly incomplete results, and even fail to detect that certain memory locations are racy.
Similar observations have also been made in recent works~\cite{Kini17,Roemer18}.

\smallskip\noindent{\bf Scalability.}
We see that $\RaceFunction$ has comparable running time to the baseline methods, and is sometimes faster.
One clear exception is on \texttt{eclipse}, where $\RaceFunction$ requires about 45m, whereas the other methods spend between 7m and 14m.
However, this is the benchmark on which all other methods miss both the most races (at least 13, see \cref{tab:experimental}) and the most racy memory locations (at least 8, see \cref{tab:missed_locations}).
Hence, the completeness of our race reports comes at a relatively small increase in running time.

To better understand the efficiency of $\RaceFunction$, recall that its worst-case complexity is a product of two factors, $O(\alpha\cdot  \beta)$,
where $\alpha$ is the number of calls to $\RaceDecision$ for verifying race pairs, and $\beta$ is the time taken by $\Closure$ to compute the closure of the underlying partial order $P$. In the worst case, both $\alpha$ and $\beta$ are $\Theta(n^2)$ (ignoring log-factors).
In practice, we have observed that $\RaceFunction$ resorts on calling $\RaceDecision$ only a small number of times, hence $\alpha$ is small.
This illustrates the practical advantage of $\RaceFunction$ over the naive approach that just uses $\RaceDecision$ on all $\binom{n}{2}$ event pairs.
In addition, we can express $\beta$ as roughly $\beta=n+m \cdot \gamma$, where $m$ is the number of edges inserted in $P$ during closure, and $\gamma$ is the time spent for each such edge.
Using our data structure $\DS$ for representing $P$, we have $\gamma=O(\log n)$, and, although $m=\Theta(n^2)$ in the worst-case, we have observed that $m$ behaves as a constant in practice.

Finally, we note that the baseline methods may admit further engineering optimizations that reduce their running time.
Such optimizations exist for $\HB$, but we are unaware of any attempts to optimize $\WCP$, $\DC$ or $\SHB$ further.
In any case, although our tool is not faster,  the take-home message is well-supported:
$\RaceFunction$ makes sound and effectively complete race predictions, at running times comparable to the theoretically fastest, yet highly incomplete, state-of-the-art methods.

\section{Related Work}\label{SEC:RELATED_WORK}

In this section we briefly review related work on dynamic race detection.

\smallskip\noindent{\bf Predictive analyses.}
Predictive techniques aim at inferring program behavior simply by looking at given traces.
In the context of race detection, the $\CP$ partial order~\cite{Smaragdakis12} and $\WCP$ partial order~\cite{Kini17} are sound but incomplete predictive techniques based on partial orders.
A somewhat different approach was proposed recently in \cite{Roemer18}, based on the $\DC$ partial order.
$\DC$ imposes fewer orderings than $\WCP$, but is generally unsound. 
To create sound warnings, a $\DC$-race is followed by a vindication phase, which is sound but incomplete.

Other works in this domain include~\cite{Said11,Huang14,Liu16,Wang09}, which typically approach the problem based on SAT/SMT encodings.
These works are sound and complete in theory, but require exponential time.
In practice, techniques such as windowing make these methods operate fast, at the cost of sacrificing completeness.
Predictive techniques have also been used for atomicity violations and synchronization errors~\cite{Sorrentino10,Chen08,Koushik05,Huang15},
as well as in lock-based communication~\cite{Kahlon05,Farzan09,Sorrentino10}.

\smallskip\noindent{\bf Happens-before techniques.}
A large pool of race detectors are based on Lamport's happens-before relation~\cite{Lamport78}, which yields the $\HB$ partial order.
$\HB$ can be computed in linear time~\cite{Mattern89} and has been the technical basis behind many approaches~\cite{Schonberg89,Christiaens01,Pozniansky03,Flanagan09,Bond10}.
The tradeoff between runtime and space usage in race-detection using the happens-before relation was studied in~\cite{Banerjee06}.
Recently, the $\SHB$ partial order was proposed as an extension to $\HB$ in order to effectively detect multiple races per trace~\cite{Mathur18}.

\smallskip\noindent{\bf Lockset-based techniques.}
A lockset of a variable is the set of locks that guard critical regions in which the variable is accessed.
Lockset-based techniques report races by comparing the locksets of the variables accessed by the corresponding events.
They were introduced in~\cite{Dinning91} and equipped by the tool of~\cite{Savage97}.
Lockset-based techniques tend to produce many false positives and this problem has been targeted by various enhancements such as
random testing~\cite{Sen08} and
static analysis~\cite{vonPraun01,Choi02}.

\smallskip\noindent{\bf Other approaches.}
To reduce unsound reports, lockset-based techniques have been combined with happens-before techniques~\cite{Elmas07,OCallahan03,Yu05}.
Other approaches include statistical techniques~\cite{Marino09,Bond10} and static race-detectors~\cite{Naik06,Voung07,Pratikakis11}.
Recently,~\cite{Genc19} applied a combination of static and dynamic techniques to allow for correct reorderings in which the observation of some read events is allowed to differ between the input and witness trace, as long as the read does not affect the control-flow of the respective thread.
Such static information can be directly incorporated in the techniques we have developed in this paper, and is left for interesting follow-up work.
Dynamic race detection has also been studied under structured parallelism~\cite{Raman12}
and relaxed memory models~\cite{Kim09,Lidbury17}.

\section{Conclusion}\label{sec:conclusion}
We have presented $\RaceFunction$: a new polynomial-time algorithm for the problem that has no false positives.
In addition, our algorithm is \emph{complete} for input traces that consist of two processes,
i.e., it provably detects \emph{all} races in the trace.
We have also developed criteria for detecting completeness dynamically, even in the case of more than two processes.
Our experimental validation found that $\RaceFunction$ is very effective in practice, as it soundly reported all races in the input benchmark set.
Although $\RaceFunction$ is not theoretically complete in the general case, we believe that its completeness guarantee on two processes provides some explanation of why it performs so well in practice.

\begin{acks}
I am grateful to Umang Mathur for his valuable assistance in the experimental part of the paper,
to Viktor Kun\v cak for his insightful comments in earlier drafts, 
and to anonymous reviewers for their constructive feedback.
This work is partly supported by the \grantsponsor{FWF}{Austrian Science Fund (FWF) Schr\"odinger}{} grant \grantnum{}{J-4220}.
\end{acks}


\appendix
\section{Incremental DAG Reachability}\label{sec:dag_reach}

In this section we target the problem of solving incremental reachability on Directed Acyclic Graphs (DAGs).
Informally, we are given a DAG and an online sequence of 
(i)~edge-insertion and
(ii)~reachability query operations.
The task is to answer each reachability query correctly, accounting for all preceding edge-insertion operations.
Here we develop a data structure $\DS$ for solving the problem efficiently on DAGs of small width.
In the main paper we use $\DS$ to compute the closure of partial orders efficiently.
We expect that $\DS$ might be of relevance also to other race-detection techniques that are graph-based.

\smallskip\noindent{\bf Directed acyclic graphs of small width.}
Let $G=(V,E)$ be a DAG and $E^*$ be the transitive closure of $E$.
Note that $E^*$ is a partial order, and we let $\Width(G)=\Width(E^*)$.
Our focus is on DAGs of small width, i.e., we take $\Width(G)=k=O(1)$.
For $u,v\in V$, we write $u\Path v$ if $v$ is reachable from $u$.
We represent $G$ as $k$ (totally ordered) chains with extra edges between them.
We let $V\subseteq [k]\times [n]$, so that a node of $G$ is represented as a pair $(i,j)$, meaning that it is the $j$-th node in the $i$-th chain.
For two nodes $\CNode{i, j_1}, \CNode{i, j_2}\in V$ with $j_2=j_1+1$, we have $((i,j_1), (i,j_2))\in E$.
Given two nodes $\CNode{i, j_1}, \CNode{i,j_2}$, we say that $\CNode{i, j_1}$ is \emph{higher than} $\CNode{i, j_2}$ if $j_1\leq j_2$.
In such a case, we say that $\CNode{i,j_2}$ is \emph{lower than} $\CNode{i, j_1}$.
The edge set is represented as a set of arrays $\Outgoing_{i_1}^{i_2}\to [n]\cup \{\infty\}$, where $i_1, i_2\in [k]$ and $i_1\neq i_2$.
We have that $\Outgoing_{i_1}^{i_2}[j_1]=j_2\in [k]$ iff $(\CNode{i_1, j_1}, \CNode{i_2, j_2})\in E$.
Note that since $k=O(1)$, such a representation requires $O(n)$ space even if $G$ is a dense graph.

\smallskip\noindent{\bf Incremental reachability on DAGs of small width.}
The \emph{incremental reachability} problem on a DAG $G=(V,E)$ is defined on an online sequence of operations of the following types.
\begin{compactenum}
\item An $\Insert(u,v)$ operation, such that $v\not \Path u$, inserts the edge $u,v$ in $G$.
\item A $\Query(u,v)$ operation returns $\True$ iff $u\Path v$.
\item A $\Successor(u,i)$ operation returns the highest successor of $u$ in the $i$-th chain.
\item A $\Predecessor(u,i)$ operation returns the lowest predecessor of $u$ in the $i$-th chain.
\end{compactenum}
The task is to answer $\Query$ operations correctly, taking into consideration all preceding $\Insert$ operations.
Note that the width of $G$ does not increase after any operation.
We will present a data structure that handles each such query in $O(\log n)$ time.
Our data structure is based on the dynamic suffix minima problem, presented below.

\smallskip\noindent{\bf Dynamic suffix minima and Fenwick-trees.}
The \emph{dynamic suffix minima} problem is defined given an integer array $A$ of length $n$, and an online sequence of operations of the following types.
\begin{compactenum}
\item An $\Update(i,x)$ operation, for $1\leq i \leq n$ and $x\in \Ints$, sets $A[i]=x$.
\item A $\Min(i)$ operation, for $1\leq i\leq n$, returns $\min_{i\leq j\leq n}A[j]$.
\item An $\ArgMin(i)$ operation, for $1\leq i\leq n$, returns $\max_{j\colon A[j]\leq i} j$.
\end{compactenum}
The task is to answer $\Min$ and $\ArgMin$ operations correctly, taking into consideration all preceding $\Update$ operations.
The Fenwick-tree data structure solves the dynamic suffix minima problem in $O(\log n)$ time per operation, after $O(n)$ preprocessing time~\cite{Fenwick94}.

\smallskip\noindent{\bf The data structure $\DS$ for solving the incremental reachability problem.}
We are now ready to describe our data structure $\DS$ for solving the incremental reachability problem
given a DAG $G=(V,E)$ of width $k$, and an online sequence $\Sigma$ of $\Insert$ and $\Query$ operations.
We consider that $G$ is given in a sparse representation form, where outgoing edges
are represented using the arrays $\Outgoing_{i_1}^{i_2}$, for each $i_1, i_2\in [k]$ and $i_1\neq i_2$.

In the initialization phase, $\DS$ performs the following steps. 
For each $i_1,i_2\in[k]$ such that $i_1\neq i_2$,
we initialize a Fenwick-tree data structure $\Fenwick_{i_1}^{i_2}$ with array $\Outgoing_{i_1}^{i_2}$.
This data structure stores forward reachability information from nodes of the $i_1$-th chain to nodes in the $i_2$ chain,
by maintaining the invariant that $\Fenwick_{i_1}^{i_2}.\Min(j_1)=j_2$ iff $\CNode{i_2, j_2}$ is the highest node of the $i_2$-th chain reachable from $\CNode{i_1, j_2}$.
A $\Successor(\CNode{i_1, j_1},i)$ (resp., $\Predecessor(\CNode{i_1, j_1},i)$) operation is handled by $\DS$ by returning $\Fenwick_{i_1}^{i}.\Min(j_1)$ (resp., $\Fenwick_{i_1}^{i}.\ArgMin(j_1)$).
Finally, the operations $\DS.\Insert$ and $\DS.\Query$ are handled by \cref{algo:insert} and \cref{algo:query}, respectively.
The following lemma establishes the correctness and complexity of the data structure $\DS$.


\smallskip
\begin{restatable}{lemma}{datastructure}\label{lem:data_structure}
Let $\Sigma$ be an online sequence of incremental reachability operations.
The data structure $\DS$ correctly handles $\Sigma$ and spends
(i)~$O(n)$ preprocessing time and
(ii)~$O(\log n)$ time per operation.
\end{restatable}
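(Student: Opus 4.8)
The plan is to split the statement into the preprocessing bound, the per-operation bound, and correctness, establishing correctness through a reachability invariant that I would maintain by induction on the processed prefix of $\Sigma$. For preprocessing I would observe that, since $k=O(1)$, there are only $O(k^2)=O(1)$ Fenwick-trees $\Fenwick_{i_1}^{i_2}$, each built over an array of length at most $n$, so by the $O(n)$ build bound for a single Fenwick-tree~\cite{Fenwick94} the total is $O(n)$. For the per-operation bound, a $\Query(u,v)$ with $u,v$ in the same chain is resolved in $O(1)$ by comparing chain indices, and otherwise with one $\Fenwick_{i_1}^{i_2}.\Min$ call plus $O(1)$ work; $\Successor$ and $\Predecessor$ are a single $\Min$, resp.\ $\ArgMin$, call; and $\Insert(u,v)$ reads $\Predecessor(u,a)$ and $\Successor(v,b)$ for all $a,b\in[k]$ and then performs $O(k^2)$ Fenwick $\Update$s, for a total of $O(k^2\log n)=O(\log n)$.

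For correctness, the invariant I would carry is: after any prefix of $\Sigma$, for all distinct $i_1,i_2$ and all indices $j_1$, $\Fenwick_{i_1}^{i_2}.\Min(j_1)$ is the smallest $j_2$ with $\CNode{i_1,j_1}\Path\CNode{i_2,j_2}$ in the current graph (and $\infty$ if none). Since reachability along a chain runs only from a lower to a higher index, $\CNode{i_1,j_1}\Path\CNode{i_2,j_2}$ iff the highest reachable chain-$i_2$ node sits at index $\leq j_2$, so the invariant makes $\Query$, $\Successor$, and $\Predecessor$ correct at once. The base case is the initial graph — the disjoint union of the $k$ chains, for which the invariant holds trivially since every $\Outgoing$ entry is $\infty$; all cross-edges are then introduced by $\Insert$ operations, which is the regime in which $\Closure$ uses $\DS$. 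The inductive step is nontrivial only for $\Insert(u,v)$ with $u=\CNode{i_1,j_1}$, $v=\CNode{i_2,j_2}$ and precondition $v\not\Path u$. I would first establish the structural fact that in a DAG, adding edge $(u,v)$ makes $x\Path y$ afterwards iff $x\Path y$ already, or ($x\Path u$ and $v\Path y$) before: on any new path look at its first use of $(u,v)$; its prefix from $x$ to $u$ cannot use $(u,v)$ (that would return to $u$, a cycle), and its suffix from $v$ to $y$ cannot use $(u,v)$ either (that would witness $v\Path u$ before, against the precondition). Hence the chain-$a$ nodes gaining reachability are exactly $\{\CNode{a,p}:p\leq p^\ast_a\}$ with $p^\ast_a=\Predecessor(u,a)$ (and $p^\ast_{i_1}=j_1$), each newly reaching exactly $\{\CNode{b,q}:q\geq q^\ast_b\}$ with $q^\ast_b=\Successor(v,b)$ (and $q^\ast_{i_2}=j_2$); by the old invariant these are precisely the values the Fenwick-trees return before any update. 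Setting $\Outgoing_a^b[p^\ast_a]\gets\min(\Outgoing_a^b[p^\ast_a],q^\ast_b)$ for every ordered pair of distinct chains $(a,b)$ then restores the invariant: for $p\leq p^\ast_a$ we get $\Fenwick_a^b.\Min(p)=\min(\text{old }\Min(p),q^\ast_b)$, the smallest index of a chain-$b$ node now reachable from $\CNode{a,p}$, while for $p>p^\ast_a$ nothing changes, as it should.

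\textbf{Main obstacle.} The delicate point is exactly the $\Insert$ step: arguing that touching a single entry per Fenwick-tree — the lowest old predecessor $p^\ast_a$ against the highest old successor $q^\ast_b$ — already records \emph{all} newly reachable pairs, including ones that thread through several intermediate chains, e.g.\ $\CNode{a,p}\Path u\to v\Path\CNode{c,r}\Path\CNode{b,q}$. This goes through because $\Successor(v,b)$ is the \emph{transitive} highest reachable node (supplied by the inductive hypothesis), together with chain reachability being one-directional, so that the suffix-minima broadcast the new information to precisely the correct prefix of each chain. The secondary care point is tracking where the precondition $v\not\Path u$ is used — it is exactly what excludes the self-referential path in the structural fact above, and it is also what keeps the graph acyclic so that the width, and hence the $O(1)$ bound on the number of Fenwick-trees, is preserved.
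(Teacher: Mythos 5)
Your proof is correct and takes essentially the same route as the paper: the identical suffix-minima invariant for the trees $\Fenwick_{i_1}^{i_2}$ (highest reachable node of each chain), combined with the same counting argument that $O(k^2)=O(1)$ Fenwick trees yield $O(n)$ preprocessing and $O(\log n)$ per operation. The only difference is one of detail: the paper declares the invariant ``straightforward to establish,'' whereas you actually carry out the inductive step via the edge-insertion reachability characterization and the min-with-old-value update at the frontier positions $p^\ast_a$, $q^\ast_b$ --- and your reading of $\Update$ as a min-update (together with the cross-edge-free initialization) is indeed the right way to make the paper's sketch go through.
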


\begin{minipage}[t]{.47\textwidth}
\vspace{-0.3cm}
\begin{algorithm}[H]
\small
\SetInd{0.4em}{0.4em}
\DontPrintSemicolon
\caption{$\DS.\Insert(\CNode{i_1, j_1},\CNode{i_2, j_2})$}\label{algo:insert}
\BlankLine
\ForEach{$i'_1\in [k]$}{
\ForEach{$i'_2\in [k]$}{
Let $j'_1\gets \Predecessor(\CNode{i_1, j_1}, i'_1)$\\
Let $j'_2\gets \Successor(\CNode{i_2, j_2}, i'_2)$\\
$\Fenwick_{i'_1}^{i'_2}.\Update(j'_1, j'_2)$\\
}
}
\end{algorithm}
\end{minipage}
\qquad
\begin{minipage}[t]{.47\textwidth}
\vspace{-0.3cm}
\begin{algorithm}[H]
\small
\SetInd{0.4em}{0.4em}
\DontPrintSemicolon
\caption{$\DS.\Query(\CNode{i_1, j_1},\CNode{i_2, j_2})$}\label{algo:query}
\BlankLine
Let $j'_2\gets \DS.\Successor(\CNode{i_1,j_1}, i_2)$\label{line:successor}\\
\eIf{$j'_2\leq j_2$}{
\Return{$\True$}
}{
\Return{$\False$}
}
\end{algorithm}
\end{minipage}

\vspace{0.5cm}
\section{Definition of $\HB$, $\SHB$, $\WCP$ and $\DC$}\label{sec:hb_wcp_dc}
For the sake of completeness, here we give the definitions of the partial orders $\HB$, $\SHB$, $\WCP$ and $\DC$, based on~\cite{Kini17,Mathur18,Roemer18}.
In each case, we consider given a trace $\Trace$, and the respective partial order is over the set of events $\Events{\Trace}$ of $\Trace$.

\smallskip\noindent{\bf The $\HB$ partial order }
is the smallest partial order that satisfies the following conditions~.
\begin{compactenum}
\item\label{item:hb1} $\HB\Refines \TO$.
\item\label{item:hb2} For every lock-acquire and lock-release events $\Acquire$ and $\Release$ such that $\Release<_{\Trace}\Acquire$,
if $\Confl{\Acquire}{\Release}$ then $\Release<_{\HB} \Acquire$.
\end{compactenum}

\smallskip\noindent{\bf The $\SHB$ partial order }
is the smallest partial order that satisfies the following conditions.
\begin{compactenum}
\item\label{item:shb1} $\SHB\Refines \HB$.
\item\label{item:shb2} For every read event $\Read$ we have $\Observation_{\Trace}(\Read)<_{\SHB}\Read$.
Recall that $\Observation_{\Trace}(\Read)$ is the observation of $\Read$ in $\Trace$.
\end{compactenum}

\smallskip\noindent{\bf The $\WCP$ partial order }
is the smallest partial order that satisfies the following conditions.
\begin{compactenum}
\item\label{item:wcp1} $\WCP\Refines \TO$.
\item\label{item:wcp2} For every lock-release event $\Release$ and write/read event $\Event$ such that $\Release<_{\Trace} \Event$, if
(i)~$\Event$ is protected by a lock $\ell=\Location{\Release}$ and 
(ii)~$\Event$ conflicts with an event in the critical section of $\Release$,
then $\Release<_{\WCP}\Event$.
\item\label{item:wcp3} For every two lock-release events $\Release_1, \Release_2$ such that $\Release_1<_{\Trace}\Release_2$,
if the critical sections of $\Release_1$ and $\Release_2$ contain $\WCP$-ordered events, then $\Release_1<_{\WCP}\Release_2$.
\item $\WCP$ is closed under left and right composition with $\HB$.
\end{compactenum}

\smallskip\noindent{\bf The $\DC$ partial order }
is the smallest partial order that satisfies  conditions~\ref{item:wcp1}, \ref{item:wcp2} and \ref{item:wcp3} of $\WCP$.

\clearpage

\bibliography{bibliography}

\clearpage

\section{Missing Proofs}\label{sec:missing_proofs}

\subsection{Proofs of Section~{\ref{sec:po}}}\label{SEC:MISSING_PROOFS_MINMAX}

\smallskip
\maxmin*
\begin{proof}
We first argue that the partial order $Q$ defined in \cref{line:Q} is indeed a partial order, and thus the linearization $\Trace^*$ is well-defined.
Assume towards contradiction otherwise, hence there exist two events $\Event_1, \Event_2\in X$ such that $\Event_i\in X_i$ for each $i\in [2]$ and the algorithm inserts an edge $\Event_1 \to \Event_2$ in $Q$.
Since all edges inserted in $Q$ go from $X_1$ to $X_2$, there exists an event $\Event'_1\in X_2$ such that $\Event_1<_{P} \Event'_1$ and $\Event'_1<_{P} \Event_2$.
But then $\Ordered{\Event_1}{P}{\Event_2}$, and the algorithm could not have inserted the edge $\Event_1\to \Event_2$ in $Q$, a contradiction.
Hence $Q$ is a partial order.

Note that $Q\Refines P$ and thus $\Trace^*$ is a linearization of $P$.
We show that 
(i)~the observation function of $\Trace^*$ agrees with the observation function of $\Trace$ and
(ii)~$\Trace^*$ respects the lock semantics.

\smallskip\noindent{\em Observations.}
Consider any read event $\Read\in \Reads{X}$, and let $\Write=\Observation_{\Trace}(\Read)$.
Since $P$ respects $\Trace$, we have that $\Write\in X$ and $\Write<_{P} \Read$, and since $\Trace^*$ is a linearization of $P$,
we have $\Write<_{\Trace^*}\Read$.
Let $\Write'\in \Writes{X}$ be any write event such that $\Confl{\Write}{\Read}$ and $\Write'\neq \Write$,
and we will argue that if $\Write'<_{\Trace^*}\Read$ then $\Write'<_{\Trace^*} \Write$.
We distinguish the following cases.
\begin{compactenum}
\item If $\Ordered{\Write'}{P}{\Read}$ or $\Ordered{\Write'}{P}{\Write}$, we have $\Write'<_{P} \Read$ and since $P$ is observation closed we have $\Write'<_{P} \Write$.
\item Otherwise, since $(P\Project X_i)$ is an M-trace for each $i\in [2]$, it follows that $\Read, \Write\in X_i$ and $\Write'\in X_{3-i}$, for some $i\in [2]$.
Since $\Write'<_{\Trace^*} \Read$, we have that $\Write'\in X_1$, and \cref{line:e2toe1} guarantees that $\Write'<_{Q} \Write$, and thus $\Write'<_{\Trace^*} \Write$.
\end{compactenum}

\smallskip\noindent{\em Locks.}
Consider two lock acquire events $\Acquire_1, \Acquire_2\in \Acquires{X}$ with  $\Location{\Acquire_1}=\Location{\Acquire_2}=\ell$.
Let $\Release_i=\Match[\Trace]{\Acquire_i}$ for each $i\in \{1,2\}$.
Assume wlog that $\Acquire_1<_{\Trace^{*}} \Acquire_2$, and observe that $\Release_1\in X$.
Indeed, since $P$ respects $\Trace$, if $\Release_1\not \in X$, we would have $\Release_2\in X$ and $\Release_2<_{P} \Acquire_1$,
and since $P$ is lock closed, we would also have $\Acquire_2<_{P} \Acquire_1$, a contradiction.
We will argue that $\Release_1<_{\Trace^{*}}\Acquire_2$.
We consider the following cases.
\begin{compactenum}
\item If $\Release_2\not \in X$ or $\Release_2\in X$ and $\Acquire_1 <_{P} \Release_2$, since $P$ is lock-closed,
we have $\Release_1<_{P} \Acquire_2$, and since $\Trace^{*}$ is a linearization of $P$,
we conclude that $\Release_1<_{\Trace^{*}}\Acquire_2$.
\item Otherwise, if $\Acquire_2 <_{P} \Release_1$, since $P$ is lock closed, we have $\Release_2 <_{P} \Acquire_1$ and thus $\Acquire_2 <_{P} \Acquire_1$, a contradiction.
\item Finally, we have $\Unordered{\Acquire_i}{P}{\Release_{3-i}}$ for each $i\in \{1,2\}$.
Since for each $i\in[2]$ we have that each $Q\Project X_i$ is an M-trace, we have that $\Acquire_i$ and $\Release_{3-i}$ do not belong to the same set $X_j$.
Since $\Acquire_1<_{\Trace^*} \Acquire_2$, we have that $\Acquire_1\in X_1$ and thus $\Release_1\in X_1$ and $\Acquire_2\in X_2$.
Hence, \cref{line:e2toe1} ensures that $\Release_1<_{Q} \Acquire_2$, as desired.
\end{compactenum}

\end{proof}

\smallskip
\closureunique*
\begin{proof}
Assume towards contradiction otherwise, and consider two partial orders $Q_1, Q_2$  with properties (i) and (ii),
and such that $Q_i\not \Refines Q_{3-i}$ for each $i\in[2]$.
Let $Q=Q_1\cap Q_2$, and hence $Q_1,Q_2 \Refines Q$. We argue that $Q$ is closed.
First, observe that each $Q_i$ respects $\Trace$, and hence $Q$ respects $\Trace$.

We now argue that $Q$ is observation-closed.
For every read event $\Read\in \Reads{X}$ and write event $\Write\in \Writes{X}$ such that
$\Confl{\Write}{\Read}$ and $\Write\neq \Observation_{\Trace}(\Read)$,
if $\Write<_{Q} \Read$ then $\Write<_{Q_i} \Read$ for each $i\in[2]$.
Since each $Q_i$ is closed, we have $\Write <_{Q_i} \Observation_{\Trace}(\Read)$ for each $i$, and thus
$\Write <_{Q} \Observation_{\Trace}(\Read)$.
Similarly if $\Observation_{\Trace}(\Read) <_{Q} \Write$, we conclude that $\Read <_{Q} \Write$.

Finally, we argue that $Q$ is lock-closed.
Consider any pair of lock-release events $\Release_1, \Release_2\in \Releases{X}$,
and let $\Acquire_i=\Match[\Trace]{\Release_i}$ for each $i\in[2]$.
If $\Acquire_1<_{Q} \Release_2$,
then $\Acquire_1<_{Q_i} \Release_2$ for each $i\in[2]$.
Hence $\Release_1<_{Q_i} \Acquire_2$ for each $i\in[2]$, and thus
$\Release_1<_{Q} \Acquire_2$.
\end{proof}

\smallskip
\smallskip
\begin{restatable}{lemma}{closurecorrectness}\label{lem:closure_correctness}
The algorithm $\Closure$ (\cref{algo:closure}) correctly computes the closure of $P$.
\end{restatable}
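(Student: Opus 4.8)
\textbf{Proof plan for \cref{lem:closure_correctness}.}
The plan is to establish two directions: (soundness) every ordering inserted by $\Closure$ is forced in any closed refinement of $P$, so if the algorithm returns $\bot$ then $P$ has no closure and if it returns $\DS$ then the stored partial order $Q$ refines every closed refinement of $P$; and (completeness) the partial order $Q$ the algorithm returns is itself closed, hence it is the smallest closed refinement, i.e.\ the closure. Together with \cref{lem:closure_unique} these give the claim.

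For soundness I would argue by induction on the sequence of $\Queue.\Push$ operations that whenever $\Closure$ pushes a pair $(\Event_1,\Event_2)$, then $\Event_1 <_{Q'} \Event_2$ holds in \emph{every} partial order $Q'$ with $Q'\Refines P$ and $Q'$ closed. The base case covers the program-order edges inserted in the initialization loop (\cref{line:po_edges}), which hold because $P\Refines \TO\Project X$. For the inductive step, I would check the three generating mechanisms: (a) each call $\ObsClosure(\Event_1,\Event_2)$ (\cref{algo:obs_closure}) is pushing exactly the pairs demanded by the two observation-closure implications of \cref{fig:closure_defs}, reading off $\Read = \From_x^{\SysReads}(\Event_2)$, $\Write = \To_x^{\SysWrites}(\Event_1)$, $\ov\Write = \From_x^{\SysWrites}(\Event_2)$ and using that the $\From/\To$ maps pick the process-local neighbours, so that $\Event_1<_{Q'}\Event_2$ already implies the corresponding ordering $\Write<_{Q'}\Read$ or $\Write<_{Q'}\ov\Write$ that triggers the closure rule; (b) each call $\LockClosure(\Event_1,\Event_2)$ (\cref{algo:lock_closure}) pushes $(\Release_1,\Acquire_2)$ where $\Acquire_1 = \To_\ell^{\SysAcquires}(\Event_1)$, $\Release_1 = \Match{\Acquire_1}$, $\Release_2 = \From_\ell^{\SysReleases}(\Event_2)$, $\Acquire_2 = \Match{\Release_2}$, which is exactly the instance of lock-closure once $\Event_1<_{Q'}\Event_2$ forces $\Acquire_1\leq_{Q'}\Release_2$; and (c) the $O(k^2)$ transitive orderings that $\DS.\Insert$ materializes on \cref{line:ds_insert_edge} are forced by transitivity of $Q'$. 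Given this, if $\DS.\Query(\ov\Event_2,\ov\Event_1)=\True$ on \cref{line:cycle_formed}, we would get $\ov\Event_2 <_{Q'}\ov\Event_1 <_{Q'}\ov\Event_2$, impossible; so no closed refinement exists and returning $\bot$ is correct. Also this shows the edge set of $Q$ is contained in every closed refinement, so $Q$ is at least as weak as the closure.

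For completeness, assuming the algorithm does not return $\bot$, I would show the final $Q$ (stored in $\DS$) is closed, i.e.\ satisfies all rules of \cref{fig:closure_defs}. The key observation is that each closure rule can only be \emph{created} by some ordering $\Event_1 <_Q \Event_2$, and every such ordering was either a program-order edge (handled in the initialization loop) or was materialized by a $\DS.\Insert$ on \cref{line:ds_insert_edge}; in both cases the code immediately invokes $\ObsClosure(\Event_1,\Event_2)$ and $\LockClosure(\Event_1,\Event_2)$ on it (\cref{line:obs_closure_pre,line:lock_closure_pre,line:obs_closure_comp,line:lock_closure_comp}), and the resulting required orderings are pushed to $\Queue$; since the while-loop (\cref{line:while}) runs until $\Queue$ is empty, each such required ordering is eventually inserted into $\DS$ (the $\DS.\Query$ guard on \cref{line:edge_exists} only skips insertion when the ordering is already present, which is fine). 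One then checks, rule by rule, that after all the $\From/\To$-indexed neighbours have been ordered, no instance of the three closure conditions is violated — this is where I'd use the fact that $\From_x^{\SysReads}(\Event_2)$, etc., are precisely the "tightest" process-local witnesses, so covering them covers all instances via transitivity and program order. Combined with the soundness direction (edges of $Q$ lie in every closed refinement), $Q$ is the \emph{smallest} closed refinement, which by \cref{lem:closure_unique} is the unique closure of $P$.

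The main obstacle I anticipate is the bookkeeping in the completeness direction: arguing that the $\From/\To$-indexed "local neighbour" versions of the closure rules fired by $\ObsClosure$/$\LockClosure$ genuinely imply the full (quantified-over-all-conflicting-events) versions in \cref{fig:closure_defs}. This requires a careful argument that if some distant conflicting write $\Write'$ (resp.\ release/acquire pair) violates a rule, then following program order back to the local neighbour that the maps return, and using transitivity of $Q$, one obtains a violation already witnessed by a pair the algorithm processed — so it was repaired. A secondary subtlety is confirming termination of the while-loop: since the underlying graph has at most $n^2$ possible edges and \cref{line:edge_exists} prevents re-insertion of an existing edge, only finitely many iterations do real work, so $\Queue$ empties; this also underlies the complexity claim of \cref{them:closure} but only termination is needed here.
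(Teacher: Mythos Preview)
Your proposal is correct and follows essentially the same approach as the paper: the paper also treats soundness (every pushed edge is forced in any closed refinement) as straightforward and concentrates on showing the returned $Q$ is closed, handling the ``local neighbour suffices'' issue you flag by picking, for each closure rule, the extremal witness (the \emph{last} conflicting $\Write'$ with $\Write'<_Q\Read$, resp.\ the \emph{first} with $\Write<_Q\Write'$) and arguing that the $\From/\To$ maps return exactly that witness, so the rule instance was pushed. Your anticipated obstacle and its resolution via transitivity and program order are precisely the crux of the paper's argument.
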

\begin{proof}
We argue that the partial order $Q$ stored in the data structure $\DS$ returned in \cref{line:return} represents the closure of $P$.
Because of \cref{line:cycle_formed}, it is easy to see that $\DS$ represents indeed a partial order $Q$, and due to \cref{line:po_edges}, 
we have that $Q\Refines P$.
We will argue that $Q$ is closed.

We start by showing that $Q$ is observation-closed.
Consider any read event $\Read\in \Reads{X}$, and let $\Location{\Read}=x$ and $\Write=\Observation_{\Trace}(\Read)$.
Consider any write event $\Write'\in \Writes{X}$ such that $\Confl{\Write'}{\Read}$ and $\Write'\neq \Write$.
\begin{compactenum}
\item Assume that $\Write' <_{Q} \Read$, and we will show that $\Write'<_{Q} \Write$.
We prove the claim for $\Write'$ being any \emph{last} such event, i.e.,
for every other $\Write''$ with $\Confl{\Write''}{\Read}$, $\Write''\neq \Observation_{\Trace}(\Read)$ and $\Write''<_{Q}\Read$, we have that
$\Write'\not< _{Q}\Write''$.
Clearly, this establishes the claim for all $\Write'$.
Note that there exist two events $\Event_1,\Event_2$ such that 
\begin{compactenum}
\item $\Write'<_{\TO}\Event_1$ and $\Event_2<_{\TO} \Read$ (possibly $\Event_1=\Write'$ and $\Event_2=\Read$), and
\item either $\Event_1<_{P}\Event_2$ or the algorithm performs a $\DS.\Insert(\Event_1, \Event_2)$ in \cref{line:ds_insert_edge}.
\end{compactenum}
By the choice of $\Write'$, we have that (i)~$\Write'=\To_x^{\SysWrites}(\Event_1)$ and (ii)~$\Observation_{\Trace}(\Read)=\Observation_{\Trace}(\From_x^{\SysReads}(\Event_2))=\Write$.
To see(i), note that if $\Write\neq \To_x^{\SysWrites}(\Event_1)$, this violates our choice of $\Write'$ being a last conflicting write.
To see (ii), let $\Observation_{\Trace}(\To_x^{\SysReads}(\Event_2))=\Write''$ and observe that $\Write''<_{Q} \Read$.
Because of \cref{line:obs_before} in $\ObsClosure$, we have $\Write'<_{Q} \Write''$, and thus if $\Write\neq \Write''$, this violates our choice of $\Write'$ being a last conflicting write.
After \cref{line:obs_closure_pre} of $\ObsClosure$ is executed, we have $\Write<_{Q}\Write$, as desired.

\item Assume that $\Write <_{Q} \Write'$, and we will show that $\Read<_{Q} \Write'$.
We prove the claim for $\Write'$ being any \emph{first} such event, i.e.,
for every other $\Write''$ with $\Confl{\Write''}{\Read}$, $\Write''\neq \Observation_{\Trace}(\Read)$ and $\Write<_{Q}\Write''$, we have that $\Write''\not <_{Q} \Write'$.
Clearly, this establishes the claim for all $\Write'$.
Note that there exist two events $\Event_1,\Event_2$ such that 
\begin{compactenum}
\item $\Write<_{\TO}\Event_1$ and $\Event_2<_{\TO} \Write'$ (possibly $\Event_1=\Write$ and $\Event_2=\Write'$), and
\item either $\Event_1<_{P}\Event_2$ or the algorithm performs a $\DS.\Insert(\Event_1, \Event_2)$ in \cref{line:ds_insert_edge}.
\end{compactenum}
By the choice of $\Write'$, we have that (i)~$\Write=\To_{x}^{\SysWrites}(\Event_1)$ and (ii)~$\Write'=\From_{x}^{\SysWrites}(\Event_2)$.
To see (i), note that if $\Write\neq \To_{x}^{\SysWrites}(\Event_1)$, this violates our choice of $\Write'$ being a first conflicting write.
Similarly, to see (ii), note that if $\Write'\neq \From_{x}^{\SysWrites}(\Event_2)$, this also violates our choice of $\Write'$ being a first conflicting write.
After \cref{line:obs_before} of $\ObsClosure$ is executed, we have $\Read' <_{Q} \Write'$, where $\Read'=\LastFlow_{\Trace}^i(\Write)$ for $i$ such that $\Process_i=\Proc{\Read}$. By construction, we have $\Read<_{\TO}\Read'$, and since $Q\Refines \TO\Project X$, we have $\Read<_{Q}\Write'$, as desired.
\end{compactenum}

We now show that $Q$ is lock-closed.
Consider any pair of lock-release events $\Release_1, \Release_2\in \Releases{X}$, let $\Acquire_i=\Match[\Trace]{\Release_i}$, for $i\in [2]$,
and assume that $\Confl{\Release_2}{\Acquire_1}$ and $\Acquire_1<_{Q} \Release_2$
We will show that $\Release_1<_{Q}\Acquire_2$.
Observe that in this case, there exist two events $\Event_1, \Event_2$ with $\Acquire_1<_{\TO} \Event_1$ and $\Event_2<_{\TO} \Release_2$,
and either $\Event_1<_{P} \Event_2$ or the algorithm performs a $\DS.\Insert(\Event_1, \Event_2)$ in \cref{line:ds_insert_edge}.
In either case, the algorithm calls $\LockClosure(\Event_1, \Event_2)$ (in \cref{line:lock_closure_pre} for the former case, and in \cref{line:lock_closure_comp} for the latter).
The well-nestedness of locks in $\Trace$ guarantees that $\Release_1=\From_{l}^{\SysReleases}(\Event_1)$ and $\Release_2=\From_{l}^{\SysReleases}(\Event_2)$.
If $\Event_2<_{\TO} \Acquire_2$, then by transitivity we have $\Release_1<_{Q}\Acquire_2$, and we are done.
Otherwise, \cref{line:lock} of $\LockClosure$ will be executed, and thus $\Release_1<_{Q}\Acquire_2$.

Hence, we have shown that the partial order $Q$ represented by the data structure $\DS$ at the end of $\Closure$ is closed.
\end{proof}

We now turn our attention to complexity.

\smallskip
\smallskip
\begin{restatable}{lemma}{closurecomplexity}\label{lem:closure_complexity}
$\Closure$ (\cref{algo:closure}) requires $O(n^2\cdot \log n)$ time.
\end{restatable}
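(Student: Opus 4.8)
The plan is to charge the entire running time to two kinds of unit costs: operations on the data structure $\DS$, and items pushed onto the worklist $\Queue$. By \cref{lem:data_structure}, initializing $\DS$ costs $O(n)$ and every subsequent $\DS$-operation ($\Query$, $\Insert$, $\Successor$, $\Predecessor$) costs $O(\log n)$; moreover, since $\Width(P)=k=O(1)$, a single $\DS.\Insert$ materializes only $O(k^2)=O(1)$ new transitive orderings (the inner loop on \cref{line:inner_loop}), each of which triggers exactly one call to $\ObsClosure$ and one to $\LockClosure$. Each such call makes $O(|\Globals|)$ (resp. $O(|\Locks|)$) iterations, and every iteration does a constant number of $O(1)$-time lookups into the precomputed maps ($\From$, $\To$, $\Observation_{\Trace}$, $\LastFlow$, and the matching-event maps) together with a constant number of $O(1)$-time $\Queue.\Push$ operations; treating $|\Globals|$, $|\Locks|$ and $k$ as constants, as announced in \cref{rem:complexity}, each materialized ordering thus produces $O(1)$ pushes and $O(1)$ extra work, on top of the $O(\log n)$ spent inside the $\DS.\Insert$ that produced it.

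For the initialization phase (\cref{line:po_edges} through \cref{line:lock_closure_pre}), the algorithm ranges over the $n$ events and, for each, over the $k$ chains, performing one $\DS.\Insert$ of a program-order edge followed by one $\ObsClosure$ and one $\LockClosure$ per pair. This amounts to $O(n\cdot k)$ inserts, contributing $O(n\log n)$ time, and places $O(n)$ items on $\Queue$.

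The heart of the argument is the main worklist loop (\cref{line:while}--\cref{line:lock_closure_comp}). The key step is to bound the number of \emph{successful} insertions, i.e. executions of \cref{line:ds_insert_edge}, by $n^2$: the guard on \cref{line:edge_exists} ensures a $\DS.\Insert(\ov{\Event}_1,\ov{\Event}_2)$ is performed only when $\ov{\Event}_1\not\Path\ov{\Event}_2$ held before, so each such insertion adds at least one new pair to the transitive closure of the maintained DAG; that transitive closure only grows, because \cref{line:cycle_formed} aborts before any cycle-creating edge is ever inserted, and it has at most $\binom{n}{2}$ pairs. Hence there are at most $n^2$ successful insertions, each contributing $O(\log n)$ for the insert and $O(1)$ materialized orderings, each of the latter generating $O(1)$ pushes and $O(1)$ extra work by the accounting in the first paragraph. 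Consequently the main loop generates $O(n^2)$ pushes, which over the whole run — together with the $O(n)$ pushes from initialization — yields $O(n^2)$ executions of $\Queue.\Pop()$ on \cref{line:q_pop}; each iteration of the while-loop does $O(1)$ queue bookkeeping and at most three $\DS$-operations (the two $\Query$ calls on \cref{line:cycle_formed,line:edge_exists} and possibly one $\Insert$), i.e. $O(\log n)$ time. Summing, the main loop runs in $O(n^2\log n)$ time, and with the $O(n)+O(n\log n)$ from initialization this gives the claimed $O(n^2\log n)$ bound.

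The step I expect to be the crux is this potential-style bound on successful insertions, because $\Queue$ may contain the same candidate edge many times (a $\Queue.\Push$ never consults $\DS$), so the number of $\Queue.\Pop$ operations is \emph{not} a priori $O(n^2)$; the bound materializes only after observing that every push is itself produced by one of the $O(n^2)$ materialized orderings that accompany genuinely new transitive-closure pairs. A secondary point to verify carefully is that the ``$O(k^2)$ materialized orderings per insert'' figure is uniform and independent of the trace, so that the constant-factor parameters $|\Globals|$, $|\Locks|$ and $k$ can be cleanly suppressed in line with \cref{rem:complexity}.
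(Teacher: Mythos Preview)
Your proposal is correct and follows essentially the same approach as the paper: both arguments bound the total number of worklist items by $O(n^2)$ via the observation that each $\Queue.\Push$ is triggered by (one of the $O(k^2)=O(1)$ orderings materialized by) a $\DS.\Insert$ that adds a genuinely new reachability pair, of which there are at most $n^2$. Your write-up is in fact more explicit than the paper's about the potential-style accounting and the subtlety that duplicate pushes are not ruled out syntactically; the paper compresses this into the single sentence ``for every edge $z$ ever held in $\Queue$ there exists an edge $z'$ inserted for the first time in $\DS$'' without spelling out that the correspondence is $O(1)$-to-one.
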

\begin{proof}
First, by \cref{lem:data_structure}, the initialization of $\DS$ requires $O(n)$ time.
Observe that every invocation to $\ObsClosure$ and $\LockClosure$ requires $O(k\cdot |\Globals|)=O(1)$ time.
Hence the initialization of $\Closure$ in \cref{line:po_edges} requires $O(n\cdot \log n)$ time.
We now turn our attention to the main computation in \cref{line:while}, 
and consider an edge $(\ov{\Event}_1, \ov{\Event}_2)$ extracted in \cref{line:q_pop}.
By \cref{lem:data_structure}, every $\DS.\Query$ requires $O(\log n)$ time,
and the loop in \cref{line:inner_loop} will iterate over $O(1)$ edges $(\Event_1, \Event_2)$.
Since every invocation to $\ObsClosure$ and $\LockClosure$ requires $O(1)$ time, we conclude that the cost of every edge $(\ov{\Event}_1, \ov{\Event}_2)$ is $O(\log n)$.
Finally, observe that for every edge $z=(\ov{\Event}_1, \ov{\Event}_2)$ every held in $\Queue$ there exists an edge $z'=(\ov{\Event}'_1, \ov{\Event}'_2)$ which was inserted for the first time in $\DS$. Since there are are $O(n^2)$ such edges $z'$, we have that $\Queue$ will hold $O(n^2)$ elements in total.
Hence the total running time of $\Closure$ is $O(n^2\cdot \log n)$.
\end{proof}

\smallskip
\closuredynamiccomplexity*
\begin{proof}[Proof (Sketch)]
Similarly to \cref{lem:closure_complexity}, the time required for handling $\Sigma$ is proportional to the size of $\Sigma$ times $O(\log n)$ for querying whether each edge of $\Sigma$ is already present in $\DS$,
plus $O(\log)$ for every new edge inserted in $\DS$.
Since there can be $O(n^2)$ new edges inserted, the time bound is $O(n^2\cdot \log n + |\Sigma|\cdot\log n)$ for the whole sequence $\Sigma$.
\end{proof}

\subsection{Proofs of Section~{\ref{sec:decision}}}\label{SEC:MISSING_PROOFS_DECISION}

\smallskip
\begin{restatable}{lemma}{racedecisionsoundness}\label{lem:racedecision_soundness}
If $\RaceDecision$ returns $\True$ then $(\Event_1, \Event_2)$ is a predictable race of $\Trace$.
\end{restatable}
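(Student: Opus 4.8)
\textbf{Proof proposal for \cref{lem:racedecision_soundness}.}
The plan is to show that whenever $\RaceDecision$ returns $\True$, the partial order $Q$ it has built at that point satisfies all the hypotheses of \cref{them:maxmin}, so that $\MinMaxAlgo(Q)$ is a correct reordering of $\Trace$; appending $\Event_1$ and then $\Event_2$ to this reordering will then be a trace exhibiting the race $(\Event_1,\Event_2)$, which is exactly the definition of a predictable race. First I would trace through the control flow: if $\RaceDecision$ returns $\True$, then it did not return at \cref{line:false1}, \cref{line:false3}, or \cref{line:false2}, so $X = \RPast_{\Trace}(\Event_1,\Proc{\Event_2})\cup\RPast_{\Trace}(\Event_2,\Proc{\Event_1})$ is feasible and disjoint from $\{\Event_1,\Event_2\}$, the closure $Q$ of $P=\RespectPO{\Trace}{X}$ exists (by \cref{them:closure}), and all the $\InsertAndClose$ operations in the while loop succeeded, so by \cref{lem:dynamic_closure} the final $Q$ is still a closed partial order refining $P$.

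Next I would verify the partitioning hypothesis of \cref{them:maxmin}. Fix the nondeterministic choice $i\in[2]$ taken in \cref{line:nondetermistic_choice}, and set $X_1 = X\cap\SysEvents_{\Proc{\Event_i}}$ and $X_2 = X\setminus X_1$. Since $X_1$ consists of events of a single process and $Q\Refines\TO\Project X$ (because $P=\RespectPO{\Trace}{X}$ refines $\TO\Project X$ and $Q\Refines P$), $X_1$ is totally ordered by $Q$, so $\Width(Q\Project X_1)=1$. For $X_2$, I would argue that $Q\Project X_2$ is an M-trace: the while loop in \cref{line:mwidth1} terminates only when there is no pair $\ov{\Event}_1,\ov{\Event}_2\in X\setminus\SysEvents_{\Proc{\Event_i}} = X_2$ that is conflicting and unordered by $Q$; hence upon exit every conflicting pair in $X_2$ is ordered, which is precisely the M-trace condition. (Termination of the loop is not an issue for soundness, but it follows because each iteration strictly decreases the number of unordered conflicting pairs and closure only adds orderings.)

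With these facts in hand, \cref{them:maxmin} applies to the final $Q$ and its partition $(X_1,X_2)$, yielding that $\Trace^* := \MinMaxAlgo(Q)$ is a correct reordering of $\Trace$ (i.e., $\Trace^*\Project\Process_j$ is a prefix of $\Trace\Project\Process_j$ for every $j$, and $\Observation_{\Trace^*}\subseteq\Observation_{\Trace}$). It remains to check that $\Trace^*\circ\Event_1\circ\Event_2$ is a trace exhibiting the race $(\Event_1,\Event_2)$. For this I would use that $X = \RPast_{\Trace}(\Event_1,\Proc{\Event_2})\cup\RPast_{\Trace}(\Event_2,\Proc{\Event_1})$: condition \ref{item:cp1} of relative causal cones guarantees that $X$ contains exactly all $\TO$-predecessors of $\Event_1$ and $\Event_2$, so after $\Trace^*$ both $\Event_1$ and $\Event_2$ are enabled (their program-order predecessors are all in $\Events{\Trace^*}=X$); the observation-feasibility of $X$ plus $\Observation_{\Trace^*}\subseteq\Observation_{\Trace}$ ensures reads still see their intended writes; and lock-feasibility of $X$ (checked in \cref{line:test_lock_feasible}) together with the lock-closure of $Q$ ensures no two critical sections on the same lock overlap. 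Since $\Proc{\Event_1}\neq\Proc{\Event_2}$ (they conflict and cones are taken relative to the other process) and $\Confl{\Event_1}{\Event_2}$, placing them as the last two events makes $(\Event_1,\Event_2)$ a race in the sense of the definition. The main obstacle I anticipate is the last paragraph: carefully confirming that $\Event_1,\Event_2$ are enabled after $\Trace^*$ and that appending them preserves lock well-nestedness and the observation function — this requires unwinding the precise definitions of $\RPast$, of feasibility of $X$, and of the lock-closure condition, rather than any deep new idea.
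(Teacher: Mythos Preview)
Your proposal is correct and follows essentially the same approach as the paper's own proof: establish that $X$ is feasible and $Q$ is closed, partition $X$ into $X_1=X\cap\SysEvents_{\Proc{\Event_i}}$ and $X_2=X\setminus X_1$, verify the width and M-trace conditions so that \cref{them:maxmin} applies, and then argue from the definition of relative causal cones that $\Event_1,\Event_2$ are enabled after $\Trace^*$. Your write-up is in fact more detailed than the paper's (e.g., explicitly invoking \cref{lem:dynamic_closure} for closedness after the $\InsertAndClose$ steps, and spelling out why appending $\Event_1,\Event_2$ yields a valid trace), but the underlying argument is the same.
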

\begin{proof}
Observe that if $\RaceDecision$ returns $\True$ then at that point $X$ is a feasible set and by \cref{them:closure}, $Q$ is a closed partial order.
Additionally, due to the loop in \cref{line:mwidth1}, $X$ can be naturally partitioned into two sets $X_1$ and $X_2$ such that $\Width(Q\Project X_1)=1$ and $Q\Project X_2$ is an M-trace.
In particular, we have $X_1$ be the set of events of $\Process_i$ and $X_2=X\setminus X$.
By \cref{them:maxmin}, the sequence $\Trace^*$ returned by $\MinMaxAlgo$ on $Q$ is a correct reordering of $\Trace$.
Finally, by the definition of relative causal cones,
the events $\Event_1, \Event_2$ are enabled in their respective processes when $\Trace^*$ is executed.
\end{proof}

\smallskip
\begin{restatable}{lemma}{racedecisioncompleteness}\label{lem:racedecision_completeness}
Let $\Trace$ be a trace of a program with $k=2$ processes, and let $(\Event_1, \Event_2)$ be a predictable race of $\Trace$.
Then $\RaceDecision$ returns $\True$ for the pair $(\Event_1, \Event_2)$.
\end{restatable}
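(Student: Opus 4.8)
The plan is to show that when $k=2$ and $(\Event_1,\Event_2)$ is a genuine predictable race, every ordering that $\RaceDecision$ ever inserts into its partial order is \emph{forced}, i.e.\ it holds in \emph{every} correct reordering $\Trace'$ of $\Trace$ for which $\Trace' \circ \Event_1 \circ \Event_2$ witnesses the race; consequently no cycle can ever be created and the algorithm must reach \texttt{\Return\{\True\}}. Fix such a witness $\Trace^*=\Trace'\circ\Event_1\circ\Event_2$. First I would verify the early-exit tests on \cref{line:test_lock_feasible} cannot fire: since $\Trace'$ is a correct reordering whose projection onto each $\Process_i$ is a prefix of $\Trace\Project\Process_i$ and in which $\Event_1,\Event_2$ are enabled, the set $X=\RPast_{\Trace}(\Event_1,\Proc{\Event_2})\cup\RPast_{\Trace}(\Event_2,\Proc{\Event_1})$ is contained in $\Events{\Trace'}$ (this needs a small induction over the closure rules defining $\RPast$, using that $\Trace'$ agrees with $\Trace$ on observations and respects lock semantics), hence $\{\Event_1,\Event_2\}\cap X=\emptyset$ and $X$ is feasible (in particular lock-feasible, because $\Events{\Trace'}$ is lock-feasible and $X$ is a down-set of $\Trace'$ containing the right matching events).

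Next I would argue that $\RespectPO{\Trace}{X}$ and its closure $Q=\Closure(\Trace,\RespectPO{\Trace}{X},X)$ are both refined by the trace order $<_{\Trace'}$ restricted to $X$. For $\RespectPO{\Trace}{X}$ this is a case check against its three defining clauses: program order is preserved in $\Trace'$; every read in $X$ observes the same write in $\Trace'$ as in $\Trace$, so that write precedes the read in $\Trace'$; and for an open lock-acquire $\Acquire$ in $X$ whose match lies outside $X$, the corresponding critical section in $\Trace'$ stays open, forcing every conflicting release in $X$ before $\Acquire$ in $\Trace'$. Since $<_{\Trace'}\!\restr{X}$ is itself a partial order (a total order, in fact) satisfying all three conditions, and $\RespectPO{\Trace}{X}$ is the \emph{weakest} such, we get $<_{\Trace'}\!\restr{X}\,\Refines\,\RespectPO{\Trace}{X}$. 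For the closure step I would observe that $<_{\Trace'}\!\restr{X}$, being (the restriction of) a valid trace order, is trace-closed — each observation-closure and lock-closure condition of \cref{fig:closure_defs} is just a statement that holds in any valid trace — so by \cref{lem:closure_unique} (uniqueness of the smallest closed refinement) $Q\,\Refines\,<_{\Trace'}\!\restr{X}$; in particular $Q$ is acyclic, so the test on \cref{line:false3} does not fire.

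Finally, the crucial point for $k=2$: in \cref{line:mwidth1} every pair $\ov{\Event}_1,\ov{\Event}_2\in X\setminus\SysEvents_{\Proc{\Event_i}}$ lies in the \emph{single} remaining process $\Process_{3-i}$, hence is already ordered by $\TO\restr{X}\Refines Q$; therefore the \texttt{while}-loop body — including the only remaining return-$\False$ on \cref{line:false2} — is never executed. All exits to $\False$ have been ruled out, so $\RaceDecision$ returns $\True$. The main obstacle I anticipate is the second paragraph, namely carefully justifying $X\subseteq\Events{\Trace'}$ and that $<_{\Trace'}\!\restr{X}$ is trace-closed: one must track that the relative-cone construction (especially clause~\ref{item:cp4}, which only closes locks for \emph{non-focal} processes) still lands inside $\Trace'$, and that $\Trace'$'s agreement with $\Trace$ on observations propagates to all reads placed in the cones; the rest is a mechanical comparison of $\Trace'$ against the defining clauses of $\RespectPO{}{}$ and of closure.
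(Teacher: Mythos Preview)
Your plan is correct and follows essentially the same route as the paper's proof: fix a witness reordering, argue that its total order is a closed refinement of $\RespectPO{\Trace}{X}$ so that $\Closure$ cannot return $\bot$, and then observe that for $k=2$ the \texttt{while}-loop in \cref{line:mwidth1} is vacuous because all candidate pairs lie in a single process.

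Two small remarks. First, for $k=2$ you actually get equality $X=\Events{\Trace'}$, not merely containment: clause~\ref{item:cp4} of the cone definition never fires (every event belongs either to $\Proc{\Event}$ or to the relative process), so each cone is just the $\TO$-and-observation closure of $\{\Event':\Event'<_{\TO}\Event_i\}$; conversely $\Events{\Trace'}\cap\Process_j$ is exactly $\{\Event':\Event'<_{\TO}\Event_j\}$ since $\Event_j$ is enabled after $\Trace'$. The paper states this equality directly, which makes lock-feasibility of $X$ immediate and removes the need for your ``down-set of $\Trace'$'' argument. Second, you have a direction slip: since $Q$ is the \emph{weakest} closed refinement of $P$ and $<_{\Trace'}\!\restr{X}$ is another closed refinement, the correct conclusion is $<_{\Trace'}\!\restr{X}\Refines Q$ (in the paper's convention), not $Q\Refines{<_{\Trace'}\!\restr{X}}$; your earlier phrasing ``refined by the trace order'' already had it right.
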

\begin{proof}
First, note that since $(\Event_1, \Event_2)$ is a predictable race of $\Trace$, there exists a correct reordering $\Trace^*$ of $\Trace$
such that after $\Trace^*$ is executed, $\Event_1$ and $\Event_2$ are the enabled events in their respective processes.
It is easy to see that $\Events{\Trace^*}=\RPast_{\Trace}(\Event_1, \Proc{\Event_2})\cup \RPast_{\Trace}(\Event_2, \Proc{\Event_1})$, and thus
$\OpenAcquires_{\Trace}(\Events{\Trace^*})=\OpenAcquires_{\Trace}(X)$, for the set $X$ constructed in \cref{line:construct_x} of $\RaceDecision$.
Hence $X$ is feasible, and $\{\Event_1, \Event_2\}\cap X=\emptyset$.
Viewed as a partial order, $\Trace^*$ must respect $\Trace$, and as $\Events{\Trace^*}=X$, we have that $\Trace^*$ is a linearization of $P$ which is constructed in \cref{line:respect_po}.
Additionally, $\Trace^*$ must be closed, hence $P$ is feasible and $Q$ is a valid partial order in \cref{line:close}.
Since $k=2$, for every pair of events $\ov{\Event}_1, \ov{\Event}_2$ in the loop in \cref{line:mwidth1} of $\RaceDecision$ we have $\Proc{\Event_1}=\Proc{\Event_2}$ and thus $\Ordered{\Event_1}{Q}{\Event_2}$ and the loop inserts no new edges in $Q$.
Thus the algorithm returns $\True$.
\end{proof}

\decision*
\begin{proof}
\cref{lem:racedecision_soundness} shows that $\RaceDecision$ is sound and \cref{lem:racedecision_completeness} shows that $\RaceDecision$ is complete for $k=2$.
Now we turn our attention to complexity.
It is easy to see that computing $\RPast_{\Trace}(\Event_i, \Proc{\Event_{-i}})$ requires $O(n)$ time, and by \cref{lem:data_structure}, constructing $P$ in \cref{line:respect_po} using our data structure $\DS$ requires $O(n\cdot \log n)$ time.
By \cref{them:closure}, computing the closure $Q$ of $P$ in \cref{line:close} requires $O(n^2\cdot \log n)$ time.
The loop in \cref{line:mwidth1} can also be executed in $O(n^2\cdot \log n)$ time, since by \cref{lem:dynamic_closure} all $\InsertAndClose$ operations are handled in $O(n^2\cdot \log n)$ time in total.
If $\RaceDecision$ returns $\True$, then $\MinMaxAlgo$ produces a witness trace $\Trace^*$ that linearizes $Q$ in $O(n\cdot \log n)$ time.
\end{proof}

\subsection{Proofs of Section~{\ref{sec:function}}}\label{sec:missing_proofs_function}

\functioncorrectness*
\begin{proof}
Indeed, observe that $\Trace\Project X$ is a trace where there are no open lock-acquire events and $\Event_1, \Event_2$ are enabled in their respective processes.
Hence, $\Trace^*=\Trace\Project \left(X\cup \{\Event_1, \Event_2\}\right)$ is a correct reordering of $\Trace$ that exhibits the race $(\Event_1, \Event_2)$.
\end{proof}

\subsection{Proofs of Section~{\ref{sec:dag_reach}}}\label{sec:missing_proofs_dag_reach}

\datastructure*
\begin{proof}
We treat the correctness and complexity separately.

\noindent{\em Correctness.}
It is straightforward to establish that the data structure maintains the following invariant.
At the end of each $\Insert(\CNode{i, j}, \CNode{i', j'})$ operation of $\Sigma$, for every $i_1, i_2\in [k]$ and $j_1\in [n]$, 
\begin{compactenum}
\item\label{item:cor1} if $j_2=\Fenwick_{i_1}^{i_2}.\Query(j_1)$, then $\CNode{i_2, j_2}$ is the highest node of the $i_2$-th chain that can be reached from $\CNode{i_1, j_1}$, and
\item\label{item:cor2} if $j_2=\Fenwick_{i_1}^{i_2}.\ArgQuery(j_1)$, then $\CNode{i_2, j_2}$ is the lowest node of the $i_2$-th chain that can be reached from $\CNode{i_1, j_1}$.
\end{compactenum}

\noindent{\em Complexity.}
Initializing every Fenwick tree requires $O(n)$ time~\cite{Fenwick94}, and since we have $O(1)$ such Fenwick trees in total,
the initialization of $\DS$ requires $O(n)$ time.
A $\DS.\Query$ operation requires $O(\log n)$ time, which is determined by the $\DS.\Successor$ operation in \cref{line:successor}, 
which is implemented by a query operation in the respective Fenwick tree and thus requires $O(\log n)$ time~\cite{Fenwick94}.
We now turn our attention to the $\DS.\Insert$ operation.
Notice that this step performs $O(k^2)=O(1)$ update operations to Fenwick trees. Since each update operation requires $O(\log n)$ time~\cite{Fenwick94},
the total time spent in this operation is $O(\log n)$.
\end{proof}
\section{Incompleteness of $\RaceDecision$ for $k\geq 3$ processes}\label{SEC:APPENDIX_INCOMPLETENESS}

\begin{figure*}[!h]
\begin{subfigure}[b]{0.25\textwidth}
\centering
\footnotesize
\def\rownumber{}
\begin{tabular}[b]{@{\makebox[1.2em][r]{\rownumber\space}} | l | l | l|}
\normalsize{$\mathbf{\SeqTrace_1}$} & \normalsize{$\mathbf{\SeqTrace_2}$} & \normalsize{$\mathbf{\SeqTrace_3}$}
\gdef\rownumber{\stepcounter{magicrownumbers}\arabic{magicrownumbers}} \\
\hline
& & $\Write(x_1)$\\
$\Acquire(\ell)$ & & \\
$\Write(x_2)$ & & \\
$\Read(x_1)$ & & \\
$\mathbf{\Write(y)}$ & & \\
$\Release(\ell)$ & & \\
& & $\Write(x_2)$ \\
& & $\Write(x_3)$ \\
& & $\Write(x_4)$ \\
& $\Acquire(\ell)$ & \\
& $\Write(x_3)$ & \\
& $\Release(\ell)$ & \\
& $\Write(x_1)$ & \\
& $\Read(x_2)$ & \\
& $\Read(x_4)$ & \\
& $\mathbf{\Write(y)}$ & \\
\hline
\end{tabular}
\caption{Is $(\Event_4, \Event_{15})$ a race?}
\label{subfig:algo_fail_trace}
\end{subfigure}
\quad
\begin{subfigure}[b]{0.4\textwidth}
\centering
\small
\begin{tikzpicture}[thick,
pre/.style={<-,shorten >= 2pt, shorten <=2pt, very thick},
post/.style={->,shorten >= 2pt, shorten <=2pt,  very thick},
seqtrace/.style={->, line width=2},
und/.style={very thick, draw=gray},
event/.style={rectangle, minimum height=3mm, draw=black, fill=white, minimum width=4.5mm,   line width=1pt, inner sep=0, font={\small}},
virt/.style={circle,draw=black!50,fill=black!20, opacity=0}]

\newcommand{\xdisposition}{0}
\newcommand{\ydisposition}{0}
\newcommand{\xstep}{1.7}
\newcommand{\ystep}{0.8}
\newcommand{\ybias}{0.4}
\newcommand{\xbias}{0.8}
\newcommand{\xbiassmall}{0.6}

\node	[]		(t1a)	at	(\xdisposition + 0*\xstep, \ydisposition + 0*\ystep)	{\normalsize$\SeqTrace_1$};
\node	[]		(t1b)	at	(\xdisposition + 0*\xstep, \ydisposition + -7*\ystep)	{};
\node	[]		(t2a)	at	(\xdisposition + 1*\xstep, \ydisposition + 0*\ystep)	{\normalsize$\SeqTrace_2$};
\node	[]		(t2b)	at	(\xdisposition + 1*\xstep, \ydisposition + -7*\ystep)	{};
\node	[]		(t3a)	at	(\xdisposition + 2*\xstep, \ydisposition + 0*\ystep)	{\normalsize$\SeqTrace_3$};
\node	[]		(t3b)	at	(\xdisposition + 2*\xstep, \ydisposition + -7*\ystep)	{};

\draw[seqtrace] (t1a) to (t1b);
\draw[seqtrace] (t2a) to (t2b);
\draw[seqtrace] (t3a) to (t3b);

\node[event] (e11) at (\xdisposition + 0*\xstep, \ydisposition + -1*\ystep) {$\Event_{2}$};
\node[] (et11) at (\xdisposition + 0*\xstep - \xbias, \ydisposition + -1*\ystep) {$\Acquire(\ell)$};
\node[event] (e12) at (\xdisposition + 0*\xstep, \ydisposition + -2*\ystep) {$\Event_{3}$};
\node[] (et12) at (\xdisposition + 0*\xstep - \xbiassmall, \ydisposition + -2*\ystep) {$\ov{\Write}(x_2)$};
\node[event] (e13) at (\xdisposition + 0*\xstep, \ydisposition + -3*\ystep) {$\Event_{4}$};
\node[] (et13) at (\xdisposition + 0*\xstep - \xbias, \ydisposition + -3*\ystep) {$\Read(x_1)$};

\node[event] (e21) at (\xdisposition + 1*\xstep, \ydisposition + -1*\ystep) {$\Event_{10}$};
\node[] (et21) at (\xdisposition + 1*\xstep + \xbias, \ydisposition + -1*\ystep) {$\Acquire(\ell)$};
\node[event] (e22) at (\xdisposition + 1*\xstep, \ydisposition + -2*\ystep) {$\Event_{11}$};
\node[] (et22) at (\xdisposition + 1*\xstep + \xbias, \ydisposition + -2*\ystep) {$\Write(x_3)$};
\node[event] (e23) at (\xdisposition + 1*\xstep, \ydisposition + -3*\ystep) {$\Event_{12}$};
\node[] (et23) at (\xdisposition + 1*\xstep + \xbias, \ydisposition + -3*\ystep) {$\Release(\ell)$};
\node[event] (e24) at (\xdisposition + 1*\xstep, \ydisposition + -4*\ystep) {$\Event_{13}$};
\node[] (et24) at (\xdisposition + 1*\xstep + \xbias, \ydisposition + -4*\ystep) {$\ov{\Write}(x_1)$};
\node[event] (e25) at (\xdisposition + 1*\xstep, \ydisposition + -5*\ystep) {$\Event_{14}$};
\node[] (et25) at (\xdisposition + 1*\xstep + \xbias, \ydisposition + -5*\ystep) {$\Read(x_2)$};
\node[event] (e26) at (\xdisposition + 1*\xstep, \ydisposition + -6*\ystep) {$\Event_{15}$};
\node[] (et26) at (\xdisposition + 1*\xstep + \xbias, \ydisposition + -6*\ystep) {$\Read(x_4)$};

\node[event] (e31) at (\xdisposition + 2*\xstep, \ydisposition + -1*\ystep) {$\Event_{1}$};
\node[] (et31) at (\xdisposition + 2*\xstep + \xbias, \ydisposition + -1*\ystep) {$\Write(x_1)$};
\node[event] (e32) at (\xdisposition + 2*\xstep, \ydisposition + -2*\ystep) {$\Event_{7}$};
\node[] (et32) at (\xdisposition + 2*\xstep + \xbias, \ydisposition + -2*\ystep) {$\Write(x_2)$};
\node[event] (e33) at (\xdisposition + 2*\xstep, \ydisposition + -3*\ystep) {$\Event_{8}$};
\node[] (et33) at (\xdisposition + 2*\xstep + \xbias, \ydisposition + -3*\ystep) {$\Write(x_3)$};
\node[event] (e34) at (\xdisposition + 2*\xstep, \ydisposition + -4*\ystep) {$\Event_{9}$};
\node[] (et34) at (\xdisposition + 2*\xstep + \xbias, \ydisposition + -4*\ystep) {$\Write(x_4)$};

\draw[post] (e23) to (e11);
\draw[post, bend left=22.5] (e34) to (e26);
\draw[post, bend left=0, dotted, draw=\darkgreen] (e33) to (e22);
\draw[post, bend left=0, dashed, draw=\darkred] (e25) to (e12);
\draw[post, bend right=10, dashed, draw=\darkred] (e24) to (e31);

\end{tikzpicture}
\caption{The partial order $P$ and its closure $Q$.}
\label{subfig:algo_fail_po}
\end{subfigure}
\quad
\begin{subfigure}[b]{0.25\textwidth}
\centering
\footnotesize
\def\rownumber{}
\begin{tabular}[b]{@{\makebox[1.2em][r]{\rownumber\space}} | l | l | l|}
\normalsize{$\mathbf{\SeqTrace_1}$} & \normalsize{$\mathbf{\SeqTrace_2}$} & \normalsize{$\mathbf{\SeqTrace_3}$}
\gdef\rownumber{\stepcounter{magicrownumbers}\arabic{magicrownumbers}} \\
\hline
& $\Acquire(\ell)$ & \\
& $\Write(x_3)$ & \\
& $\Release(\ell)$ & \\
& $\Write(x_1)$ & \\
& & $\Write(x_1)$ \\
& & $\Write(x_2)$ \\
& & $\Write(x_3)$ \\
& & $\Write(x_3)$ \\
& & $\Read(x_2)$ \\
& & $\Read(x_4)$ \\
$\Acquire(\ell)$ & & \\
$\Write(x_2)$ & & \\
$\Read(x_1)$ & & \\
$\mathbf{\Write(y)}$ & & \\
& $\mathbf{\Write(y)}$ & \\
\hline
\end{tabular}
\caption{The witness trace.}
\label{subfig:algo_fail_witness}
\end{subfigure}
\caption{
(\protect\subref{subfig:algo_fail_trace}) The input trace.
(\protect\subref{subfig:algo_fail_po}) The partial order $P$ (solid edges) and its closure $Q$ (solid and dashed edges).
(\protect\subref{subfig:algo_fail_witness}) A witness trace for the race $(\Event_4, \Event_{15})$.
}
\label{fig:algo_fail}
\end{figure*}

In this section we provide a small example of an input trace for $k\geq 3$ processes which has a predictable race that is not detected by $\RaceDecision$.
Consider the input trace $\Trace$ given in \cref{subfig:algo_fail_trace}, where the task is to decide whether $(\Event_4, \Event_{16})$ is a predictable race of $\Trace$.
To make the notation somewhat simple, given a variable $x$, if $x$ is not read, we denote every write event to $x$ by $\Write(x)$.
If $x$ is read, we denote by $\Read(x)$ the unique read event to $x$ by $\Write(x)$ the observation $\Observation_{\Trace}(\Read(x))$, and by $\ov{\Write}(x)$ any other write event to $x$.

We now outline the steps of $\RaceDecision$ on input the potential race $(\Event_4, \Event_{16})$.
Observe that the set $X$ constructed in \cref{line:construct_x} of the algorithm contains all events of $\Trace$, since $\Write(x_4)$ is read by $\Read(x_4)$ which belongs to $\SeqTrace_2$ and thus $\Write(x_4)$ is in the causal past cone of $\Event_{16}$.
Initially, the algorithm constructs a partial order shown in \cref{subfig:algo_fail_po} in solid edges.
Observe that this partial order is closed, hence the algorithm proceeds to make a nondeterministic choice for $i\in[2]$ in \cref{line:nondetermistic_choice}.
We argue that for $i=2$, the algorithm reports that $(\Event_4, \Event_{16})$ is not a predictable race of $\Trace$.
Indeed, in this case the algorithm will execute $\InsertAndClose(\Event_8 \to \Event_{11})$ in \cref{line:insert_and_close}, since $\Event_8<_{\Trace}\Event_{11}$.
This inserts the dotted edge in the partial order of \cref{subfig:algo_fail_po}.
Observe that this edge imposes the ordering $\Write(x_2)\to \ov{\Write}(x_2)$, hence by the rules of observation closure, the algorithm inserts the edge $\Read(x_2)\to \ov{\Write}(x_2)$, shown in dashed in \cref{subfig:algo_fail_po}.
However, this edge imposes the ordering $\ov{\Write}(x_1)\to \Read(x_1)$, hence by the rules of the observation closure, the algorithm inserts the edge 
$\ov{\Write}(x_2)\to \Write(x_1)$, shown in dashed in \cref{subfig:algo_fail_po}.
Observe that this edge creates a cycle in the partial order, hence for $i=2$, the algorithm reports that $(\Event_4, \Event_{16})$ is not a predictable race of $\Trace$.

On the other hand, \cref{subfig:algo_fail_witness} shows a correct reordering of $\Trace$ that exposes the race.
As a final remark, we note that the nondeterministic choice for $i\in [2]$ in \cref{line:nondetermistic_choice} of $\RaceDecision$ will also try $i=1$.
In this case, $\RaceDecision$ will detect the race and produce the witness trace shown in \cref{subfig:algo_fail_witness}.
It is not hard to extend this example so that the algorithm misses the race also for $i=1$.

\end{document}